\tikzset{
	>=stealth',
	true/.style={
		rectangle,
		draw=black, very thick,
		text width=6.5em,
		minimum height=2em,
		text centered,
		fill=gray, opacity = 0.5},
	punkt/.style={
		rectangle,
		rounded corners,
		draw=black, very thick,
		text width=6.5em,
		minimum height=2em,
		text centered},
	est/.style={
		circle,
		draw=black, very thick,
		text centered},
	shade/.style={
		circle,
		draw=black, very thick, fill=gray!50,
		text centered},
	weight/.style={
		circle,
		draw=black, very thick,
		text width=6.5em,
		minimum height=2em,
		text centered},
	pil/.style={
		->,
		thick,
		shorten <=2pt,
		shorten >=2pt,},
	double/.style={
		<->,
		thick,
		shorten <=2pt,
		shorten >=2pt,},
	dash/.style={
		dashed,
		thick,
		shorten <=2pt,
		shorten >=2pt,},
	dashdouble/.style={
		<->,Co
		dashed,
		thick,
		shorten <=2pt,
		shorten >=2pt,}
}
\newtheorem{assumption}{Assumption}
\newtheorem{theorem}{Theorem}
\newtheorem{proposition}{Proposition}
\newtheorem{definition}{Definition}
\newtheorem{lemma}{Lemma}
\newtheorem{remark}{Remark}
\newtheorem{example}{Example}
\newtheorem{interpretation}{Interpretation}
\newtheorem{claim}{Claim}
\providecommand{\customgenericname}{}
\newcommand{\newcustomtheorem}[2]{%
	\newenvironment{#1}[1]
	{%
		\renewcommand\customgenericname{#2}%
		\renewcommand\theinnercustomgeneric{##1}%
		\innercustomgeneric
	}
	{\endinnercustomgeneric}
}
\newcommand*{\nindep}{%
	\mathbin{
		\mathpalette{\@indep}{\not}
	}%
}
\newcommand*{\@indep}[2]{%
	\sbox0{$#1\perp\m@th$}
	\sbox2{$#1=$}
	\sbox4{$#1\vcenter{}$}
	\rlap{\copy0}
	\dimen@=\dimexpr\ht2-\ht4-.2pt\relax
	\kern\dimen@
	{#2}%
	\kern\dimen@
	\copy0 
} 
\definecolor{mygreen}{RGB}{144,241,47}
\newcommand{\pr}{\mathrm{pr}} 
\newcommand{\mPn}{\mathbb{P}_n}
\newcommand{\expect}{\mathbb{E}}
\newcommand{\splavg}{\mathbb{P}_n}
\newcommand{\var}{\mathrm{var}}
\newcommand{\mI}{\mathcal{I}}
\newcommand{\mR}{\mathbb{R}}
\newcommand{\Y}{Y}
\newcommand{\y}{\upsilon}
\newcommand*{\FM}{%
	{\mathrm{E}\mkern-9mu\raisebox{0.26ex}{$\circ$}}
}
\renewcommand{\mid}{\,|\,}
\newcommand{\ws}{{\mathcal{W}}_2(\mathcal{I})}
\def\rlog{\mathcal{L}}
\def\mp{\lambda}
\def\wwlog{\mathcal{L}}
\def\manifold{\mathcal{M}}
\def\hilbert{\mathcal{H}}
\def\real{\mathbb{R}}
\def\define{:=}
\def\pt{\tau}
\def\eY{\widehat{\Y}}
\def\op{o_P}
\def\Op{O_P}
\def\tdomain{\mathcal{I}}
\def\ldomain{\mathcal{J}}
\def\xdomain{\mathcal{X}}
\def\diffop{\mathrm{d}}
\def\idf{\mathrm{id}}
\def\ACE{\mathrm{ACE}}
\def\QTE{\mathrm{QTE}}
\newcommand{\ind}{\perp \!\!\! \perp }
\newcommand{\vertiii}[1]{{\left\vert\kern-0.25ex\left\vert\kern-0.25ex\left\vert #1 
		\right\vert\kern-0.25ex\right\vert\kern-0.25ex\right\vert}}
\newcommand*{\addFileDependency}[1]{
	\typeout{(#1)}
	\@addtofilelist{#1}
	\IfFileExists{#1}{}{\typeout{No file #1.}}
}
\newcommand*{\myexternaldocument}[1]{%
	\externaldocument{#1}%
	\addFileDependency{#1.tex}%
	\addFileDependency{#1.aux}%
}
\author[]{Zhenhua Lin$^{1}$}
\author[]{Dehan Kong$^{2}$}
\author[]{Linbo Wang$^{2}$} 
\affil[]{$^{1}$ National University of Singapore, Singapore\\ $^{2}$ University of Toronto, Toronto, Ontario, Canada}
\begin{document}
	
	\title{\textbf{\Large{Causal Inference on Distribution Functions}}}
	

	\maketitle
	
	\begin{abstract}
		
		Understanding causal relationships is one of the most important goals of modern science. So far, the causal inference literature has focused almost exclusively on outcomes coming from the Euclidean space $\mathbb{R}^p$. However, it is increasingly common that complex datasets are best summarized  as data points in non-linear spaces. In this paper, we present a novel framework of causal effects for outcomes from the Wasserstein space of cumulative distribution functions, which in contrast to the Euclidean space, is non-linear. We develop doubly robust estimators and associated asymptotic theory for these causal effects. As an illustration, we use our framework to quantify the causal effect of marriage on physical activity patterns using wearable device data collected through the National Health and Nutrition Examination Survey. 
		
	\end{abstract}
	
	\noindent%
	{\it Keywords:}   Double robustness; Wasserstein space; Wearable device.
	\vfill
	\newpage
	
	\section{Introduction}
	\label{sec:introduction}
	
	Causal inference has received increasing attention in contemporary  data analysis. So far, researchers in causal inference have engaged almost exclusively in studying causal effects on objects from a linear space, most commonly the Euclidean space $\mathbb{R}^p.$  On the other hand, in many modern applications,
the observed data either naturally emerge or may be summarized as distribution functions. 
Often in these applications, the interest lies in the causal effect on the distributions themselves, rather than a summary measure such as the mean or the quantiles.  Here we detail an example from the study of physical activities; see the Supplementary Material for  additional examples on cellular differentiation and metagenomics.

	\begin{example}[Physical Activities]\label{ex:pa} 
Behavioral scientists are often interested in evaluating the effects of potential risk factors, such as marriage, on physical activity patterns \citep[e.g.][]{king1998effects}. 
The physical activity patterns are often recorded over a certain monitoring period. For example, in the National Health and Nutrition Examination Survey 2005--2006,
physical activity intensity, ranging from 0 to 32767 counts per minute, was recorded consecutively for  7 days by a wearable device for subjects at least six years old. 
The trajectory of activity intensity
is not directly comparable across different subjects as different individuals might have different circadian rhythms. Instead,  the distribution of activity intensity is invariant to circadian rhythms and hence can be compared between groups of individuals \citep{Chang2020}. 
	\end{example}


On the surface, since the distribution functions belong to $L^2$, a linear space endowed with a Euclidean distance, one may directly extend the classical potential outcome framework \citep{neyman1923application,rubin1974estimating} in causal inference using Euclidean averages. For example, the mean potential distribution functions may be defined as the expectation of the random potential distribution function, and the causal contrast among different interventions may be defined as the Euclidean distance among the mean potential distribution functions. However, there has been a growing recognition among the statistics and data science community that in many applications, the structure of data summarized by distribution functions may be best captured by the use of non-Euclidean distances \citep[e.g.][]{del1999tests,courty2016optimal,arjovsky2017wasserstein,ho2017multilevel,bernton2019approximate,verdinelli2019hybrid,panaretos2019statistical}. One of the most common choices of distance is the so-called Wasserstein distance based on the geometry of optimal transport. When equipped with such a distance, the space of probabilities measures on a real interval $\mathcal{I}$ is referred to as the Wasserstein space, and the average of distribution functions under the Wasserstein distance is known as the Wasserstein barycentre of these distributions.

Significant advances have been made by the emerging field of statistical optimal transport studying the Wasserstein space. For instance, \cite{Agueh2011,Bigot2012,Kim2017} introduced the notion of Wasserstein barycentre of a random distribution and studied its existence, uniqueness and characteristics. The concept of Wasserstein barycentre, defined in \eqref{eq:FM}, is a generalization of the mean of random variables/vectors to random distributions. Moreover, \cite{Bigot2017} generalized the technique of principal component analysis to data sampled from a Wasserstein space, while \cite{Petersen2016, chen2021wasserstein} focused on regression models for such data. Recently, \cite{zhang2020wasserstein,Zhu2021} investigated distribution-valued time series and generalized the concept of autoregressive models to Wasserstein spaces, while \cite{Zhou2021} developed a framework for canonical correlation analysis of random distributions. For more related works on statistical data analysis in a Wasserstein space, we refer readers to the survey paper by \cite{bigot2020statistical} and references therein.


The Wasserstein barycentre and distance have several features that make them particularly appealing for defining causal effects on distribution functions. 
First, the Wasserstein barycentre reduces to the usual Euclidean mean in the degenerate case where the distribution functions take point mass at real values. Specifically, if $\delta_t$ denotes the Dirac delta function with point mass at $t$, then the Wasserstein barycentre of $\delta_{t_i}, i=1,\ldots, n$ is $\delta_{\bar{t}},$ with $\bar{t} = \sum\limits_{i=1}^n t_i/n.$ In contrast, the Euclidean average, defined as the average of the cumulative distribution functions of $\delta_{t_1},\ldots,\delta_{t_n}$, corresponds to a uniform distribution on $t_i, \ldots, t_n$. Consequently, in the degenerate case where the random distribution function takes point mass at a random real value, their expectation does not correspond to the point mass function at the expectation of the random real value.

Second, compared to the Euclidean distance and other distances such as the Hellinger distance, the Wasserstein barycentre performs exceptionally well in capturing the structure of random distributions \citep[e.g.][]{cuturi2014fast}.
In Figure \ref{fig:wasserstein}, we provide a graphical illustration. The  distributions in Figure \ref{fig:wasserstein}(a) are unimodal distributions;  typical distributions of this kind are adult age-at-death distributions \citep{chen2021wasserstein}. One can see from Figure \ref{fig:wasserstein}(b) and (c) that the Wasserstein barycentre preserves unimodality while the Euclidean average does not.



Third,
the Wasserstein distance has an intuitive interpretation of the amount of ``work'' required to transform one distribution to another \citep{sommerfeld2018inference}. Moreover, it comes with a map that shows how to move from one random object to another, and allows one to  create a path of distributions that interpolates between Wasserstein barycentres, while  preserving the structural information in the random objects. This is particularly useful when comparing two Wasserstein barycentres each representing potential distribution functions under a particular intervention, as it not only contains information on how much they differ, but also on how one of these barycentres can be moved to the other; see Interpretation \ref{interpretation:transport} in Section \ref{sec:W} for more details.  In fact, as the path implied by the Wasserstein distance to move distributions involves the ``least effort,'' it is the natural path taken by biological systems to move from one state to another \citep[e.g.][]{Schiebinger2019}. 

Lastly, in the Wasserstein space of distributions on an interval of the real line, which is the focus of this paper, the optimal way to transfer one distribution to another is to move between their corresponding quantiles. In this case, 
the causal effect map defined using optimal transport (see Definition \ref{definition:causal}) can be directly interpreted as the difference in quantiles of the Wasserstein barycentres of different potential outcome distributions; see Interpretation  \ref{interpretaion:quantiles} in Section \ref{sec:W} for more details.

	\begin{figure*}[t!]
		\begin{center}
			\begin{tikzpicture}[scale=1.2, every node/.style={scale=1.2}]
				\newcommand\x{0.8}
				\newcommand\dx{5}
				
				\node at (-0.3,0) {\includegraphics[scale=0.1]{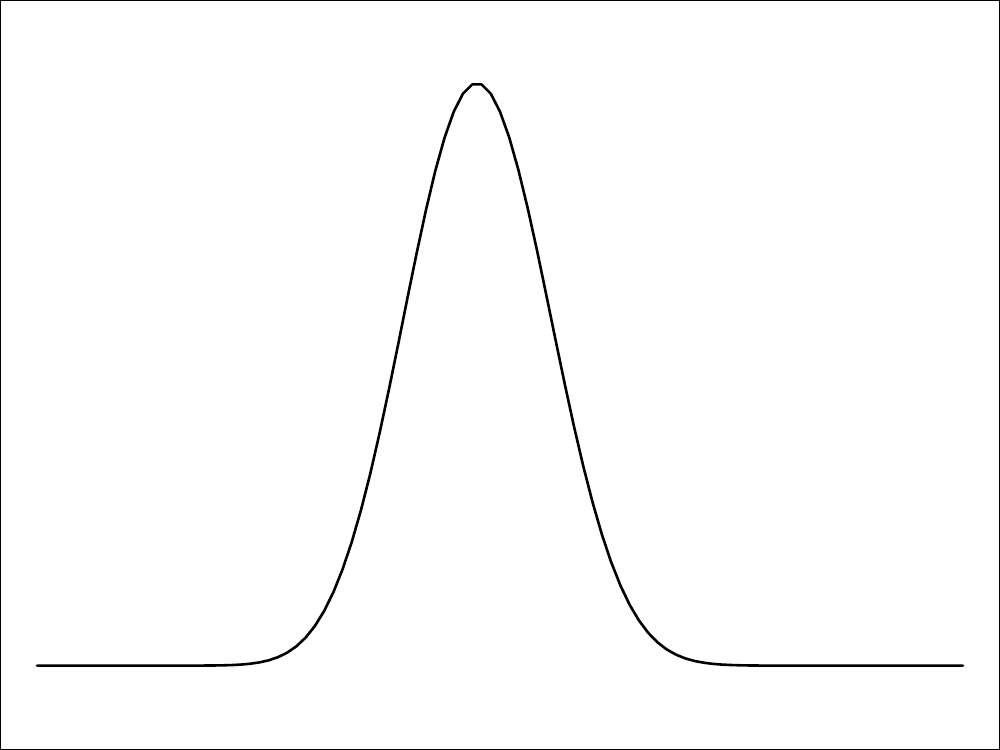}};
				\node at (3.3,0) {\includegraphics[scale=0.1]{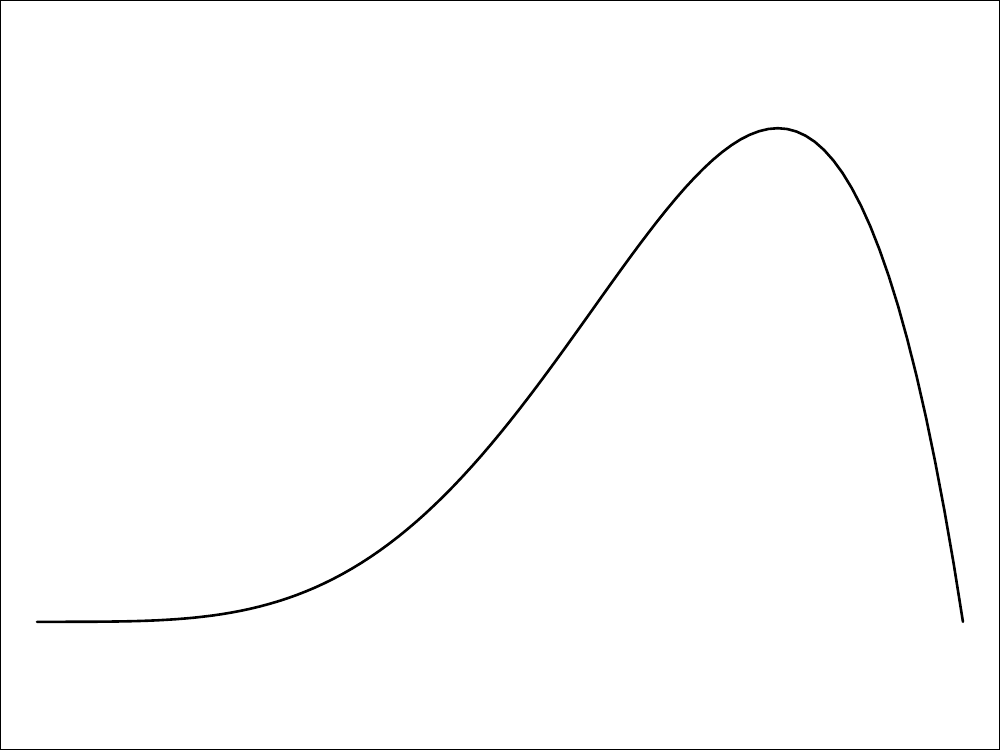}};
				\node at (0.6,1.4) {\includegraphics[scale=0.1]{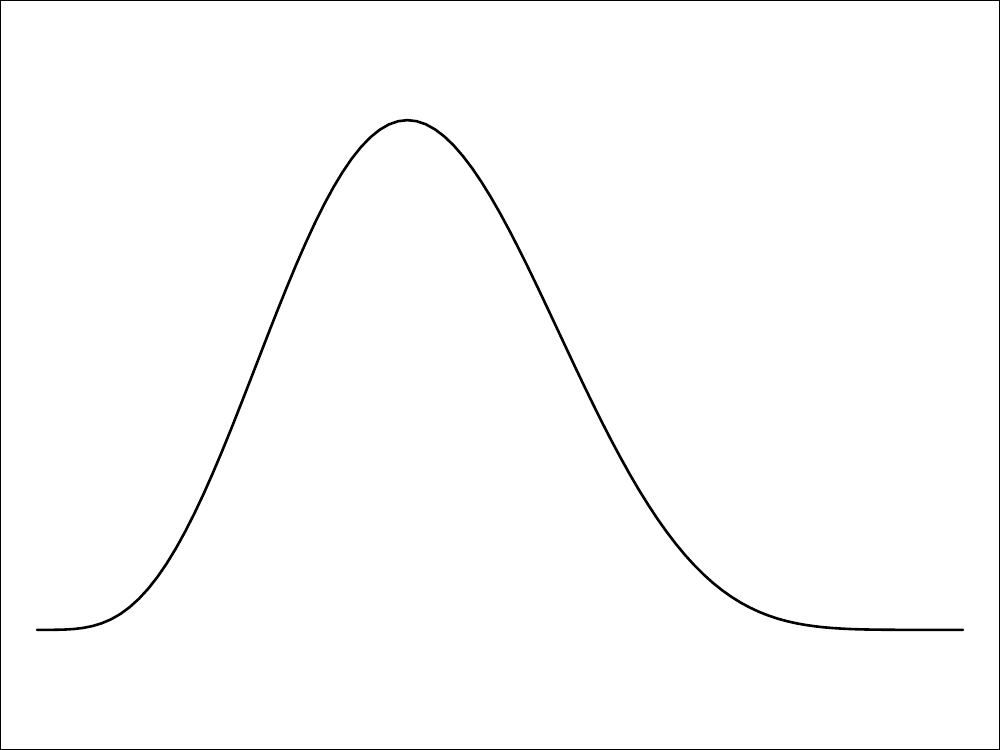}};
				\node at (2.4,1.4) {\includegraphics[scale=0.1]{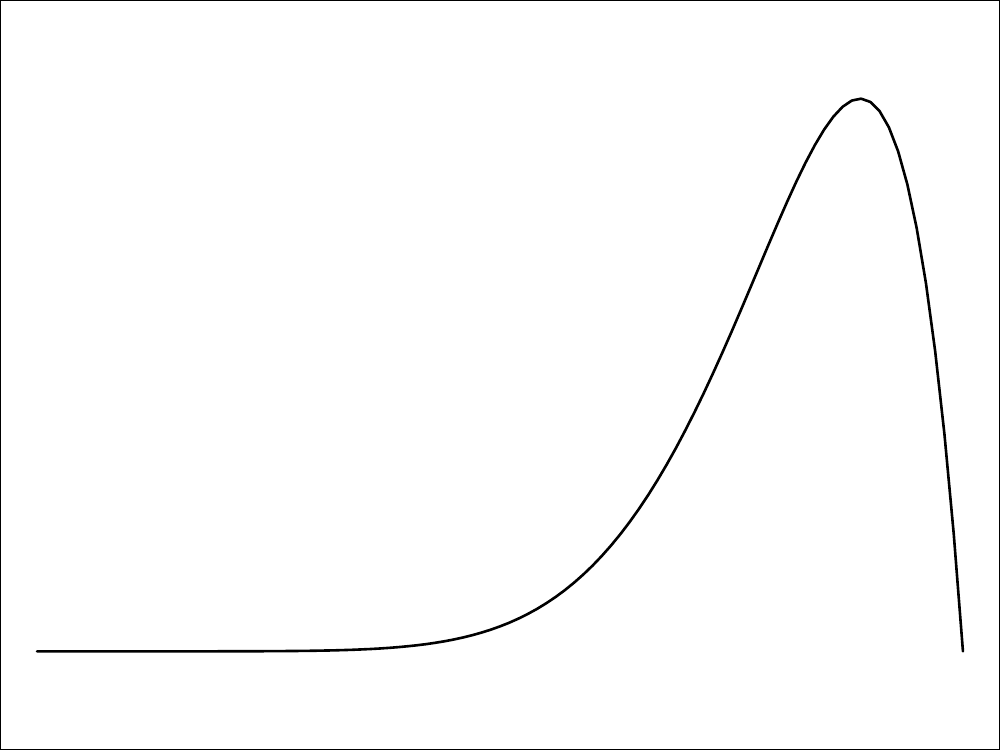}};
				\node at (0.6,-1.4) {\includegraphics[scale=0.1]{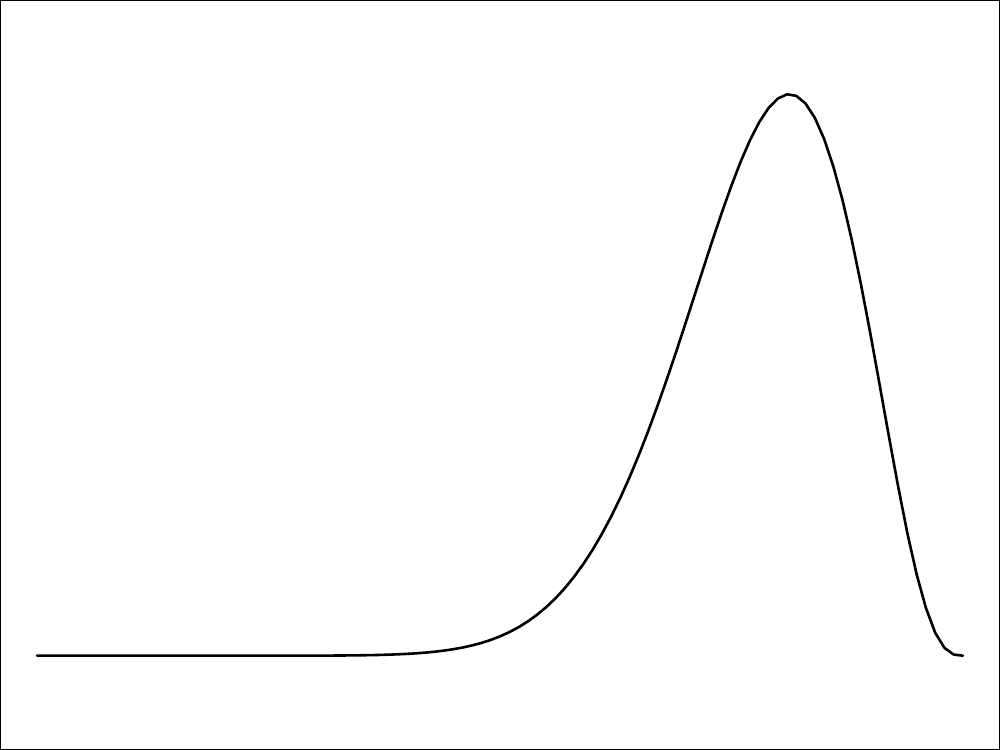}};
				\node at (2.4,-1.4) {\includegraphics[scale=0.1]{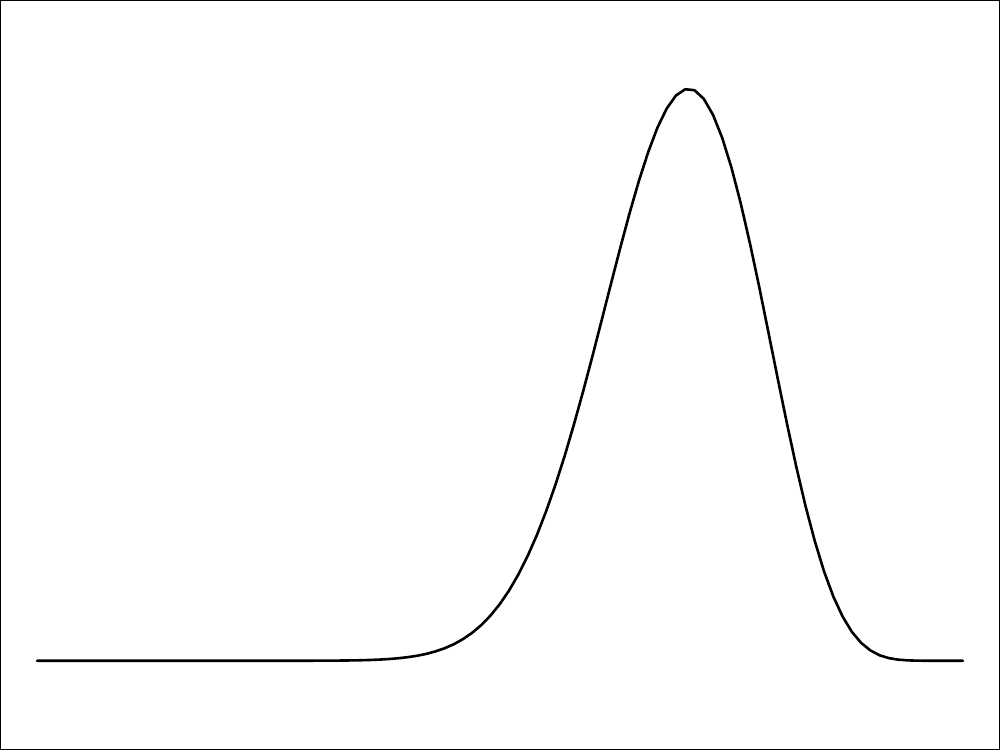}};
				
				\node at (1.5,0) {?};
				
				\draw[-latex,gray, thick] (0.4,0) -- (1.2,0); 
				\draw[-latex,gray, thick] (2.7,0) -- (1.8,0); 
				\draw[-latex,gray, thick] (1.1,0.8) -- (1.3,0.4); 
				\draw[-latex,gray, thick] (1.9,0.8) -- (1.7,0.4); 
				\draw[-latex,gray, thick] (1.9, -0.8) -- (1.7,-0.4);
				\draw[-latex,gray, thick] (1.1, -0.8) -- (1.3, -0.4);

				\node at (6.5,0) {\includegraphics[scale=0.3]{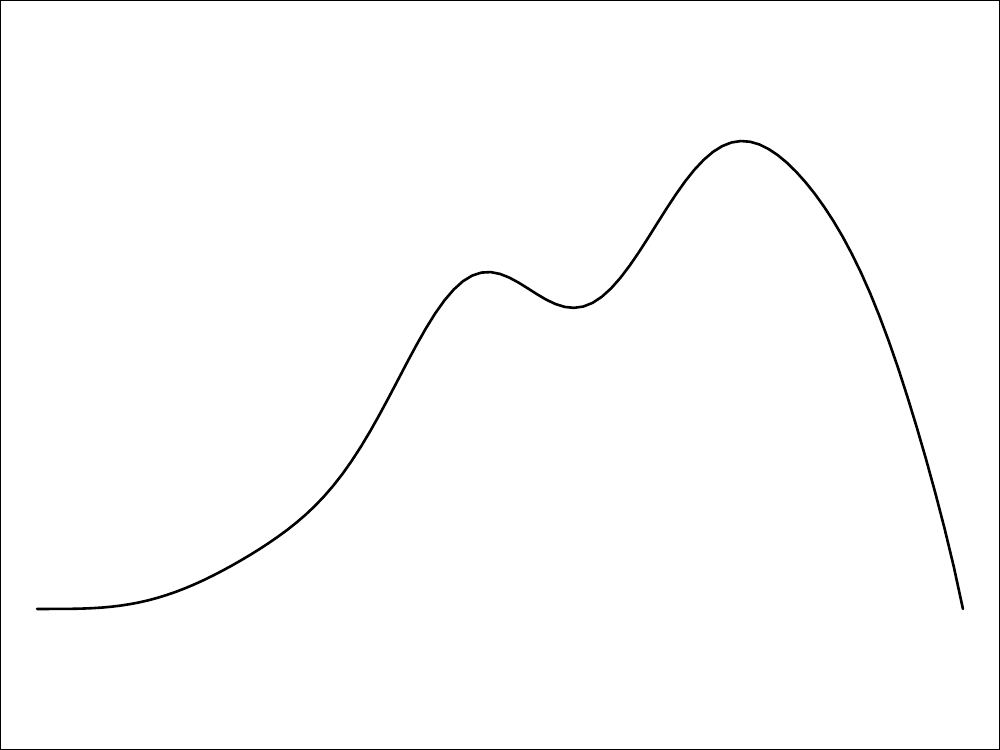}};
				\node at (11,0) {\includegraphics[scale=0.3]{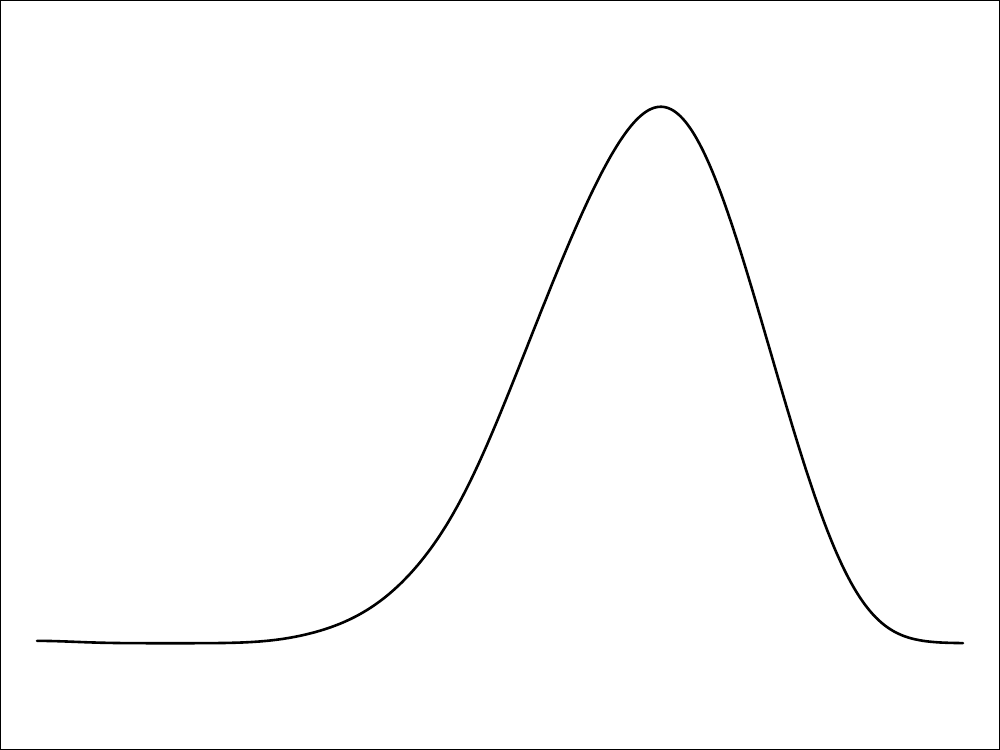}};

				\node at (1.5,-2.3) {\scriptsize (a) Samples of original data};
				\node at (6.5,-2.3) {\scriptsize (b) Euclidean mean};
				\node at (11,-2.3) {\scriptsize (c) Wasserstein barycentre};
			\end{tikzpicture} 
		\end{center}
		\caption{Comparison of Wasserstein barycentre versus Euclidean average for capturing structural information in random distributions. The barycentre and Euclidean average are based on six unimodal distributions. The Euclidean average has two modes while the Wasserstein barycentre preserves unimodality.}
		\label{fig:wasserstein}
	\end{figure*}

For these advantages and motivated by the applications in Example \ref{ex:pa} and Examples \ref{ex:mp}, \ref{ex:mg} in the Supplementary Material, we propose to define causal effects for distribution functions using the optimal transport between the Wasserstein barycentres of different potential outcome distributions. Our definitions of causal effect, called the average causal effect map, can be identified under straightforward generalizations of standard assumptions in the causal inference literature. We develop doubly robust and cross-fitting procedures for estimating the average causal effect map, and 
establish asymptotic properties for these estimators.
  In contrast to the setting for the classical doubly robust estimators \citep{robins1994estimation,chernozhukov2018double}, 
	typically even for a fixed unit, the outcome may not be fully observed and needs to be estimated from data. For instance, in our real data application, the data consists of empirical distribution functions of physical activity patterns for individuals rather than their underlying distribution functions.  
	To establish the asymptotic properties with distribution-valued outcomes, our  analyses rely on several  geometric properties of the Wasserstein space, most notably the isometry between the Wasserstein space and the space of quantile functions.  To the best of our knowledge, this is the first systematic study on causal inference for distribution-valued outcomes.
	
	The rest of the paper is structured as follows. In Section \ref{sec:background}, we present background on causal inference and Wasserstein space. In Section \ref{sec:W} we introduce the notion of average causal effect map for outcomes from a Wasserstein space, and develop identifiability results and doubly robust estimators for the average causal effect map. We  then study their asymptotic properties in Section \ref{sec:theory}.
We provide numerical studies in Section \ref{sec:simulation} and a real data illustration in Section \ref{sec:app}. We end with a brief discussion in Section \ref{sec:discussion}.

	\section{Background}
	\label{sec:background}
	
	\subsection{The potential outcomes framework}
	\label{sec:background-causal}
	
	We shall define causal effects using the  potential outcomes framework. 
	Suppose that the treatment is $A\in\{0,1\}$, with $0$ and $1$ being  the labels for control and active treatments, respectively. We use $X$ to denote baseline covariates taking values in $\mathbb{R}^d$.
	For each
	level of treatment $a$, we assume there exists a potential outcome $\Y(a)$,
	representing the outcome had the subject, possibly contrary to the
	fact, been given treatment $a$.  Here  $ \Y(a) $ is a random object that resides in a possibly non-linear  space.  We make the stable unit treatment value assumption \citep[SUTVA,][]{rubin1980comment} so that the potential outcomes for any unit do not vary with the treatments assigned to other units, and, for each unit, there are no different versions of treatments that lead to different potential outcomes. Under this assumption, the observed outcome $\Y = 
	\Y(A),$ where $\Y(A) = \Y(1)$ if $A=1$ and $\Y(A)=\Y(0)$ if $A=0$.   We assume we observe $n$ independent samples from an infinite  super-population of $(A,X,\Y),$  denoted by $(A_i, X_i, \Y_i), i=1,\ldots, n.$

	When $\Y(a)$ resides in    $\mathbb{R}$, the  causal effect is commonly defined as the contrast between a summary measure of the potential outcome distributions. For example, the average causal effect is defined as the difference between the means of the potential outcome distributions:
	\begin{equation}
		\label{eqn:ace}
		\ACE = \expect \{\Y(1)\} - \expect\{\Y(0)\};
	\end{equation} 
	the quantile treatment effect is  defined as the difference between the quantiles of the potential outcome distributions
	\begin{equation}
		\label{eqn:qte}
		\QTE(\alpha) =  F^{-1}_{Y(1)} (\alpha) - F^{-1}_{Y(0)} (\alpha), \alpha \in [0,1],
	\end{equation}
	where $F_Z(z)=P(Z\leq z)$ is the cumulative distribution function (CDF) of the random variable $Z$ and 
	\begin{equation}\label{eq:quantile}
	    F^{-1}_Z(\alpha) = \inf\{z: F_Z(z) \geq \alpha\}
	\end{equation}
	 is the corresponding quantile function.
	Causal effects defined in this manner can be interpreted on the population level, as they concern contrasts between potential outcomes in two hypothetical populations.  These population-level interpretations concern the effect of introducing a particular treatment to a population and are most relevant to policy makers.

	There is, however, a subtle but important distinction between the individual-level interpretations of $\ACE$ and $\QTE(\alpha)$. Let $CE_i = Y_i(1) - Y_i(0)$ be the individual causal effect for unit $i$. Individual causal effects provide useful information for  individualized treatment decision-making and are most relevant to individual subjects.  Since $\ACE = \expect (CE_i)$, it can be interpreted as averages of individual causal effects; here the expectation is taken over units in the super-population. The individual-level interpretation of $\ACE$ extends to the conditional average treatment effect, $\mathrm{CACE}(L) = \expect\{Y(1)\mid L\} - \expect\{Y(0)\mid L\},$ where $L$ is a subset of observed baseline covariates.
	In contrast, generally, $\QTE(\alpha)$ cannot be interpreted as the $\alpha-$quantile of individual causal effects. {This distinction connects to desideratum (d)   in Section \ref{subsec:def-ce}.}

	\subsection{Causal effect identification and estimation}
	\label{background-identification}
	
	The following assumptions are standard in the causal inference literature \citep[e.g.][]{rosenbaum1983central,hernan2020}.
	\begin{assumption}[Ignorability]\label{assu:ignorability}
		$A\ind \Y(a)\mid X, a=0,1.$ 
	\end{assumption}
	\begin{assumption}[Positivity]\label{assu:positivity}
		The propensity score $\pi(X):=P(A=1\mid X)$ is bounded away from 0: There exists $\epsilon>0$, such that $\epsilon<\pi(X) <1-\epsilon, a.e.$
	\end{assumption}
	Under Assumptions \ref{assu:ignorability} and \ref{assu:positivity}, when $Y(a)$ resides in $\mathbb{R}$, the mean potential outcome  is given by 
	\begin{equation}
		\label{eqn:identify}
		\mu_a:=\expect[Y(a)] = \expect_X\{\expect[\Y\mid A=a, X]\} = \expect\left\{ \textstyle\dfrac{I(A=a)Y}{P(A=a\mid X)}  \right\}.
	\end{equation}
	Similarly, the potential outcome distributions $F_{Y(a)}(y)$  can be identified by replacing $\Y$ in the last term of eqn. \eqref{eqn:identify} with $I(Y \leq y).$ Based on \eqref{eqn:identify}, the average causal effect can be identified as $\ACE = \expect\{Y(1)\} -\expect\{Y(0)\}$, and the quantile treatment effect can be identified as $\QTE(\alpha) = F_{Y(1)}^{-1}(\alpha) - F_{Y(0)}^{-1}(\alpha).$  
	
	Given the identification formula \eqref{eqn:identify}, one may use plug-in estimators to estimate the mean potential outcomes. Let ${m}_a(X) = \expect(\Y\mid A=a, X)$ and $f(A \mid X) = A\pi(X) + (1-A) (1-\pi(X)).$  Also denote $\widehat{m}_a(X), \widehat{\pi}(X), \widehat{f}(A \mid X)$ as  estimates of their corresponding population quantities obtained using standard parametric or nonparametric/machine-learning techniques. 
	Some leading estimators of $\mu_a$ include the outcome regression estimator $\widehat{\mu}_a^{OR} = \mathbb{P}_n \widehat{m}_a(X)$,  the inverse probability weighting estimator $\widehat{\mu}_a^{IPW} = \mathbb{P}_n \textstyle\frac{I(A=a)Y}{\widehat{f}(A \mid X)},$ and the so-called doubly robust estimator $\widehat{\mu}_a^{DR} = \widehat{\mu}_a^{OR} + \mathbb{P}_n  \left[\textstyle\frac{I(A=a)}{\widehat{f}(A \mid X)} \left\{Y-\widehat{m}_a(X)\right\}\right];$ here $\mathbb{P}_n$ refers to the empirical average operator: $\mathbb{P}_n(O) = \dfrac{1}{n}\sum\limits_{i=1}^n O_i.$

	\subsection{Wasserstein space}
	
	Let $\tdomain$ be an interval of $\real$,  $V_1$ and $V_2$ be random variables taking values in $\mI$ with finite second moments, and $\lambda_1, \lambda_2$ be their (cumulative) distribution functions, respectively.  To define the Wasserstein distance between $\lambda_1$ and $\lambda_2$, we let $\Lambda(\lambda_1,\lambda_2)$ denote all joint distributions  $\lambda_{12}$ of $(V_1, V_2)$ that have marginal distributions $\lambda_1$ and $\lambda_2$. The (2-)Wasserstein distance between $\lambda_1$ and $\lambda_2$ is defined as
	\begin{equation}
		\label{eqn:w2}
		W_2(\lambda_1,\lambda_2) = \left( \inf\limits_{\lambda_{12}\in \Lambda(\lambda_1,\lambda_2)} \int_{\tdomain\times\tdomain} (s-t)^2 d\lambda_{12}(s,t)  \right)^{1/2}.
	\end{equation}
	The Wasserstein space of order 2 on $\tdomain$ is then defined as the space of  distribution functions on $\mI$ with finite second moments
	$$
	\ws=\bigg\{\lambda 
	\text{ is a distribution function on } \mI:\int_{\tdomain} t^2\diffop \lambda(t)<\infty\bigg\}
	$$
	endowed with the 2-Wasserstein distance.

	The Wasserstein distance can be motivated by the problem of optimal transport. Consider a pile of mass  on space $\mI$ with a distribution $\lambda_1$. We wish to transport the mass in such a way that the new mass distribution  is $\lambda_2$. Assume also that the cost of transporting a unit mass from point $s$ to point $t$ is  $(s-t)^2.$ A transport plan to move $\lambda_1$ to $\lambda_2$ can be described by the function $\lambda_{12}$ such that  $d\lambda_{12}(s,t)$ denotes the amount of mass to move from $s$ to $t$. Since the amount of mass to be moved out of $s$ must match $d\lambda_1(s)$, and the amount of mass to be moved into $t$ must match $d\lambda_2(t),$ we have $
	\int_{t\in \mI} d\lambda_{12}(s,t)  = d\lambda_1(s), \int_{s\in \mI} d\lambda_{12}(s,t)  = d\lambda_2(t).
	$ In other words, $\lambda_{12} \in \Lambda(\lambda_1, \lambda_2)$. The Wasserstein distance then corresponds to the minimum effort that is required in order to transport the mass of $\lambda_1$ to produce the mass distribution of $\lambda_2$. {The minimizer $\lambda_{12}^*$ to the problem in \eqref{eqn:w2} always exists \citep[][Theorems 1.7 and 1.22]{Santambrogio2015} and is known as the optimal transport plan. 
		
		If $\lambda_1$ is continuous, then there exists a unique function $T(\cdot): \mathcal{I } \rightarrow \mI$ such that $d\lambda_{12}^*(s, T(s)) = d\lambda_1(s)$ \citep[][Theorems 1.7 \& 1.22]{Santambrogio2015}. Intuitively,  in this case, the optimal transport plan moves all the mass at $s$ to $T(s)$. The function $T(\cdot)$ is known as the optimal transport map.} Let $\lambda^{-1}$ be the quantile function of the distribution  $\lambda$.
	It can be shown   that  
	\citep[][Theorem 6.0.2]{Ambrosio2005}
	the optimal transport map  {$T(s) = \lambda_2^{-1}(\lambda_1(s))$, so that it moves mass between corresponding quantiles of $\lambda_1$ and $\lambda_2.$} The following proposition, summarizing the above discussion, shows that given a fixed continuous distribution $\lambda_1$, the distribution $\lambda_2$ can be defined via the optimal transport map from $\lambda_1$ to $\lambda_2.$ The proof of Proposition \ref{prop:unique} is straightforward and hence omitted.
	
	\begin{proposition}
		\label{prop:unique}
		Given a continuous distribution  function $\lambda_1$, there is a one-to-one correspondence between a distribution function $\lambda_2$ and the optimal transport map from $\lambda_1$ to $\lambda_2$.
	\end{proposition}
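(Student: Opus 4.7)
The plan is to exhibit explicit maps in both directions and show they are mutual inverses. In one direction, given a distribution function $\lambda_2 \in \ws$, the paper has already recalled from \citet[Theorem 6.0.2]{Ambrosio2005} that the optimal transport map from the continuous $\lambda_1$ to $\lambda_2$ exists and is given by $T_{\lambda_2}(s) = \lambda_2^{-1}(\lambda_1(s))$; this defines a map $\Phi: \lambda_2 \mapsto T_{\lambda_2}$. In the other direction, given any optimal transport map $T$ from $\lambda_1$ (to some target), I would associate to it the pushforward measure $\Psi(T) = T_\#\lambda_1$, i.e.\ the distribution of $T(V_1)$ when $V_1 \sim \lambda_1$. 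It then suffices to verify $\Psi \circ \Phi = \idf$ and $\Phi \circ \Psi = \idf$.

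For $\Psi \circ \Phi = \idf$, I would compute, for $V_1 \sim \lambda_1$, the distribution of $T_{\lambda_2}(V_1) = \lambda_2^{-1}(\lambda_1(V_1))$. Because $\lambda_1$ is continuous, $\lambda_1(V_1)$ is uniform on $[0,1]$, and then $\lambda_2^{-1}(U)$ with $U$ uniform has distribution function $\lambda_2$ by the standard quantile transform. Hence $\Psi(\Phi(\lambda_2)) = \lambda_2$.

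For $\Phi \circ \Psi = \idf$, I would use that an optimal transport map in the quadratic cost setting on $\real$ from a continuous source is, by the same \citet[][Theorem 6.0.2]{Ambrosio2005} characterization, of the form $T(s) = \lambda_*^{-1}(\lambda_1(s))$ where $\lambda_* = T_\#\lambda_1 = \Psi(T)$. Thus the map associated to $\Psi(T)$ under $\Phi$ is exactly $T$ itself, which gives $\Phi(\Psi(T)) = T$.

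The only mildly delicate step is $\Psi \circ \Phi = \idf$, since $\lambda_2$ need not be continuous, so $\lambda_2^{-1}$ defined by the infimum in \eqref{eq:quantile} may have flat pieces or jumps; I would handle this by appealing to the standard quantile transform result that $\lambda_2^{-1}(U) \sim \lambda_2$ whenever $U$ is uniform on $[0,1]$, which holds regardless of continuity of $\lambda_2$, and uses only continuity of $\lambda_1$ to guarantee that $\lambda_1(V_1)$ is uniform. Everything else is a direct consequence of the cited characterization of optimal transport maps on the line.
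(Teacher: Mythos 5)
The paper explicitly omits the proof of Proposition \ref{prop:unique}, calling it straightforward, so there is no official argument to compare against. Your proposal is correct and fills the gap with exactly the argument one would expect: the explicit formula $T_{\lambda_2} = \lambda_2^{-1}\circ\lambda_1$ gives $\Phi$, the pushforward $T\mapsto T_\#\lambda_1$ gives $\Psi$, and the standard probability-integral and quantile-transform facts (using only continuity of $\lambda_1$, as you correctly flag) verify that they are mutual inverses. You were also right to note that $\lambda_2$ need not be continuous, so the check $\lambda_2^{-1}(U)\sim\lambda_2$ for $U$ uniform is the step that needs the generalized-inverse convention in \eqref{eq:quantile}; your handling of that point is accurate. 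No gaps.
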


	Based on the notion of Wasserstein distance $W_2$, we can define the mean of a set of distributions $\lambda_1, \ldots, \lambda_n$ in the  Wasserstein space  by the so-called Wasserstein barycentre $\bar{\lambda}$, defined as the distribution ${\lambda}$ that minimizes $\dfrac{1}{n}\sum\limits_{i=1}^n W_2^2(\lambda_i, {\lambda})$. {In Lemma \ref{lem:expect-Y-inverse-inveser-mu} in the Supplementary Material, it is shown that $\bar{\lambda}^{-1} = \dfrac{1}{n} \sum\limits_{i=1}^n \lambda_i^{-1},$ so that the quantile function corresponding to the Wasserstein barycentre equals the (Euclidean) averages of the individual quantile functions.}
	The mean of distributions can also be defined in various other ways, such as the mean under the Euclidean distance $\dfrac{1}{n} \sum\limits_{i=1}^n \lambda_i$. 
	Compared to alternative center measures, the Wasserstein barycentre typically provides a better summary that captures the structure of the random objects represented by distribution functions, such as shapes, curves, and images; see for example, Figure \ref{fig:wasserstein} and \citet[][Figure 1]{cuturi2014fast} for illustrations.

	\section{Causal Inference on Distribution Functions}\label{sec:W}
	
	\subsection{Definition of causal effects}\label{subsec:def-ce}
	
	We now introduce a definition of average causal effect for outcomes taking value in the   Wasserstein space $\ws$.  
	In parallel to the definition of causal effects introduced in Section \ref{sec:background-causal}, we first define the mean potential outcomes.  As both $Y_i(1)$ and $Y_i(0)$ take value  in the  Wasserstein space, 
	we define their means using their Wasserstein barycentres:
	\begin{equation}\label{eq:FM} 
		\mu_a = \FM \Y(a) \equiv\underset{\upsilon \in \ws}{\arg\min} \ \expect \left\{W_2^2\left(Y(a),\upsilon\right) \right\}, \quad a=0,1.
	\end{equation} 
	Intuitively, $\mu_a$ is a ``typical'' potential distribution under treatment $A=a.$ Let $\delta_{t}$ denote the Dirac delta function with point mass at $t$.  
	Ideally, a causal effect definition in the Wasserstein space should satisfy the following desiderata:
	\begin{enumerate}[label=\textup{(\alph*)}]
	\item \label{desi:1} When $\FM \Y(1) = \FM \Y(0),$ the causal effect  equals zero;
		\item\label{desi:2} In the degenerate case where $Y_i(a) = \delta_{y_i(a)}$, corresponding to the classical scenario where the outcome resides in $\real$, the causal effect corresponds to the usual average causal effect $\ACE$ defined in \eqref{eqn:ace}; 
		\item \label{desi:3} The average causal effect is a contrast between the averages of potential outcomes in two hypothetical populations, $Y(1)$ and $Y(0)$, and thus can be interpreted on the population level;
		\item \label{desi:4} The average causal effect equals the average of individual causal effects, thus maintaining the individual-level interpretation of the $\ACE$ for real-valued outcomes discussed at the end of Section \ref{sec:background-causal}.
	\end{enumerate}
Desideratum (a) is natural, given the causal effect is defined as a comparison between two (hypothetical) populations. Desideratum (b) ensures that the definition is a generalization of the standard definition of the ACE \eqref{eqn:ace} in the Euclidean space.  Desiderata (c) and (d) are in place to ensure that the definition can be interpreted at both  population and individual levels.

	From an optimal transport point of view, it may be  tempting to define the causal effect as the Wasserstein distance between $\mu_1$ and $\mu_0$; see Section \ref{sec:S1} in the Supplementary Material for more discussions on causal effect defined in this way. Although causal effect defined in this way satisfies desiderata (a)--(c), in general, it fails to satisfy desideratum (d).  Instead, we introduce a novel definition of average causal effect, called the \emph{causal effect map}.  In Section \ref{sec:properties}, we shall see that the causal effect map satisfies all the desiderata, and  contains richer information  than the single summary measure $W_2(\mu_1, \mu_0)$; in particular, one can compute $W_2(\mu_1, \mu_0)$ based on the causal effect map.
	
	\begin{definition}
		\label{definition:causal}
		Let $\lambda$ be a continuous distribution function. 
		The individual causal effect map of $A$ on $\Y$ is defined as 
		\begin{equation*}
			\Delta_i^\lambda(\cdot) = Y_i(1)^{-1} \circ \lambda(\cdot) - Y_i(0)^{-1} \circ \lambda(\cdot),
		\end{equation*}
		where we say $\lambda$ is a reference distribution; for $a=0,1,$ $ Y_i(a)^{-1}$ is the quantile function of the distribution $Y_i(a)$ as defined in \eqref{eq:quantile}. 
		The (average) causal effect map  of $A$ on $\Y$ is defined as
		\begin{equation*}
			\Delta^\lambda(\cdot) = (\FM \Y(1))^{-1} \circ \lambda (\cdot) - (\FM \Y(0))^{-1} \circ \lambda (\cdot) =  (\mu_1^{-1} - \mu_0^{-1}) \circ \lambda (\cdot).
		\end{equation*}
	\end{definition}

	A crucial component in Definition \ref{definition:causal} is the choice of reference distribution $\lambda$ {that is allowed to have a domain different from that of $Y$.} In general, a different reference distribution leads to a different interpretation of the causal effect maps. Hence, one should choose 	the reference distribution based on the desired interpretation. We now illustrate some common choices of $\lambda$ with their interpretations.

	\begin{interpretation}[Difference in quantiles] \label{interpretaion:quantiles}
	    If the reference distribution $\lambda$ is the uniform distribution on $[0,1]$ so that $\lambda(t) = t, t\in[0,1]$, then 
		the causal effect map $\Delta(\cdot)$ can  be interpreted as difference in quantiles.
	\end{interpretation}

	\begin{remark}
	\label{remark:diq}
	    The interpretation in terms of difference in quantiles is not to be confused with the quantile treatment effect defined in \eqref{eqn:qte}. In our setting, the potential outcomes are random distribution functions, and $\mu_1^{-1}(\alpha)$ is the $\alpha-$quantile of \emph{the mean potential outcome} under treatment. In contrast, in the quantile treatment effect setting, the realizations of potential outcomes are real numbers, and $F^{-1}_{Y(1)} (\alpha)$ is the $\alpha-$quantile of \emph{the distribution of potential outcomes} under treatment.
	\end{remark}

	To discuss the second interpretation, analogous to the concepts of optimal transport map, we define the \emph{individual causal transport map} as $
		T_i(\cdot) = Y_i(1)^{-1} \circ Y_i(0) (\cdot)
		$ and the \emph{(population) causal transport map} as
		$
		T(\cdot) =  \mu_1^{-1} \circ \mu_0 (\cdot).
		$
	The causal transport maps are of natural interest in some applications. For example,  biological experiments  \citep[][]{Schiebinger2019} have found that cellular differentiation follows the shortest path under the Wasserstein geometry.
So in  Example \ref{ex:mp}, the causal transport map $T$ describes how a group of cells would differentiate after being exposed to an intervention, measured using gene expression levels.
	
	When the potential outcomes $Y(1), Y(0)$, and hence the barycentres $\mu_1$ and $\mu_0$ \citep[e.g.][Proposition 4.1]{Bigot2017}, are continuous distributions, with certain choices of the reference distribution, the causal effect maps can be interpreted as the (inverse of) causal transport maps up to an identity function.

	\begin{interpretation}[Causal transport maps]
	\label{interpretation:transport}
		Consider the case where the potential outcomes are random continuous distribution functions. If the reference distribution is chosen to be the barycentres $\mu_0$ or $\mu_1$, then  
			$$
			\Delta^{\mu_0} = T - \idf  = \mu_1^{-1} \circ \mu_0  - \idf,  \quad  \Delta^{\mu_1} =  \idf - T^{-1} = \idf - \mu_{0}^{-1} \circ \mu_1.
			$$
		If the reference distribution is chosen to be $Y_i$, then 
		\begin{equation}
		    \label{eqn:ict}
		    \Delta_i^{Y_i} = \left\{
		    \begin{array}{cc}
		    T_i  -\idf = Y_i(1)^{-1} \circ Y_i(0) - \idf     &  \text{if } A_i = 0 \\
		      \idf - T_i^{-1} = \idf - Y_i(0)^{-1} \circ Y_i(1)   & \text{if } A_i = 1 
		    \end{array}\right..
		\end{equation}    
	\end{interpretation}

	\subsection{Properties of causal effect maps}
	\label{sec:properties}
	
	We now describe several desirable properties of the causal effect maps. 
	First, it is easy to verify that for any choice of reference distribution $\lambda,$ the causal effect map satisfies desiderata (a)--(c). 
	The following theorem, which is crucial for  identification of the average causal effect map as we shall see later in Section \ref{sec:identifiability}, shows that  the causal effect map also satisfies desideratum (d). This theorem can be proved using Lemma \ref{lem:expect-Y-inverse-inveser-mu} in the Supplementary Material.

	\begin{theorem}
		\label{prop:equal} The average causal effect map corresponds to the average of individual causal effect maps with respect to the same reference distribution $\lambda$:
		$$
		\Delta^\lambda(\cdot) = \expect \Delta_i^\lambda(\cdot) = \expect \left\{Y(1)^{-1} \circ \lambda(\cdot) - Y(0)^{-1} \circ \lambda(\cdot)  \right\}.
		$$
	\end{theorem}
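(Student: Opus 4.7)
My plan is to reduce the theorem to the key fact (Lemma \ref{lem:expect-Y-inverse-inveser-mu} of the Supplementary Material) that the quantile function of a Wasserstein barycentre equals the expectation of the quantile functions of the summands, namely $\mu_a^{-1} = \expect\{Y(a)^{-1}\}$ for $a=0,1$. This identity is the population analogue of the finite-sample statement cited after \eqref{eq:FM}, and it follows because the map $\lambda \mapsto \lambda^{-1}$ is an isometry between $\ws$ and a convex subset of $L^2([0,1])$, under which the Wasserstein barycentre corresponds to the ordinary $L^2$ mean (i.e., $\FM$ becomes the linear expectation). I will invoke this lemma as the starting point.

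Given the lemma, the proof is essentially algebraic. First, I apply Definition \ref{definition:causal} to rewrite $\Delta^\lambda(\cdot) = \mu_1^{-1}\circ\lambda(\cdot) - \mu_0^{-1}\circ\lambda(\cdot)$. Second, I substitute $\mu_a^{-1} = \expect\{Y(a)^{-1}\}$ from the lemma to get
\begin{equation*}
\Delta^\lambda(\cdot) = \expect\{Y(1)^{-1}\}\circ\lambda(\cdot) - \expect\{Y(0)^{-1}\}\circ\lambda(\cdot).
\end{equation*}
Third, I pass the composition with $\lambda$ inside the expectation: for every fixed argument $t$, $(\expect\{Y(a)^{-1}\})(\lambda(t))$ is the expectation of the real-valued random variable $Y(a)^{-1}(\lambda(t))$ (the random quantile function evaluated at the deterministic point $\lambda(t)$), so it equals $\expect\{Y(a)^{-1}(\lambda(t))\} = \expect\{Y(a)^{-1}\circ\lambda(t)\}$. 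Finally, using linearity of expectation and Definition \ref{definition:causal} applied at the individual level gives $\expect\{Y(1)^{-1}\circ\lambda(\cdot) - Y(0)^{-1}\circ\lambda(\cdot)\} = \expect\{\Delta_i^\lambda(\cdot)\}$.

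The only genuinely nontrivial step is the first: the identification $\mu_a^{-1} = \expect\{Y(a)^{-1}\}$ for random, as opposed to empirical, distributions in $\ws$. That is why I would rely on Lemma \ref{lem:expect-Y-inverse-inveser-mu} rather than reprove it here. Once the lemma is taken as given, the remaining manipulations — exchanging composition with $\lambda$ and pulling the subtraction outside the expectation — are routine because $\lambda$ is deterministic and the two expectations $\expect\{Y(a)^{-1}(\lambda(t))\}$ are finite for each $t$ (a consequence of $Y(a) \in \ws$ almost surely, which ensures the quantile functions lie in $L^2([0,1])$ and hence the expectations exist pointwise). Thus the proof is short, with the substantive work deferred to the barycentre-quantile lemma.
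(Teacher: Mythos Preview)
Your proposal is correct and follows essentially the same approach as the paper, which simply states that the theorem ``can be proved using Lemma \ref{lem:expect-Y-inverse-inveser-mu}.'' Indeed, the second assertion of that lemma, $\expect\rlog_\lambda W = \rlog_\lambda \FM W$ (i.e., $\expect\{W^{-1}\circ\lambda\} = (\FM W)^{-1}\circ\lambda$), already packages together your steps of substituting $\mu_a^{-1} = \expect\{Y(a)^{-1}\}$ and passing the composition with $\lambda$ inside the expectation, so your argument is a slightly unpacked version of the paper's one-line reduction.
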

	
	\begin{remark}
	\label{remark:individual}
	    In contrast to Theorem \ref{prop:equal},  the population causal transport map is generally different from the average of individual causal transport maps: $T \neq \expect [T_i]$.  To see this, note that as shown in Interpretation \ref{interpretation:transport}, $T = \Delta^{\mu_0} + \idf,$ while $T_i = \Delta_i^{Y_i(0)} + \idf.$  Although as we show in Theorem \ref{prop:equal}, under the same reference distribution $\lambda,$ $\Delta^\lambda = \expect{\Delta_i^\lambda},$ in general, $\Delta^{\mu_0} \neq \expect{\Delta_i^{Y_i(0)}}.$
	    Instead, as we illustrate later in Section \ref{sec:app}, to estimate the (expectation of) causal transport map for a particular individual $i$,  one first estimates the average causal effect map with reference distribution $Y_i$, and then applies \eqref{eqn:ict}.

	\end{remark}

	{In some scenarios, practitioners may also want a scalar quantity that measures the magnitude of the causal effect. The Wasserstein distance is a natural choice from an optimal transport point of view.}
	The following proposition shows that  one may compute the Wasserstein distance  $W_2(\mu_1, \mu_0)$ from the causal effect map $\Delta^\lambda(\cdot)$. In particular,  for any $U$ that follows the reference distribution $\lambda$,   $W_2(\mu_1, \mu_0)$ equals the $\ell_2$-norm of $\Delta^\lambda(U)$.  It can be proved using \Cref{lem:dist-norm-W} in the Supplementary Material.
	\begin{proposition}
		\label{prop:compute}
		The Wasserstein distance  $W_2(\mu_1, \mu_0)$ is determined by the causal effect map $\Delta^\lambda(\cdot)$ via
		$$
		W_2(\mu_1,\mu_0) = \|\Delta^\lambda\|_\lambda \define \left\{\expect_{U\sim \lambda} \ (\Delta^\lambda)^2(U)  \right\}^{1/2}=  \left\{\int (\Delta^\lambda)^2(u) d\lambda (u)\right\}^{1/2}.
		$$
	\end{proposition}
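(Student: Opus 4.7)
The plan is to reduce the identity to the standard isometric representation of the 2-Wasserstein distance on $\ws$ via quantile functions, and then perform a change of variables driven by the probability integral transform applied to the reference distribution $\lambda$.

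First, I would invoke \Cref{lem:dist-norm-W} from the Supplementary Material, which, as advertised in the statement, encodes the well-known isometry between $(\ws, W_2)$ and the subset of $L^2([0,1])$ consisting of quantile functions. Specifically, it yields the identity
$$
W_2^2(\mu_1,\mu_0) \;=\; \int_0^1 \bigl(\mu_1^{-1}(t) - \mu_0^{-1}(t)\bigr)^2 \, \mathrm{d}t.
$$
This is the crucial structural fact: distances on the Wasserstein side are computed as ordinary $L^2$ differences on the quantile side, with respect to the uniform measure on $[0,1]$.

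Next, I would perform the change of variables $t = \lambda(u)$. Since $\lambda$ is assumed to be a continuous distribution function, the probability integral transform ensures that if $U \sim \lambda$, then $\lambda(U)$ is uniform on $[0,1]$. Pushing $\lambda$ forward to Lebesgue measure on $[0,1]$ therefore gives
$$
\int_0^1 \bigl(\mu_1^{-1}(t) - \mu_0^{-1}(t)\bigr)^2 \mathrm{d}t \;=\; \int \bigl(\mu_1^{-1}(\lambda(u)) - \mu_0^{-1}(\lambda(u))\bigr)^2 \, \mathrm{d}\lambda(u).
$$
By the definition of $\Delta^\lambda$ in Definition~\ref{definition:causal}, the integrand on the right is precisely $(\Delta^\lambda(u))^2$, and the right-hand side equals $\expect_{U\sim\lambda}\bigl[(\Delta^\lambda(U))^2\bigr] = \|\Delta^\lambda\|_\lambda^2$. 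Taking square roots concludes the proof.

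There is no real obstacle; the only subtle point is the continuity hypothesis on $\lambda$, which is needed precisely so that $\lambda(U)$ has a uniform distribution on $[0,1]$ (without continuity, $\lambda(U)$ may place mass on fewer than all of $[0,1]$, and the change of variables would be merely an inequality rather than an equality). This is consistent with the standing assumption in Definition~\ref{definition:causal} that the reference distribution $\lambda$ is continuous.
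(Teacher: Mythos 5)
Your proof is correct and takes essentially the same route as the paper: the paper's proof is simply to apply \Cref{lem:dist-norm-W} with $G_1 = \mu_1$, $G_2 = \mu_0$, which directly gives $W_2(\mu_1,\mu_0) = \|\mu_1^{-1}\circ\lambda - \mu_0^{-1}\circ\lambda\|_\lambda = \|\Delta^\lambda\|_\lambda$. Your exposition unpacks the lemma into its two ingredients (the Villani quantile isometry and the $t=\lambda(u)$ change of variables) rather than citing it as a single step, but the underlying computation is identical; the remark about continuity of $\lambda$ being what makes the change of variables an equality is a useful observation that the paper's lemma statement leaves implicit.
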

	
	\subsection{Identification and estimation}
	\label{sec:identifiability}

	Under the ignorability and positivity assumptions, similar to \eqref{eqn:identify}, we can identify the causal effect map $\Delta^\lambda.$
	
	\begin{theorem}
		\label{thm:identification}
		Under Assumptions \ref{assu:ignorability} and \ref{assu:positivity},   the average causal effect map $\Delta^\lambda$ is identifiable and given by 
		\begin{equation*}
			\Delta^\lambda=\mu_1^{-1,\lambda}-\mu_0^{-1,\lambda}
		\end{equation*}
	with $$\mu_a^{-1,\lambda} = \expect_X\{\expect[(\Y^{-1} \circ \lambda) \mid A=a, X]\} =   \expect\left\{ \textstyle\dfrac{I(A=a) (\Y^{-1} \circ \lambda)}{P(A=a\mid X)}  \right\}.$$
	\end{theorem}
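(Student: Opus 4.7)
The approach is to reduce the identification of $\Delta^\lambda$ to the standard identification formula \eqref{eqn:identify} by treating the random quantile function $\Y(a)^{-1}\circ\lambda$ as a function-valued potential outcome and applying the classical argument pointwise. The key observation is that, by Theorem \ref{prop:equal},
$$
\Delta^\lambda(\cdot) = \expect\{\Y(1)^{-1}\circ\lambda(\cdot)\} - \expect\{\Y(0)^{-1}\circ\lambda(\cdot)\},
$$
so it suffices to show that $\expect\{\Y(a)^{-1}\circ\lambda(\cdot)\}$ coincides with each of the two stated expressions for $\mu_a^{-1,\lambda}$ when $a=0,1$.

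To carry this out I would fix an arbitrary point $t$ in the domain of $\lambda$ and set $Z(a) = \Y(a)^{-1}(\lambda(t))$, a scalar random variable. Since $Z(a)$ is a measurable functional of $\Y(a)$, Assumption \ref{assu:ignorability} propagates to give $A \ind Z(a) \mid X$. Under SUTVA the observed scalar is $Z = \Y^{-1}(\lambda(t))$ and agrees with $Z(A)$. Combined with Assumption \ref{assu:positivity}, the classical argument that produces \eqref{eqn:identify} applies verbatim to the scalar outcome $Z$, yielding
$$
\expect\{Z(a)\} = \expect_X\{\expect[Z \mid A=a, X]\} = \expect\left\{\frac{I(A=a)\, Z}{P(A=a\mid X)}\right\}.
$$

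Because this pointwise identity holds for every $t$ in the domain of $\lambda$, it lifts immediately to an identification of the function $\expect\{\Y(a)^{-1}\circ\lambda\}$ in terms of $\Y^{-1}\circ\lambda$, and substituting back into the first display yields the claimed formula for $\Delta^\lambda$. The only routine checks are that all expectations are finite and that the interchange of expectation and evaluation at $t$ is justified; these follow from $\Y(a)\in\ws$ having finite second moment (so that $\Y(a)^{-1}\circ\lambda$ lies in $L^2(\lambda)$) together with the uniform lower bound $\epsilon$ on $P(A=a\mid X)$ supplied by Assumption \ref{assu:positivity}. I do not anticipate any substantive obstacle: the non-trivial ingredient is Theorem \ref{prop:equal}, which transfers the problem from the non-Euclidean Wasserstein barycentre to the linear space of quantile functions; once this reduction is in hand, the identification is purely classical and proceeds pointwise.
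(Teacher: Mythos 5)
Your proposal is correct and follows essentially the same route as the paper: both invoke Theorem \ref{prop:equal} to pass from the Wasserstein barycentre to the linear space of quantile functions, then apply ignorability (with SUTVA) and iterated expectations to get the two identification formulas. Your pointwise reduction to the scalar outcome $Z(a) = Y(a)^{-1}(\lambda(t))$ merely makes explicit the step the paper labels ``due to ignorability,'' and your observation that $A \ind Y(a) \mid X$ implies $A \ind Z(a) \mid X$ for the measurable functional $Z(a)$ is the correct justification.
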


	To see the above, note that
	\begin{align*}
		\Delta^\lambda &= \sum\limits_{a=0,1} (2a-1) \expect\left\{ Y(a)^{-1} \circ \lambda \right\}\quad \text{(due to Theorem \ref{prop:equal})} \\ 
		&=  \sum\limits_{a=0,1} (2a-1) \expect_X [\expect \{ Y(a)^{-1} \circ \lambda \mid A=a, X\}] \quad  \text{(due to ignorability)} \\
		&=  \sum\limits_{a=0,1} (2a-1) \expect_X [\expect \{Y^{-1} \circ \lambda\mid A=a, X\}] \quad  \text{(due to the SUTVA)}\\
		& = \mu_1^{-1,\lambda}-\mu_0^{-1,\lambda}.
	\end{align*} 
	
	\begin{remark}
	In general, when $Y(a)$ is a random distribution function,  $$
	    \mu_a \neq \expect_X\{\expect[\Y   \mid A=a, X]\} = \expect\left\{ \textstyle\dfrac{I(A=a) \Y}{P(A=a\mid X)}  \right\}.
	 $$
	So  $\mu_a, a=0,1$ may not be directly identified using the identification formula \eqref{eqn:identify} for real-valued $Y$. Instead, following Theorem \ref{thm:identification}, the mean potential outcomes $\mu_a, a=0,1$ may be identified as the inverse of $\mu_a^{-1,\lambda}  \circ \lambda^{-1}.$  
	\end{remark}
	
	\begin{remark}
	\label{remark:ignorability}
	    When the outcome is a distribution function, Assumption \ref{assu:ignorability} is stronger than the same assumption applied to a summary measure of the outcome.
	\end{remark}

	Similar to Section \ref{background-identification}, we let
	$
	m_a^\lambda(X)=\expect\{ \Y^{-1}\circ \lambda \mid A=a,X\},
	$
	so that 
	$\Delta^\lambda = \sum\limits_{a=0,1} (2a-1) \expect_X m_a^\lambda(X).$ We  consider a regression model $m_a^\lambda(X;\beta)$, where $\beta$ may be finite or infinite dimensional.
	In practice,  the outcome  $\Y_i$, which is a distribution function itself, might not be fully observed. Instead, we typically only observe $k_i$ samples from $\Y_i$. We hence propose to construct a doubly robust estimator for $\Delta^\lambda$ in the following steps: (i) Use standard nonparametric methods such as the nonparametric maximum likelihood estimation, or local polynomial smoothing to obtain the estimates $\widehat{Y_i}$; (ii) If  $\lambda$ is unknown, obtain an estimate of $\lambda$, denoted as $\widehat{\lambda}$; for notational convenience,  we let $\widehat{\lambda} = \lambda$ if $\lambda$ is a fixed or fully observed reference distribution; (iii) Regress $\widehat{Y}^{-1}\circ \widehat{\lambda}$ on $X$ 
	and $A$ to obtain an estimate of $m_a^{\lambda}(X)$ for each individual in the sample,
	denoted as $\widehat{m}_a^{\widehat\lambda}(X_i), i=1,\ldots,n;$  {this may be done  using a standard functional regression model \citep[see e.g.][Chapter 13]{Ramsay2005};}
(iv) Construct an estimate for $f(A\mid X),$ denoted as $\widehat{f}(A \mid X)$;
(v) Construct a doubly robust estimator via
	\begin{equation}\label{eq:dr}
	\widehat{\Delta}_{DR}^{\widehat{\lambda}} = \widehat\mu_1^{-1,\widehat\lambda}-\widehat\mu_0^{-1,\widehat\lambda}\quad \text{with}\quad \widehat\mu_a^{-1,\widehat\lambda} = \mathbb{P}_n \left[\widehat{m}_a^{\widehat\lambda}(X)+\dfrac{I(A=a) }{\widehat{f}(A \mid X)}\left\{\widehat{\Y}^{-1}\circ \widehat{\lambda} - \widehat{m}_a^{\widehat\lambda}(X)\right\}\right].
	\end{equation}
	 This estimator,
{motivated  by the doubly robust estimator discussed in Section \ref{background-identification}},  combines an outcome regression estimator with an inverse probability weighting estimator. We establish its asymptotic properties in Section \ref{sec:theory}.

	In the doubly robust estimating procedure described above, the data are used twice: once for estimating $\widehat\pi$, $\widehat f$ and $\widehat m_a^{\widehat\lambda}$, and once for estimating the causal effect $\widehat\Delta_{DR}^{\widehat\lambda}$. Theoretical analysis of such estimators is rather challenging and requires some complex conditions that may fail in settings involving machine learning methods; see Assumption \ref{assu:DR-fixed-ref} and Remark \ref{remark:donsker}.  To overcome this difficulty, following \cite{chernozhukov2018double}, we propose the following cross-fitting estimators.  The entire data are randomly partitioned into $K$ parts of roughly equal sizes, denoted by $\mathscr D_1,\ldots, \mathscr D_K.$  For $k=1,\ldots, K$, we use  $\mathscr D_{-k} = \cup_{m\neq k} \mathscr D_m$  to obtain estimates  $\widehat f_k$, $\widehat \pi_k$, $\widehat m_{a,k}^{\widehat\lambda_k}$ and $\widehat\lambda_k$ (if $\lambda$ is chosen to be an unknown distribution),
and use $\mathscr D_{k}$ to estimate the causal effect by $\widehat{\Delta}_{CF,k}^{\widehat{\lambda}_k} =\widehat\mu_{1,k}^{-1,\widehat\lambda_k}-\widehat\mu_{0,k}^{-1,\widehat\lambda_k}$, where
	\begin{equation*}
		\widehat\mu_{a,k}^{-1,\widehat\lambda_k}=  n_k^{-1}\sum_{i\in\mathscr D_{k}}\left[\widehat{m}_{a,k}^{\widehat\lambda_k}(X_i)+\dfrac{I(A_i=a) }{\widehat{f}_k(A_i \mid X_i)}\left\{\widehat{\Y}^{-1}_i\circ \widehat{\lambda}_k - \widehat{m}_{a,k}^{\widehat\lambda_k}(X_i)\right\}\right],
	\end{equation*}
	where $n_k$ is the sample size of the $k$th partition. 
    Finally, we combine the effects from different partitions via
    \begin{equation}\label{eq:dr-cf}
    \widehat\mu_{a, CF}^{-1,\widehat\lambda} =	 \sum\limits_{k=1}^K  \dfrac{n_k}{n}  	\widehat\mu_{a,k}^{-1,\widehat\lambda_k} \circ\widehat\lambda_k^{-1}\circ\widehat\lambda \quad \text{and} \quad \widehat\Delta_{CF}^{\widehat\lambda} =\widehat\mu_{1, CF}^{-1,\widehat\lambda}  - \widehat\mu_{0, CF}^{-1,\widehat\lambda}, 
    \end{equation}
      where $\widehat\lambda$ is estimated  using the entire dataset. In the above, if each reference distribution $\widehat\lambda_k$ is fixed to a common distribution $\widehat\lambda$, then \eqref{eq:dr-cf} is reduced to $\widehat\Delta_{CF}^{\widehat\lambda}=n^{-1}\sum\limits_{k=1}^K {n_k}\widehat\Delta_{CF,k}^{\widehat\lambda}$. Otherwise,  the objects $\widehat\Delta_{CF,k}^{\widehat\lambda_k}$  reside in  distinct spaces $L^2(\ldomain;\widehat\lambda_k)$ for $k=1,\ldots,K$, where $\ldomain$ is the domain of $\lambda$. 
    In this case, to combine $\widehat\Delta_{CF,k}^{\widehat\lambda_k}$, in \eqref{eq:dr-cf} we apply the optimal transport {between the measures $\widehat\lambda_k$ and $\widehat\lambda$} to move it from the space $L^2(\ldomain;\widehat\lambda_k)$ into the space $L^2(\ldomain;\widehat\lambda)$.
	
	{To reduce the sensitivity of the cross-fitting estimator to partitioning, as suggested by \cite{chernozhukov2018double}, one may repeat the estimator $\widehat{\Delta}_{CF}^{\widehat{\lambda}}$ for $R$ times over independent partitioning. This results in estimates $\widehat{\Delta}_{CF}^{\widehat{\lambda},r}$ for $r=1,\ldots,R$. We then estimate $\Delta^\lambda$ by
	\begin{equation}\label{eq:CF-median}
	    \widehat{\Delta}_{CF}^{\widehat{\lambda},med}(\cdot)=median\{\widehat{\Delta}_{CF}^{\widehat{\lambda},r}(\cdot)\}_{r=1}^R.
	\end{equation}}

	\section{Asymptotic Properties}\label{sec:theory}
	
	We study the asymptotic properties of the proposed estimators, including $\widehat\Delta^{\widehat{\lambda}}_{DR}$ and $\widehat\Delta^{\widehat{\lambda}}_{CF}$, in this section. 
 Let $\ldomain$, potentially coincides with $\tdomain$, be the domain of the reference distribution $\lambda$.
	{For simplicity, we assume $\tdomain$ and $\ldomain$ to be a bounded interval of $\real$.
This condition may be replaced by some moment conditions on $Y_i$ and other relevant quantities; see Remarks \ref{rem:S1} and \ref{rem:S2} in the Supplementary Material for details.}
	
In the following, we shall first introduce assumptions on the variability from estimating $\Y_i, i=1,\ldots, n$ and ${\lambda}$.

	\begin{assumption}\label{assu:eY}
		The estimates $\eY_1,\ldots,\eY_n$ are independent, and there are two sequences of constants $\alpha_n=o(1)$ and $\nu_n=o(1)$ such that
		\begin{equation}\label{eqn:2}
			\begin{aligned}
				\sup_{1\leq i\leq n}\sup_{\y\in\ws}\expect\{W_2^2(\eY_i,\Y_i)\mid \Y_i=\y\} & =O(\alpha_n^2),\\
				\sup_{1\leq i\leq n}\sup_{\y\in\ws}\var\{W_2^2(\eY_i,\Y_i)\mid \Y_i=\y\} & =O(\nu_n^4).
			\end{aligned}
		\end{equation}
	\end{assumption}
	
The conditional expectation $\expect[W_2^2(\hat Y_i,Y_i)|Y_i]$ in Assumption \ref{assu:eY} is a real-valued measurable function defined on $\ws$, the precise definition of which is given in Section \ref{sec:S-remark} of the Supplementary Material. 
Assumption \ref{assu:eY} requires that  $\widehat{\Y}_i, i=1,\ldots,n$, converge to their population counterparts at certain rates. Suppose that the number of observations for unit $i$, $k_i\asymp n^\zeta$ for some constant $\zeta>0$. If $\widehat{\Y}_i$ is obtained using the corresponding empirical distribution function, then under some additional moment assumptions,
	condition \eqref{eqn:2} holds with $\alpha_n^2= \nu_n^4=n^{-\zeta/2}$. 
	This can be shown via a combination of \citet[][Theorem 1]{Fournier2015} and  \citet[][Proposition 2.17]{Santambrogio2015}. Assumption \ref{assu:eY} also holds with many other standard non-parametric estimators for $\Y_i$. For example, under some regularity conditions on {the distribution of} $\Y$, 
	the estimator by \cite{Petersen2016} satisfies condition \eqref{eqn:2}  with a faster rate: $\alpha_n^2=n^{-2\zeta/3}$ and $\nu_n^4=n^{-4\zeta/3}$.

The following Assumption \ref{assu:DR-lambda} imposes a condition on the rate of convergence for $\widehat\lambda$. It is satisfied, for example, when $\lambda$ is a fixed and known distribution so that $\widehat\lambda=\lambda$. {Under 
Assumption \ref{assu:eY}, it holds for $\lambda=Y_i$. 
It also holds when $\lambda$ is the Fr\'echet mean of $Y_i$ and  $\widehat{\lambda} =\underset{\upsilon \in \ws}{\arg\min}\sum_{i=1}^n W_2^2(\upsilon,\eY_i)$ is the sample Fr\'echet mean; see Lemma \ref{lem:mu-rate-noisy-W} in the Supplementary Material.
\begin{assumption}\label{assu:DR-lambda} $W_2^2(\widehat\lambda,\lambda) = \Op\big(n^{-1}+\alpha_n^2+\nu_n^2\big)$.
\end{assumption}	

Let $\tilde{m}_a^\lambda$ be an estimate of $m_a^\lambda$ by using the outcome $Y_i, i=1,\ldots, n$ and $\lambda$.
To study the asymptotic properties of the causal effect map estimator $\widehat\Delta_{DR}^{\widehat\lambda}$,  we introduce Assumption \ref{assu:DR-std} that is standard in causal effect estimation \citep[e.g.][]{hernan2020}. {Part \ref{assu:DR-std-pi-bound} of Assumption \ref{assu:DR-std} assumes positivity of the estimated propensity scores, while part \ref{assu:DR-std-convergence-fixed-ref} assumes that $\tilde m_a^\lambda$ and $\widehat\pi$  converge to their limits uniformly.} 
\begin{assumption}\label{assu:DR-std}\,  
	\begin{enumerate}[label=\textup{(\alph*)}]
		\item\label{assu:DR-std-pi-bound}  The  estimated propensity scores are bounded away from zero:  for some $\epsilon>0$,  $\pr\{\inf_x\widehat\pi(x)>\epsilon \text{ and } \sup_x\widehat\pi(x)<1-\epsilon\}=1$.
		\item\label{assu:DR-std-convergence-fixed-ref} The outcome regression and propensity score estimates converge: $\sup_x\| \tilde m_a^\lambda(x)-m_a^{\lambda,\ast}(x)\|_\lambda=\op(1), a=0,1$ and $\sup_x|\widehat\pi(x)-\pi^\ast(x)|=\op(1)$ for some  functions $m_a^{\lambda,\ast}$  and $\pi^\ast$. 
	\end{enumerate}	
\end{assumption}

The following Assumption \ref{assu:DR-D-rate-fixed-ref} assumes that the outcome regression estimates $\widehat{m}_a^{\widehat\lambda}$ obtained using $\widehat\lambda$ and $\widehat{Y}_i, i=1,\ldots, n$  are not too far away from the quantities $\tilde{m}_a^\lambda$.  It holds for estimators $\widehat m_a^\lambda(x)$ that are Lipschitz continuous functions of a weighted average of $(\eY_1)^{-1} \circ \lambda,\ldots,(\eY_n)^{-1} \circ \lambda$. Examples include local polynomial estimators and parametric estimators satisfying certain regularity conditions. Note that the optimal transport $\widehat{\lambda}^{-1}\circ\lambda$ in the assumption is needed to transport $\widehat m_a^{\widehat\lambda}$, which resides in the space $L^2(\ldomain;\widehat{\lambda})$, into the  space $L^2(\ldomain;\lambda)$ where  $\tilde m_a^\lambda$ resides. 

\begin{assumption}\label{assu:DR-D-rate-fixed-ref} 
	$\mathbb{P}_n \|\widehat m_a^{\widehat\lambda}(X)\circ  \widehat{\lambda}^{-1} \circ \lambda - \tilde m_a^\lambda(X)\|_{\lambda}^2 = \Op\big(W_2^2(\widehat{\lambda},\lambda)+\alpha_n^2+\nu_n^2\big)$ for $a=0,1$.
\end{assumption}

To state the last condition, we first define the concept of Donsker class. 
Consider a fixed $a \in \{0,1\}.$ For a real number $t\in\ldomain$, a function $\breve m^\lambda:\xdomain\rightarrow \wwlog_\lambda(\ws):=\{\y^{-1}\circ\lambda:\y\in \ws\}$  and a function $\breve \pi:\tdomain\rightarrow\real$,  we view the element $t\times \breve m^\lambda \times \breve \pi$ as a real-valued function defined on $(A,X,Y)$ by 
$$(t\times \breve m^\lambda \times \breve \pi)(A,X,Y)=\textstyle\dfrac{(aA+(1-a)(1-A))\{(Y^{-1}\circ\lambda)(t)-\breve m_a^\lambda(X)(t)\}}{a\breve\pi(X)+(1-a)(1-\breve\pi(X))}+\breve m_a^\lambda(X)(t),$$
and for a family $\mathscr F_a^\lambda$ of functions in the form of $t\times \breve m^\lambda \times \breve \pi$, we view  $\sqrt n\mathcal G_a (t\times \breve m^\lambda\times \breve \pi) :=\sqrt{n}(\splavg-\expect)\{(t\times \breve m^\lambda \times \breve \pi)(A,X,Y)\}$ as a real-valued random process indexed by functions in $\mathscr F_a^\lambda$. Let $L^\infty (\mathcal F_a^\lambda)$ denote the collection of real-valued functions $H$ defined on $\mathscr F_a^\lambda$ and satisfying  $\sup_{q\in\mathcal F_a^\lambda}|H(q)|<\infty$. We say $\mathscr F_a^\lambda$ is a Donsker class if $\sqrt{n}\mathbb G_a$ converges to a tight Gaussian measure on $L^\infty (\mathscr F_a^\lambda)$.

Let $\mathscr M_a^\lambda$ be a class of functions containing $m_a^\lambda$. For $g,h\in\mathscr M_a^\lambda$, define $\eta_i(g,h)=\|g(X_i)-h(X_i)\|_\lambda$ and $\eta^2(g,h)=n^{-1}\sum_{i=1}^n \eta_i^2(g,h)$. Conditional on $\mathbb X=(X_1,\ldots,X_n)$, $\eta$ defines a pseudo-distance function on $\mathscr M_a^\lambda$. Let $B_a(r;g)=\{h\in\mathscr M_a^\lambda:\eta(g,h)<r\}$ be a ball in $\mathscr M_a^\lambda$ with radius $r$, and $N_a(\delta,r,\eta)$  denote the smallest number of $\delta$-balls in the pseudo-metric space $(\mathscr M_a^\lambda ,\eta)$ that are required to cover $B_a(r;m_a)$.  Without loss of generality, we assume $N_a(\delta,r,\eta)$ is continuous in $\delta$ and $r$. Otherwise, we just redefine it with its continuous upper bound.  

The following Assumption \ref{assu:DR-fixed-ref} imposes some technical conditions on the estimators $\widehat\pi$ and $\widehat m_a^\lambda$. {Part \ref{assu:DR-stability-fixed-ref} assumes that each data point has an asymptotically equal contribution to the estimators, i.e., {there is no outlier in the sense that as the sample size grows, the influence of a single data point on the estimates $\widehat{\pi}$ and $\tilde{m}_{a ^\lambda}$ is negligible.} Part \ref{assu:DR-donsker-fixed-ref} restricts $\mathscr F_a^\lambda$ to be a Donsker class,  and part \ref{assu:DR-ep} limits the complexity of $\mathscr M_a^\lambda$ via a bound on the metric entropy, which enables us to employ empirical process theory to provide an upper bound on the convergence rate of $\widehat\Delta_{DR}^{\widehat\lambda}$.  These conditions are satisfied, for example, by a logistic model for $\pi$ and a simple linear regression model for $m_a^\lambda$.}

\begin{assumption}\label{assu:DR-fixed-ref}\,  
	\begin{enumerate}[label=\textup{(\alph*)}]
		\item\label{assu:DR-stability-fixed-ref} Stability of the estimators: For a constant $C_4>0$, $\sup_{1\leq i\leq n} \expect |\widehat\pi(X_i)-\widehat\pi_{-i}(X_i)|^2 \leq C_4n^{-1}$ and $\sup_{a}\sup_{1\leq i\leq n} \expect \|\tilde m_{a}^\lambda(X_i)-\tilde m_{a,-i}^\lambda(X_i)\|^2_\lambda \leq C_4n^{-1}$, where $\widehat\pi_{-i}$ and $\tilde m_{a,-i}^\lambda$ are the estimates of $\pi$ and $m_a^\lambda$ without using the $i$th subject, respectively.
		\item \label{assu:DR-donsker-fixed-ref}
		For $a=0,1$, the class $\mathscr F_a^\lambda$ is a Donsker class containing $t\times m^{\lambda,\ast}_a\times \pi^\ast$ for all $t$,  and with probability tending to one, $t\times\tilde m_a^\lambda\times \widehat\pi\in \mathscr F_a^\lambda$ for all $t$.
		\item \label{assu:DR-ep}{For $a=0,1$, for some fixed $K>0$, $\tilde m_a^\lambda\in \mathscr M_a^\lambda$,  $\widehat m_a^{\widehat\lambda}(\cdot)\circ\widehat\lambda^{-1}\circ\lambda\in\mathscr M_a^\lambda$,  and $\log N_a(\delta,r,\eta)\leq Kr\delta^{-1}$ for all $r,\delta>0$ almost surely.} 
	\end{enumerate}	
\end{assumption}

\begin{remark}
\label{remark:donsker}
    Assumption \ref{assu:DR-fixed-ref}   may fail in settings invoking machine learning methods, in which case the dimension of covariates $X$ is modelled as an increasing function of the sample size \citep{chernozhukov2018double}. Notably Theorem \ref{thm:AN-W-fixed-ref-cf} for the cross-fitting estimator   $\widehat\Delta_{CF}^{\widehat\lambda}$  does not require this assumption and thus can accommodate machine learning methods for modeling $\pi$ and $m_a^\lambda$.
\end{remark}

	Let $\vertiii{\tilde m_a^\lambda-m_a^\lambda}^2_\lambda=\int \|\tilde m_a^\lambda(x)-m_a^\lambda(x)\|_{\lambda}^2\diffop F_X(x)$,  the integrated squared error of  $\tilde m_a^\lambda$ for estimating $m_a^\lambda$, where $F_X$ denotes the probability measure induced by $X$. Similarly, the integrated squared error of  $\widehat\pi$ for estimating $\pi$ is denoted by $\|\widehat\pi-\pi\|_{2}^2=\int |\widehat\pi(x)-\pi(x)|^2\diffop F_X(x)$. Define $\varrho_\pi^4:=\varrho_\pi^4(n):=\expect \|\widehat\pi-\pi\|_{2}^4$ and  $\varrho_{m}^4:=\varrho_{m}^4(n):=\max\{\expect \vertiii{\tilde m_0^\lambda-m_0^\lambda}^4_\lambda,\expect \vertiii{\tilde m_1^\lambda-m_1^\lambda}^4_\lambda\}$. Let $L^2(\lambda)=\{h\in\real^{\ldomain}:\int_{\ldomain}|h|^2\diffop\lambda<\infty\}$ be endowed with the inner product $\langle h_1,h_2\rangle_\lambda=\int_{\ldomain} h_1 h_2\diffop\lambda$ and the induced norm $\|h\|_\lambda=\langle h,h\rangle_\lambda^{1/2}$. Finally, let 
	\begin{equation*}
	    \varphi(A,X,Y)=\textstyle\dfrac{A\{ Y^{-1}\circ \lambda -m_1^\lambda(X)\}}{\pi(X)}+m_1^\lambda(X) -\textstyle\dfrac{(1-A)\{Y^{-1}\circ \lambda -m_0^\lambda(X)\}}{1-\pi(X)}-m_0^\lambda(X).
	\end{equation*}

	The following Theorem \ref{thm:AN-W-fixed-ref}  shows that the estimator $\widehat\Delta_{DR}^{\widehat\lambda}$  enjoys the double robustness property whether $\widehat\lambda$ is a fixed and known distribution or is estimated from data, so that the convergence rate is $n^{-1/2}$ when either of $\varrho_\pi$ and $\varrho_m$ is of the order $n^{-1/2}$ and the other one is bounded. 
Moreover, one may use flexible non-parametric methods for estimating $m_a^{\widehat\lambda}, a=0,1$ and $\pi$, provided that the nonparametric convergence rates satisfy $\varrho_m\varrho_\pi=o(n^{-1/2})$. 
{Theorem \ref{thm:AN-W-fixed-ref}  also shows that $\widehat\Delta_{DR}^{\widehat\lambda}$ is an asymptotically linear estimator with influence function $\varphi(A,X,Y)-\expect\varphi(A,X,Y)$}. 
	
	\begin{theorem}\label{thm:AN-W-fixed-ref}{Suppose that both $\widehat\lambda$ and $\lambda$ are continuous distribution functions.} 
	If Assumptions \ref{assu:ignorability}--\ref{assu:DR-fixed-ref} hold with $\alpha_n=o(n^{-1/2})$ and $\nu_n=o(n^{-1/2})$,  then 
		\begin{enumerate}[label={(\roman*)}]
			\item\label{thm:AN-W-rate-fixed-ref}   	$\|\widehat\Delta_{DR}^{\lambda}\circ\widehat\lambda^{-1}\circ\lambda  - \Delta^\lambda \|_{\lambda}=\Op(n^{-1/2}+n^{-1/2}\varrho_m^{1/2}+n^{-1/2}\varrho_\pi+\varrho_{m}\varrho_\pi)$;
			\item\label{thm:AN-W-AN-fixed-ref} if $\varrho_m\varrho_\pi=o(n^{-1/2}), \varrho_m = o(1), \varrho_\pi = o(1)$, then  
			$\sqrt n \left(\widehat\Delta_{DR}^{\lambda}\circ\widehat\lambda^{-1}\circ\lambda - \Delta^\lambda \right)=\sqrt{n}(\mPn-\expect)\{\varphi(A,X,Y)\}+\op(1)$, and consequently $\sqrt n \left(\widehat\Delta_{DR}^{\lambda}\circ\widehat\lambda^{-1}\circ\lambda - \Delta^\lambda \right)$
			converges weakly to a centered Gaussian process in the space {$L^2(\ldomain;\lambda)$} with the same asymptotic distribution as $\sqrt{n}(\mPn-\expect)\{\varphi(A,X,Y)\}$.
		\end{enumerate}
	\end{theorem}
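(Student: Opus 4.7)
The plan is to exploit the isometry between the Wasserstein space $\ws$ and the cone of quantile functions sitting inside $L^2(\ldomain;\lambda)$: under this identification, every object of the form $Y^{-1}\circ\lambda$ becomes an element of a Hilbert space, so the analysis reduces to a classical doubly-robust problem for a Hilbert-valued outcome, plus three layers of perturbation (from $\widehat\lambda$ versus $\lambda$, from $\eY_i$ versus $\Y_i$, and from $\widehat m_a^{\widehat\lambda}$ versus $\tilde m_a^\lambda$). To organize this, first I would write
\begin{equation*}
\widehat{\Delta}_{DR}^{\widehat\lambda}\circ\widehat\lambda^{-1}\circ\lambda - \Delta^\lambda = \sum_{a=0,1}(2a-1)\Bigl(\widehat\mu_a^{-1,\widehat\lambda}\circ\widehat\lambda^{-1}\circ\lambda - \mu_a^{-1,\lambda}\Bigr),
\end{equation*}
and, for each $a\in\{0,1\}$, introduce an oracle estimator $\tilde\mu_a^{-1,\lambda}$ formed from the true $\Y_i$ and true $\lambda$ together with the nuisance estimates $\tilde m_a^\lambda$ and $\widehat\pi$, giving
\begin{equation*}
\widehat\mu_a^{-1,\widehat\lambda}\circ\widehat\lambda^{-1}\circ\lambda - \mu_a^{-1,\lambda} = \Bigl(\widehat\mu_a^{-1,\widehat\lambda}\circ\widehat\lambda^{-1}\circ\lambda - \tilde\mu_a^{-1,\lambda}\Bigr) + \Bigl(\tilde\mu_a^{-1,\lambda} - \mu_a^{-1,\lambda}\Bigr).
\end{equation*}

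For the first, plug-in difference I would bound each piece of the DR formula separately. The quantile-function isometry yields $\|\eY_i^{-1}\circ\widehat\lambda - \Y_i^{-1}\circ\lambda\|_\lambda \leq W_2(\eY_i,\Y_i)+\|\Y_i^{-1}\circ\widehat\lambda - \Y_i^{-1}\circ\lambda\|_\lambda$; the first summand is $\Op(\alpha_n+\nu_n)$ by Assumption \ref{assu:eY}, and the second is $\Op(W_2(\widehat\lambda,\lambda))=\Op(n^{-1/2}+\alpha_n+\nu_n)$ by Assumption \ref{assu:DR-lambda} and a change-of-variables argument (this is where boundedness of $\tdomain$ supplies a Lipschitz constant for composition). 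The IPW weights $1/\widehat f$ are uniformly bounded by $1/\epsilon$ under Assumption \ref{assu:DR-std}\ref{assu:DR-std-pi-bound}, and the outcome-regression contribution is controlled directly by Assumption \ref{assu:DR-D-rate-fixed-ref}. Together these show that the plug-in difference is $\Op(n^{-1/2})$ in $\|\cdot\|_\lambda$ under the assumed rate $\alpha_n=\nu_n=o(n^{-1/2})$, and is therefore absorbed into the leading $n^{-1/2}$ term.

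For the oracle difference I would run the standard doubly-robust linearization pointwise in $t\in\ldomain$:
\begin{equation*}
\tilde\mu_a^{-1,\lambda}(t) - \mu_a^{-1,\lambda}(t) = \mPn\varphi_a(A,X,Y)(t) + (\mPn-\expect)\{\xi_a^{\tilde m,\widehat\pi}(t)-\xi_a^{m^\ast,\pi^\ast}(t)\} + R_a(t),
\end{equation*}
where $\varphi_a$ is the efficient influence function for $\mu_a^{-1,\lambda}$ and $\xi_a^{m,\pi}$ denotes the DR integrand evaluated at nuisances $(m,\pi)$. The Donsker hypothesis in Assumption \ref{assu:DR-fixed-ref}\ref{assu:DR-donsker-fixed-ref}, combined with the uniform consistency of Assumption \ref{assu:DR-std}\ref{assu:DR-std-convergence-fixed-ref}, forces the middle empirical-process term to be $\op(n^{-1/2})$ uniformly in $t$ in the $\|\cdot\|_\lambda$ sense. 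A direct expansion using positivity of $\pi$ and $\widehat\pi$ produces the familiar product structure
\begin{equation*}
R_a = \pm\,\expect\Bigl[(\widehat\pi-\pi)(\tilde m_a^\lambda - m_a^\lambda)/\{\pi(1-\pi)\}\Bigr],
\end{equation*}
which by Cauchy--Schwarz gives the $\varrho_m\varrho_\pi$ rate, while splitting $\tilde m_a^\lambda - m_a^\lambda$ into $\tilde m_a^\lambda - m_a^{\lambda,\ast}$ and $m_a^{\lambda,\ast}-m_a^\lambda$ and invoking the stability and metric-entropy conditions in Assumption \ref{assu:DR-fixed-ref}\ref{assu:DR-stability-fixed-ref}\ref{assu:DR-ep} yields the additional $n^{-1/2}\varrho_m^{1/2}$ and $n^{-1/2}\varrho_\pi$ terms that appear in part (i).

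For part (ii), under the strengthened rate conditions all of the remainders above are $\op(n^{-1/2})$ in $\|\cdot\|_\lambda$, so $\sqrt n(\widehat\Delta_{DR}^{\widehat\lambda}\circ\widehat\lambda^{-1}\circ\lambda - \Delta^\lambda)=\sqrt n(\mPn-\expect)\varphi + \op(1)$ in $L^2(\ldomain;\lambda)$; boundedness of $\tdomain$ ensures $\varphi$ has finite second moment as an $L^2(\ldomain;\lambda)$-valued random element, and the Hilbert-space CLT delivers weak convergence to a centred Gaussian process with the claimed limit. I expect the main obstacle to be the plug-in perturbation step, where one must track the composition with $\widehat\lambda^{-1}\circ\lambda$ through every piece of the DR formula and verify that the empirical-process and double-robustness arguments, which are naturally set in $L^2(\ldomain;\lambda)$, survive the fact that the nominal quantile-function ``outcomes'' $\eY_i^{-1}\circ\widehat\lambda$ actually reside in the varying space $L^2(\ldomain;\widehat\lambda)$; this is where the interplay between Wasserstein geometry and the classical causal-inference machinery is most delicate.
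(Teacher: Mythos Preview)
Your overall architecture---reduce to a Hilbert-space DR problem via the quantile isometry, then separate an oracle piece from a plug-in perturbation---is essentially the paper's approach, just reorganised (the paper writes a single five-term decomposition I--V whose terms IV and V are your plug-in difference and whose terms I--III are your oracle analysis). Your treatment of the oracle part, the Donsker step, and the product remainder is fine in outline.

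The genuine gap is in part~(ii), specifically your claim that the plug-in difference is $o_p(n^{-1/2})$. Your argument bounds the piece
\[
\mathbb{P}_n\!\left[\Bigl(1-\tfrac{I(A=a)}{\widehat f(A\mid X)}\Bigr)\,D_a(X)\right],\qquad D_a=\widehat m_a^{\widehat\lambda}\circ\widehat\lambda^{-1}\circ\lambda-\tilde m_a^\lambda,
\]
by Cauchy--Schwarz: $(\mathbb{P}_n|w|^2)^{1/2}(\mathbb{P}_n\|D_a\|_\lambda^2)^{1/2}$. But Assumption~\ref{assu:DR-D-rate-fixed-ref} together with Assumption~\ref{assu:DR-lambda} only gives $\mathbb{P}_n\|D_a\|_\lambda^2=O_p(n^{-1})$, not $o_p(n^{-1})$, because $W_2^2(\widehat\lambda,\lambda)$ is itself only $O_p(n^{-1})$. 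Hence your bound is $O_p(n^{-1/2})$, which suffices for part~(i) but does \emph{not} vanish after multiplying by $\sqrt n$, so the asymptotic linearity in part~(ii) does not follow.

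The paper closes this gap (its Claim~IV) by first replacing $\widehat\pi$ by $\pi$ in the weight, producing a genuinely centred sum $\mathbb{P}_n[(1-A/\pi(X))D_a(X)]$; this is where the metric-entropy condition in Assumption~\ref{assu:DR-fixed-ref}\ref{assu:DR-ep} is used a second time, via a Hilbert-valued empirical-process bound, to get $o_p(n^{-1/2})$ from the smallness of $\eta(\widehat m_a^{\widehat\lambda}\circ\widehat\lambda^{-1}\circ\lambda,\tilde m_a^\lambda)$. The residual from swapping $\widehat\pi$ for $\pi$ is then a product $|\widehat\pi-\pi|\cdot\|D_a\|_\lambda$, which is $O_p(\varrho_\pi\cdot n^{-1/2})=o_p(n^{-1/2})$ under $\varrho_\pi=o(1)$. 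Without this splitting and the second invocation of the entropy argument, your plug-in step cannot be upgraded from $O_p(n^{-1/2})$ to $o_p(n^{-1/2})$, and part~(ii) remains unproved.
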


\begin{remark}\label{rem:cov}
    The covariance function of the limit distribution of $\sqrt{n}(\mPn-\expect)\{\varphi(A,X,Y)\}$ can be estimated from data, as follows. 
    Let $V=\textstyle\dfrac{A\{ Y^{-1}\circ \lambda -m_1^\lambda(X)\}}{\pi(X)}+m_1^\lambda(X) -\textstyle\dfrac{(1-A)\{Y^{-1}\circ \lambda -m_0^\lambda(X)\}}{1-\pi(X)}-m_0^\lambda(X)$ and $G=\sqrt{n}(\mPn-\expect)\{\varphi(A,X,Y)\}$. Then $G$ and its limit distribution share the same covariance function of $V$. Letting $\widehat V_i=\textstyle\dfrac{A_i\{ \eY^{-1}_i\circ \widehat\lambda -\widehat m_1^{\widehat\lambda}(X_i)\}}{\widehat\pi(X_i)}+\widehat m_1^{\widehat\lambda}(X_i) -\textstyle\dfrac{(1-A_i)\{\eY^{-1}_i\circ \widehat\lambda -\widehat m_0^{\widehat\lambda}(X_i)\}}{1-\widehat\pi(X_i)}-\widehat m_0^{\widehat\lambda}(X_i)$, we can use the sample covariance $\widehat C(s,t)=n^{-1}\sum_{i=1}^n \{V_i(s)-\bar V(s)\}\{V_i(t)-\bar V(t)\}$ as an estimate of the covariance function of $V$, where $s,t\in\ldomain$ and $\bar V=n^{-1}\sum_{i=1}^n V_i$.
\end{remark}
\begin{remark}\label{rem:scb}
    The above asymptotic result enables inference on $\Delta^{\widehat\lambda}$. Take the case that $\lambda$ is fixed and known so that $\widehat{\lambda}=\lambda$ for example. An approximate simultaneous confidence band (SCB) in the form of $[\widehat\Delta_{DR}^{\lambda}(t) -q_{\alpha/2}n^{-1/2},\widehat\Delta_{DR}^{\lambda}(t) + q_{\alpha/2}n^{-1/2}]$ with a constant $q_{\alpha/2}$ for all $t\in\ldomain$, i.e., $P\{\sup_{t\in\ldomain}\sqrt{n}|\widehat\Delta_{DR}^{\lambda}(t) - \Delta^\lambda(t)|\leq q_{\alpha/2}\}\approx 1-\alpha$ for a significance level $\alpha$, can be derived for $\Delta^\lambda$, by estimating $q_{\alpha/2}$ via a resampling strategy. Specifically, we draw $B$ realizations $G_1,\ldots,G_B$, for example, $B=1000$, from the centered Gaussian process with the covariance $\widehat C$, and for each realization $G_j$ we compute $g_j=\sup_{t\in\ldomain}|G_j(t)|$. Then $q_{\alpha/2}$ is estimated by the $1-\alpha/2$ empirical quantile $\widehat q_{\alpha/2}$ of $g_1,\ldots,g_B$, and the approximate SCB is given by $[\widehat\Delta_{DR}^{\lambda}(t)-\widehat q_{\alpha/2}n^{-1/2},\widehat\Delta_{DR}^{\lambda}(t)+\widehat q_{\alpha/2}n^{-1/2}]$. With the derived SCB, one can also test the null hypothesis $\Delta^\lambda\equiv 0$, for example, by rejecting the null at the significance level $\alpha$ if $0\not \in [\widehat\Delta_{DR}^{\lambda}(t)-\widehat q_{\alpha/2}n^{-1/2},\widehat\Delta_{DR}^{\lambda}(t)+\widehat q_{\alpha/2}n^{-1/2}]$ for some $t\in\ldomain$. Note that the same procedure applies to the estimator $\widehat\Delta^{\widehat\lambda}$ and the cross-fitting estimators in light of the theorems in the sequel.
\end{remark}

	{Next we turn to the cross-fitting estimator \eqref{eq:dr-cf} and show that it enjoys  double robustness and asymptotic normality properties without the technical assumption \ref{assu:DR-fixed-ref} \citep[e.g.][]{chernozhukov2018double}.	For this, we require that the sizes of the partitions are of the same order, quantified by Assumption \ref{assu:cf-sample-size}, and we tailor Assumption \ref{assu:DR-D-rate-fixed-ref} to the cross-fitting estimator by Assumption \ref{assu:DR-D-rate-fixed-ref-cf}. Let $\tilde m_{a,k}^\lambda$, the counterpart of $\widehat m_{a,k}^{\widehat\lambda}$, be estimated by using the outcomes $Y_i$ (instead of $\eY_i$) and the reference distribution $\lambda$ (instead of $\widehat\lambda$). 
	}

    \begin{assumption}\label{assu:DR-D-rate-fixed-ref-cf} 
    	$\vertiii{\widehat m_{a,k}^{\widehat\lambda_k}(\cdot)\circ  \widehat{\lambda}_k^{-1} \circ \lambda - \tilde m_{a,k}^\lambda}_{\lambda}^2 = \Op\big(W_2^2(\widehat{\lambda}_k,\lambda)+\alpha_n^2+\nu_n^2\big)$ for $a=0,1$ and $k=1,\ldots,K$.
	\end{assumption}

	\begin{assumption}\label{assu:cf-sample-size}
		There exist constants $c_1$ and $c_2$ such that $0<c_1\leq n_k/n \leq c_2 < 1$ for all $n$ and $k=1,\ldots,K$.
	\end{assumption}

	\begin{theorem}\label{thm:AN-W-fixed-ref-cf}{Suppose that both $\widehat\lambda$ and $\lambda$ are continuous distribution functions.} 
	If Assumptions \ref{assu:ignorability}--\ref{assu:eY} and \ref{assu:DR-D-rate-fixed-ref-cf}--\ref{assu:cf-sample-size} hold with $\alpha_n=o(n^{-1/2})$ and $\nu_n=o(n^{-1/2})$, and additionally, Assumptions \ref{assu:DR-lambda}--\ref{assu:DR-std} hold for $\widehat m_{a,k}^{\widehat\lambda}$, $\tilde m_{a,k}^{\lambda}$, $\widehat\pi$ and $\widehat\lambda_k$ for $k=1,\ldots,K$, then for the estimator defined in \eqref{eq:dr-cf}, we have
	\begin{enumerate}[label={(\roman*)}]
		\item\label{thm:AN-W-rate-fixed-ref-cf} 	$\|\widehat\Delta_{CF}^{\widehat\lambda} \circ \widehat\lambda^{-1} \circ {\lambda}  - \Delta^\lambda \|_{\lambda}=\Op(n^{-1/2}+n^{-1/2}\varrho_m+n^{-1/2}\varrho_\pi+\varrho_{m}\varrho_\pi)$;
		\item\label{thm:AN-W-AN-fixed-ref-cf} if $\varrho_m\varrho_\pi=o(n^{-1/2}), \varrho_m = o(1), \varrho_\pi = o(1)$, then  
			$\sqrt n \left(\widehat\Delta_{CF}^{\widehat\lambda} \circ \widehat\lambda^{-1} \circ {\lambda} - \Delta^\lambda \right)=\sqrt{n}(\mPn-\expect)\{\varphi(A,X,Y)\}+\op(1)$, and consequently $\sqrt n \left(\widehat\Delta_{CF}^{\widehat\lambda} \circ \widehat\lambda^{-1} \circ {\lambda} - \Delta^\lambda \right)$
			converges weakly to a centered Gaussian process in the space {$L^2(\ldomain;\lambda)$} with the same asymptotic distribution as $\sqrt{n}(\mPn-\expect)\{\varphi(A,X,Y)\}$.
	\end{enumerate}
	\end{theorem}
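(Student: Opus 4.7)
The plan is to mirror the proof of Theorem \ref{thm:AN-W-fixed-ref} while exploiting sample splitting to dispose of the Donsker hypothesis of Assumption \ref{assu:DR-fixed-ref}. First I would reduce the global error to a sum of fold-specific errors: using definition \eqref{eq:dr-cf} together with the identity $\Delta^{\widehat\lambda_k}\circ\widehat\lambda_k^{-1}\circ\lambda=\Delta^\lambda$ that follows from Definition \ref{definition:causal}, one has
$$\widehat\Delta_{CF}^{\widehat\lambda}\circ\widehat\lambda^{-1}\circ\lambda - \Delta^\lambda \;=\; \sum_{k=1}^K\frac{n_k}{n}\bigl(\widehat\Delta_{CF,k}^{\widehat\lambda_k}\circ\widehat\lambda_k^{-1}\circ\lambda - \Delta^\lambda\bigr),$$
so it suffices to analyze each fold separately and aggregate using Assumption \ref{assu:cf-sample-size}.

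Within fold $k$, let $\psi_a(A,X,Y)=m_a^\lambda(X)+\{I(A=a)/f(A\mid X)\}\{Y^{-1}\circ\lambda-m_a^\lambda(X)\}$ and let $\widetilde\psi_{a,k}$ denote the transported cross-fitted summand built from $\widehat m_{a,k}^{\widehat\lambda_k}\circ\widehat\lambda_k^{-1}\circ\lambda$, $\widehat f_k$, and $\widehat Y_i^{-1}\circ\lambda$. I would then decompose
$$\widehat\mu_{a,k}^{-1,\widehat\lambda_k}\circ\widehat\lambda_k^{-1}\circ\lambda - \mu_a^{-1,\lambda} \;=\; S_k^{(a)} + U_k^{(a)} + B_k^{(a)},$$
where $S_k^{(a)}=n_k^{-1}\sum_{i\in\mathscr D_k}\{\psi_a(A_i,X_i,Y_i)-\mu_a^{-1,\lambda}\}$, $U_k^{(a)}$ is the centered conditional empirical process of $\widetilde\psi_{a,k}-\psi_a$ over $\mathscr D_k$, and $B_k^{(a)}$ is the conditional expectation of $\widetilde\psi_{a,k}-\psi_a$ given $\mathscr D_{-k}$. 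The key observation is that, conditional on $\mathscr D_{-k}$, the nuisances $\widehat m_{a,k}^{\widehat\lambda_k},\widehat f_k,\widehat\lambda_k$ are deterministic, so $U_k^{(a)}$ is a mean-zero average of $n_k$ conditionally iid random objects in $L^2(\ldomain;\lambda)$. A direct second-moment calculation gives $\expect\{\|U_k^{(a)}\|_\lambda^2\mid\mathscr D_{-k}\}\le n_k^{-1}\expect\{\|\widetilde\psi_{a,k}-\psi_a\|_\lambda^2\mid \mathscr D_{-k}\}$, and Assumptions \ref{assu:eY}, \ref{assu:DR-lambda}, \ref{assu:DR-std} and \ref{assu:DR-D-rate-fixed-ref-cf} force the right-hand side to be $\op(n^{-1})$; this single step replaces the Donsker argument used in Theorem \ref{thm:AN-W-fixed-ref}. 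For the bias piece $B_k^{(a)}$, the standard doubly-robust algebra applies: inserting and subtracting the true nuisances and using iterated expectation under Assumption \ref{assu:ignorability} factors the bias as an inner product between propensity and outcome-regression errors, which yields an $\Op(\varrho_m\varrho_\pi)$ bound by Cauchy--Schwarz, while the extra terms from replacing $(Y_i,\lambda)$ by $(\widehat Y_i,\widehat\lambda_k)$ are controlled at the rate $\alpha_n+\nu_n+W_2(\widehat\lambda_k,\lambda)=\op(n^{-1/2})$ via Assumptions \ref{assu:eY}, \ref{assu:DR-lambda}, and \ref{assu:DR-D-rate-fixed-ref-cf}.

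Summing over $k$ with weights $n_k/n$ and invoking Assumption \ref{assu:cf-sample-size}, $\sum_k(n_k/n)S_k^{(a)}$ coincides with the full empirical average $\mPn\{\psi_a-\mu_a^{-1,\lambda}\}$, so taking the difference between $a=1$ and $a=0$ produces the influence-function expansion $\sqrt{n}(\mPn-\expect)\varphi+\op(1)$. Part (i) then follows by combining $S_k^{(a)}=\Op(n_k^{-1/2})$ with the bias and remainder bounds, and part (ii) is obtained by the CLT in $L^2(\ldomain;\lambda)$ applied to the leading average. The main obstacle is the careful bookkeeping of the compositions $\circ\widehat\lambda_k^{-1}\circ\widehat\lambda$ and $\circ\widehat\lambda^{-1}\circ\lambda$ in the $\|\cdot\|_\lambda$ norm, since the estimated references enter nonlinearly. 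The key tools are the isometry between $\ws$ and the space of quantile functions, which converts $W_2$ errors into $L^2$ errors on the quantile side, together with Assumption \ref{assu:DR-D-rate-fixed-ref-cf}, which supplies transport-adjusted convergence of $\widehat m_{a,k}^{\widehat\lambda_k}$ at the required rate. Verifying that these transport compositions telescope cleanly so that the aggregated leading term truly lives in $L^2(\ldomain;\lambda)$ with the claimed influence function, without spurious Jacobian factors, will be the most delicate piece of the argument.
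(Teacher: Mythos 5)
Your proposal is essentially the same as the paper's proof. The paper also reduces to fold-specific errors, then decomposes each fold into (I) the discrepancy between the estimated and limiting nuisance empirical process, (II) the centered empirical process of the limiting influence function, (III) the propensity--outcome bias product, and (IV)--(V) remainders from replacing $(Y_i,\lambda)$ by $(\widehat Y_i,\widehat\lambda_k)$; your $S_k^{(a)}$ plays the role of their term II, your $U_k^{(a)}$ of the diagonal part of their term I plus the empirical-process pieces of IV--V, and your $B_k^{(a)}$ of their term III plus the off-diagonal part of I and the conditional expectations of IV--V. The crucial step you identify --- conditioning on $\mathscr D_{-k}$ to turn the within-fold summands into conditionally i.i.d.\ random objects, so that a second-moment bound replaces the Donsker hypothesis of Assumption \ref{assu:DR-fixed-ref} --- is exactly what the paper's Claim following Theorem \ref{thm:AN-W-fixed-ref-cf} does when it computes $\expect\|\textup{I}\|_\lambda^2$ by splitting into diagonal and off-diagonal contributions. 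Your centered/bias split is arguably a cleaner way to organize the same bookkeeping.

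One small imprecision: you state that $\expect\{\|U_k^{(a)}\|_\lambda^2\mid\mathscr D_{-k}\}\le n_k^{-1}\expect\{\|\widetilde\psi_{a,k}-\psi_a\|_\lambda^2\mid \mathscr D_{-k}\}$ is $\op(n^{-1})$, having defined $\psi_a$ with the \emph{true} nuisances $m_a^\lambda$ and $f$. Under Assumption \ref{assu:DR-std}\ref{assu:DR-std-convergence-fixed-ref} the nuisance estimates converge only to the possibly-misspecified limits $m_a^{\lambda,\ast}$ and $\pi^\ast$, so in the genuinely doubly-robust regime of part \ref{thm:AN-W-rate-fixed-ref-cf} the conditional variance of $\widetilde\psi_{a,k}-\psi_a$ need not vanish, and the bound is $O(n^{-1}(\varrho_\pi^2+\varrho_m^2+b_\pi^2+b_m^2))$ rather than $\op(n^{-1})$. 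This is still $\Op(n^{-1/2})$ for $U_k^{(a)}$ and therefore still yields the claimed rate in part \ref{thm:AN-W-rate-fixed-ref-cf}, and under the part \ref{thm:AN-W-AN-fixed-ref-cf} conditions $b_\pi=b_m=0$ so your $\op$ claim becomes correct; you should just state the rate consistently with whichever version of $\psi_a$ (true vs.\ limiting nuisances) you fix, as the paper does by working with $m_a^{\lambda,\ast}$ and $\pi^\ast$ in its term II.
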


    The above results show that the cross-fitting estimator $\widehat \Delta_{CF}^{\widehat\lambda}$ also enjoys  double robustness. A simultaneous confidence band for $\widehat \Delta_{CF}^{\widehat\lambda}$ can be constructed  using the method described in Remark \ref{rem:scb}. 
    For the estimator $\widehat{\Delta}^{\widehat{\lambda},med}_{CF}$ in \eqref{eq:CF-median}, as in \cite{chernozhukov2018double}, the covariance function $C(\cdot,\cdot)$ of the process $\sqrt n(\mPn-\expect)\{\varphi(A,X,Y)\}$ may be estimated as follows. {For $r=1,2,\ldots,R$, let $\tilde C^r(s,t)=\widehat{C}^{r}(s,t)+\{\widehat\Delta_{CF}^{\widehat{\lambda},r}(s)-\widehat\Delta_{CF}^{\widehat{\lambda},med}(s)\}\{\widehat\Delta_{CF}^{\widehat{\lambda},r}(t)-\widehat\Delta_{CF}^{\widehat{\lambda},med}(t)\}$, where $\widehat\Delta_{CF}^{\widehat{\lambda},r}$ is given at the end of Section \ref{sec:identifiability}, and $\widehat C^r$ is the estimated covariance for $\widehat\Delta_{CF}^{\widehat{\lambda},r}$ as per Remark \ref{rem:cov}. Let $\hat r$ be the index of $\tilde C^r$ whose operator norm is a median among $\tilde C^1,\ldots,\tilde C^R$. Then we use $\tilde C^{\hat r}$  as the estimate of the covariance function  of $\widehat{\Delta}^{\widehat{\lambda},med}_{CF}$}.

	\section{Simulation Studies}\label{sec:simulation}
	
	Our simulation data consist of $n$ independent samples from the joint distribution of $(X,A, Y).$
	The confounder $ X$ follows a uniform distribution on $ [-1, 1] $. Conditional on $X$, the treatment $A$ follows a Bernoulli distribution with mean $P(A=1\mid X)=\textnormal{expit}(1+ X)$, where $\textnormal{expit}(\cdot) = \exp(\cdot)/(1+\exp(\cdot))$.
	The outcome $Y$, which is a random distribution function, is generated through the corresponding quantile function $Y^{-1}(\alpha) = (- \expect(A) + A  + X + \epsilon) \sin(\pi \alpha)/8 + \alpha, \alpha\in  [0,1]$, 
	where $\epsilon$ is independently generated from a uniform distribution on $ [-0.5, 0.5] $.  Note that for any realization of $X$ and $A$, $Y^{-1}(0) = 0, Y^{-1}(1) = 1,$ and   $Y^{-1}(\alpha)$ is a continuous and strictly increasing function of $\alpha,$ so that $Y^{-1}$ is a quantile function for a continuous random variable taking values in $\mI = [0,1].$  It follows that $Y \in \ws.$ We are interested in estimating the causal effect at the reference distribution $\mu_0$,  whose true value is $\Delta^{\mu_0}(t) = \sin(\pi t) / 8, t\in \mI.$

	The sample sizes are $ n=50, 200, 1000 $. {For each subject  $i = 1,\ldots, n$, we assume that we have access to 1001 independent and identically distributed observations sampled from the distribution  function $ \Y_i(t), t\in [0,1]$.  We estimated the outcomes $Y_i(t)$ by the empirical cumulative distribution function $\widehat{Y}_i(t)$ based on these observations.}

	We considered two  specifications for the outcome regression model: a linear regression model with predictor  $ X $  in $m_A(X)$ (correct) and one with predictor  $ X^2 $  (incorrect), and 
	two specifications for the
	propensity score model: a logistic regression model with predictor $X$ (correct) and one with predictor $X^2$ (incorrect). The estimation error is quantified using two measures: (1) bias of difference in medians, i.e., $\widehat\Delta^{\widehat{\mu}_0}(\widehat{\mu}_0^{-1}(0.5)) - \Delta^{\mu_0}(\mu_0^{-1}(0.5))$; (2) root mean integrated squared error   under the reference distribution $\mu_0$, i.e.,  $\|\widehat\Delta^{\widehat{\mu}_0}\circ \widehat{\mu}_0^{-1}\circ \mu_0 -\Delta^{\mu_0}\|_{\mu_0}$.

	{To illustrate  double robustness of the estimators $\widehat\Delta_{DR}^{\widehat\mu_0}$ and $\widehat\Delta_{CF}^{\widehat\mu_0,med}$, we compare them with the inverse probability weighting (IPW) and outcome regression (OR) estimators, defined by $
		\widehat{\Delta}^{\widehat{\mu}_0}_{OR} =  \sum\limits_{a=0,1} (2a-1) \mathbb{P}_n \widehat{m}_a^{\widehat\mu_0}(X)
		$
		and 
		$
		\widehat{\Delta}_{IPW}^{\widehat{\mu}_0} =  \sum\limits_{a=0,1} (2a-1) \mathbb{P}_n {I(A=a) (\widehat{\Y}^{-1}\circ \widehat{\mu_0})}/{\widehat{f}(A \mid X)}.
		$
	}
{The cross-fitting estimator $\widehat{\Delta}^{\widehat\mu_0,med}_{CF}$ is based on the median of cross-fitting estimators from $100$ random splits, where for each random split, we consider the 5-fold cross-fitting.}

	\Cref{table2} summarizes the simulation results based on 1000 Monte Carlo replicates.
	The bias of difference in medians  of the doubly robust estimator becomes closer to zero as the sample size increases when either the outcome regression or propensity score model is correct, thus confirming ``double robustness.'' In comparison, neither the OR nor IPW estimator has the double robustness property: When the corresponding model is misspecified, their bias of difference in medians can be large even with a sample size of 1000. {Similar results hold for root mean integrated squared error.}  We further note that when the outcome resides in a Euclidean space, it is well-known that when both models are correct, the standard error of the OR estimator is no larger than that of the DR estimator, which is in turn no larger than that of the IPW estimator. One can see a similar phenomenon  from Table \ref{table2}, where the outcome is a random distribution function residing in $\ws$.  Although the cross-fitting estimator $\widehat{\Delta}^{\widehat{\mu}_{0},med}_{CF}$ is  more appealing theoretically, for the setting we consider here, $\widehat{\Delta}^{\widehat{\mu}_{0}}_{DR}$ has better finite sample performance, especially when the sample size is small. 
	
	We assess the finite-sample coverage of the proposed confidence bands in Remark \ref{rem:scb} when both models are correctly specified. For the DR method, the coverage probabilities of the $95\%$ SCB are $88.4\%$, $91.5\%$, and $ 92.4\%$, respectively for $n=50,200,1000$, based on $1000$ Monte Carlo replicates. For the CF method, the coverage probabilities are $80.8\%$, $91.4\%$, and $ 92.4\%$, respectively for $n=50,200,1000$. {These coverage probabilities are reasonably close to the nominal level considering the  difficulty to
derive effective confidence bands for functional objects.}

	\begin{table}
		\begin{center}
			\caption{Bias of difference in medians $\times 100$ (standard error $\times 100$) and root mean integrated squared error $\times 100$ (standard error $\times 100$)   of four proposed estimators: the outcome regression  estimator $\widehat{\Delta}^{\widehat{\mu}_{0}}_{OR}$, the inverse probability weighting   estimator $\widehat{\Delta}^{\widehat{\mu}_{0}}_{IPW}$, the doubly robust   estimator $\widehat{\Delta}^{\widehat{\mu}_{0}}_{DR}$, and the cross-fitting   estimator $\widehat{\Delta}^{\widehat{\mu}_{0},med}_{CF}$}
			\bigskip
			\label{table2}
			\begin{tabular}{rcccccc}
				\toprule
				Estimator  &       \multicolumn{2}{c} { Model} &  \multicolumn{3}{c} {Sample size} \\
				\cmidrule(r){2-3} \cmidrule(l){4-6}
				& \multicolumn{1}{c}{PS} & \multicolumn{1}{c}{OR} &   \multicolumn{1}{c}{{$n=50$}} & \multicolumn{1}{c}{$n=200$} & {$n=1000$} &  \\
				\midrule
				\multicolumn{4}{l}{Bias of difference in medians $\times 100$  }   & & \\[5pt]
				$\widehat{\Delta}^{\widehat{\mu}_{0}}_{OR}$  &  $-$ & $\checkmark$	&0.010(0.037) &$-$0.036(0.018) &0.008(0.008)  \\
				$\widehat{\Delta}^{\widehat{\mu}_{0}}_{OR}$  &  $-$ & $\times$ &3.898(0.075) &3.787(0.037) &3.819(0.017)  \\ [5pt]  
				$\widehat{\Delta}^{\widehat{\mu}_{0}}_{IPW}$  & $\checkmark$  & $-$ &0.433(0.157) &0.008(0.045) &0.021(0.018)  \\
				$\widehat{\Delta}^{\widehat{\mu}_{0}}_{IPW}$  & $\times$ & $-$ &3.975(0.084) &3.706(0.036) &3.738(0.016)  \\[5pt]  
				$\widehat{\Delta}^{\widehat{\mu}_{0}}_{DR}$  & $\checkmark$   & $\checkmark$ &0.005(0.04) &$-$0.033(0.019) &0.013(0.008)  \\
				$\widehat{\Delta}^{\widehat{\mu}_{0}}_{DR}$  & $\checkmark$   & $\times$ &0.194(0.057) &$-$0.001(0.022) &0.020(0.010)  \\
				$\widehat{\Delta}^{\widehat{\mu}_{0}}_{DR}$  & $\times$ & $\checkmark$  &0.015(0.038) &$-$0.038(0.018) &0.009(0.008)  \\
				$\widehat{\Delta}^{\widehat{\mu}_{0}}_{DR}$  & $\times$ & $\times$ &3.881(0.074) &3.704(0.036) &3.737(0.016)  \\ [5pt]
				$\widehat{\Delta}^{\widehat{\mu}_{0},med}_{CF}$ & $\checkmark$   & $\checkmark$ &$-$0.066(0.091) &-0.032(0.019) & 0.013(0.008) \\ 
				$\widehat{\Delta}^{\widehat{\mu}_{0},med}_{CF}$  & $\checkmark$   & $\times$ &$-$2.151(0.141) &$-$0.383(0.025) & $-$0.051(0.010) \\
				$\widehat{\Delta}^{\widehat{\mu}_{0},med}_{CF}$  & $\times$ & $\checkmark$  &0.014(0.047) &$-$0.038(0.018) & 0.009(0.008) \\
				$\widehat{\Delta}^{\widehat{\mu}_{0},med}_{CF}$  & $\times$ & $\times$ &4.193(0.108) &3.726(0.036) & 3.741(0.016) \\[15pt]
			
				\multicolumn{4}{l}{Root mean integrated squared error $\times 100$  }   & & \\[5pt]
				$\widehat{\Delta}^{\widehat{\mu}_{0}}_{OR}$  & $-$  & $\checkmark$ &0.695(0.016) &0.339(0.008) &0.150(0.003)  \\
				$\widehat{\Delta}^{\widehat{\mu}_{0}}_{OR}$  &  $-$ & $\times$ &2.941(0.05) &2.773(0.027) &2.796(0.012)  \\[5pt]
				$\widehat{\Delta}^{\widehat{\mu}_{0}}_{IPW}$  & $\checkmark$  & $-$ &2.912(0.108) &0.981(0.029) &0.425(0.009)  \\ 
				$\widehat{\Delta}^{\widehat{\mu}_{0}}_{IPW}$  & $\times$ & $-$ &3.176(0.057) &2.715(0.026) &2.736(0.012)  \\[5pt] 
				$\widehat{\Delta}^{\widehat{\mu}_{0}}_{DR}$  & $\checkmark$   & $\checkmark$ &0.740(0.017) &0.348(0.008) &0.156(0.004)  \\ 
				$\widehat{\Delta}^{\widehat{\mu}_{0}}_{DR}$  & $\checkmark$   & $\times$  &0.991(0.028) &0.409(0.010) &0.183(0.004)  \\
				$\widehat{\Delta}^{\widehat{\mu}_{0}}_{DR}$  & $\times$ & $\checkmark$  &0.709(0.016) &0.342(0.008) &0.150(0.003)  \\
				$\widehat{\Delta}^{\widehat{\mu}_{0}}_{DR}$  & $\times$ & $\times$  &2.942(0.048) &2.712(0.026) &2.736(0.012)  \\ [5pt]
				$\widehat{\Delta}^{\widehat{\mu}_{0},med}_{CF}$ & $\checkmark$   & $\checkmark$ &0.981(0.054) &0.352(0.008) & 0.156(0.004) \\ 
				$\widehat{\Delta}^{\widehat{\mu}_{0},med}_{CF}$  & $\checkmark$ & $\times$   &2.163(0.107) &0.495(0.013) & 0.188(0.004)  \\
				$\widehat{\Delta}^{\widehat{\mu}_{0},med}_{CF}$  & $\times$ & $\checkmark$  &0.834(0.047) &0.343(0.008) & 0.150(0.003) \\
				$\widehat{\Delta}^{\widehat{\mu}_{0},med}_{CF}$ & $\times$ & $\times$ &3.289(0.070) &2.728(0.026) & 2.738(0.012) \\
				\bottomrule
			\end{tabular}
		\end{center}
	\end{table}  
	
	We further consider the data-adaptive DR and CF estimators for two scenarios. Scenario 1 is exactly the same as the previous simulation. In Scenario 2, both the propensity score model and the outcome regression model are nonlinear functions of $X$. In particular, we set $P(A=1\mid X)=\textnormal{expit}(1+ \sin(\pi X))$ and $Y^{-1}(\alpha) = (- \expect(A) + A  + \sin(\pi X) + \epsilon) \sin(\pi \alpha)/8 + \alpha, \alpha\in  [0,1]$, i.e. we replace the term $X$ in these two models by $ \sin(\pi X)$. The remaining settings are the same as those in Scenario 1. In both scenarios, we fit the outcome regression model using smoothing spline, implemented by the {\tt R} function {\tt smooth.spline}, while for the propensity score model, we consider the logistic smoothing spline fit, implemented by the {\tt R} function {\tt gssanova} in package {\tt gss}. \label{page:details} The default tuning methods of these {\tt R} functions are adopted, namely, generalized cross-validation for {\tt smooth.spline} and cross-validation for {\tt gssanova}; see the corresponding {\tt R} packages for more details.   The results based on 1000 Monte Carlo replicates are summarized in Table \ref{tabledataadaptive}. From the results, one can see that for Scenario 1, the data-adaptive methods work reasonably well, although not as good as the method where we correctly specify both the outcome and propensity score models parametrically. {When both  the underlying true outcome and propensity score models are nonlinear, the root mean integrated squared errors of the data-adaptive methods decay as the sample size increases, suggesting  consistency of these methods for estimating the average treatment effect.}

	\begin{table}
		\begin{center}
			\caption{Bias of difference in medians $\times 100$ (standard error $\times 100$) and root mean integrated squared error $\times 100$ (standard error $\times 100$)   of data-adaptive  doubly robust estimator $\widehat{\Delta}^{\widehat{\mu}_{0}}_{DRA}$ and the data-adaptive cross-fitting   estimator $\widehat{\Delta}^{\widehat{\mu}_{0},med}_{CFA}$ }
			\bigskip
			\label{tabledataadaptive}
			\begin{tabular}{rccccc}
				\toprule
				Estimator  & Scenario &  \multicolumn{3}{c} {Sample size} \\
				\cmidrule(l){3-5}
				& &   \multicolumn{1}{c}{{$n=50$}} & \multicolumn{1}{c}{$n=200$} & {$n=1000$} &  \\
				\midrule
				\multicolumn{4}{l}{Bias of difference in medians $\times 100$  }   & & \\[5pt]
				$\widehat{\Delta}^{\widehat{\mu}_{0}}_{DRA}$  & 1 & $-$0.047(0.127) &$-$0.096(0.046) &0.012(0.008)  \\
				$\widehat{\Delta}^{\widehat{\mu}_{0}}_{DRA}$  & 2 &$-$0.205(0.139) &$-$0.010(0.042) &0.030(0.008)  \\ [5pt]  
				$\widehat{\Delta}^{\widehat{\mu}_{0}}_{CFA}$  & 1 &$-$2.505(0.534) &$-$0.469(0.035) &$-$0.102(0.011)  \\
				$\widehat{\Delta}^{\widehat{\mu}_{0}}_{CFA}$  & 2  &$-$1.206(0.788) &$-$0.808(0.122) &0.180(0.021)  \\
\multicolumn{4}{l}{Root mean integrated squared error $\times 100$  }   & & \\[5pt]
				$\widehat{\Delta}^{\widehat{\mu}_{0}}_{DRA}$  & 1 &1.604(0.091) &0.545(0.044) &0.156(0.004)  \\
				$\widehat{\Delta}^{\widehat{\mu}_{0}}_{DRA}$  & 2 &1.867(0.096) &0.686(0.153) &0.161(0.004)  \\ [5pt]  
				$\widehat{\Delta}^{\widehat{\mu}_{0}}_{CFA}$  & 1 &3.648(0.355) &0.767(0.024) &0.244(0.006)  \\
				$\widehat{\Delta}^{\widehat{\mu}_{0}}_{CFA}$  & 2 &5.111(0.894) &2.202(0.183) &0.568(0.012)  \\
				\bottomrule
			\end{tabular}
		\end{center}
	\end{table}

	\section{Data Application}\label{sec:app}

	 Behavioral scientists are often interested in evaluating the effects of potential risk factors, such as marriage, on physical activity patterns \citep[e.g.][]{king1998effects}. In this section, we apply our proposed method to estimate the causal effect of marriage on physical activity levels, with data 
	obtained from the National Health and Nutrition Examination Survey (NHANES) 2005-2006\footnote{
		\href{https://wwwn.cdc.gov/nchs/nhanes/ContinuousNhanes/Default.aspx?BeginYear=2005}{https://wwwn.cdc.gov/nchs/nhanes/ContinuousNhanes/Default.aspx?BeginYear=2005}.}\label{page:footnote}. The NHANES is a program of studies designed to assess the health and nutritional status of adults and children in the United States. The survey is unique in that it combines interviews and physical examinations. The NHANES interview includes demographic, socioeconomic, dietary, and health-related questions. The examination component consists of medical, dental, and physiological measurements, as well as laboratory tests administered by highly trained medical personnel.
	
	In the 2005-2006 cycle of NHANES, participants of ages six years and older were asked to wear an Actigraph 7164 on a waist belt during all non-sleeping hours for seven days. The technology and application of current accelerometer-based devices in physical activity research allow the capture and storage or transmission of large volumes of raw acceleration signal data \citep{troiano2014evolution}. The NHANES accelerometer data have been widely used by researchers to explore relationships among accelerometer measures and a
	variety of other measures \citep[e.g.][]{tudor2012peer}. The monitors were programmed to begin recording activity information for successive 1-minute intervals (epochs) beginning at 12:01 a.m. the day after the health examination. The device was placed on an elasticized fabric belt, custom-fitted for each subject, and worn on the right hip. Subjects were told to keep the device dry (i.e. remove it before swimming or bathing) and to remove the device at bedtime. For each participant, the physical activity intensity, ranging from $0$ to $32767$ counts per minute (cpm), was recorded every minute for $24$ hours, $7$ days, where 32767 is the maximum value that the wearable device can record.

	In our analysis, the exposure of interest is marriage, coded as a  binary variable, with $1$ being married or living with a partner, and $0$ being otherwise. To define the outcome variable, we note that the trajectory of activity intensity  is not directly comparable across different subjects as different individuals might have different circadian rhythms. Instead,  the distribution of activity intensity is invariant to circadian rhythms and hence can be compared between groups of individuals. Specifically, our outcome of interest is $\Y(s)=\mathrm{Leb}(\{t:Z(t)\leq s\})/7$, the distribution of physical activity intensity over 7 days, where $\mathrm{Leb}$ denotes the Lebesgue measure.
	
	To obtain robust and reliable results, we applied the following preprocessing steps. Firstly, we excluded all observations that data reliability is questionable following NHANES protocol, after which there were $7170$ subjects left. Secondly, following \cite{Chang2020}, for each subject, 
	we removed observations with intensity values higher than 1000 or equal to 0.  In the data set,	 most intensity values are between 0 and $1000$ cpm.  Observations with zero intensity value were removed as they could represent activities with very different intensities, such as sleeping,  bathing and swimming. 
 Thirdly, we removed subjects with no more than 100 observations left, which further reduced the sample size to $7014$. Lastly,  for illustrative purposes, we removed $1490$ participants for whom we do not have information on their marital status.

After the preprocessing steps, we are left with  $5524$ participants  in the data set, among which $2682$ are in the married group and $2842$ participants are in the unmarried group.  The average age was $40.2$ years old with a standard deviation of $21.3$, and $52.3\%$ of them were female. As an example of the outcome data, in Figure \ref{real_f_1}(a), we plot the empirical cumulative
	distribution function for a randomly selected participant (subject ID 31144) who was 21 years old, male, and unmarried. 
In Figure \ref{real_f_1}(b), we  plot the Wasserstein barycentres of the empirical cumulative distribution functions in the married group and unmarried group, respectively. {One can see that the Wasserstein barycentres retain the key structural information in the individual empirical CDFs. For example, their derivatives decrease with the intensity level, suggesting that the physical intensity level is low most of the time. }
The Wasserstein barycentre in the married group is stochastically greater than that in the unmarried group, suggesting a positive association between marriage and physical activity level. However, this crude association may be subject to potential confounders such as age and gender.
	
	A simple approach to answering our question of interest is to first summarize the distribution functions with their means and then apply standard approaches such as the doubly robust estimator of \cite{robins1994estimation} to estimate the causal effect. With a linear outcome regression model and a logistic propensity score model, this simple doubly robust approach suggests that marriage increases the average physical intensity by $21.7$ ($95\%$ CI = $[17.1, 26.3]$) cpm.

	\begin{figure}
		\begin{center}
			\begin{tikzpicture}[scale=1.2, every node/.style={scale=1.2}]
				\newcommand\x{0.8}
				\node at (1+\x,0) {\includegraphics[scale=0.25]{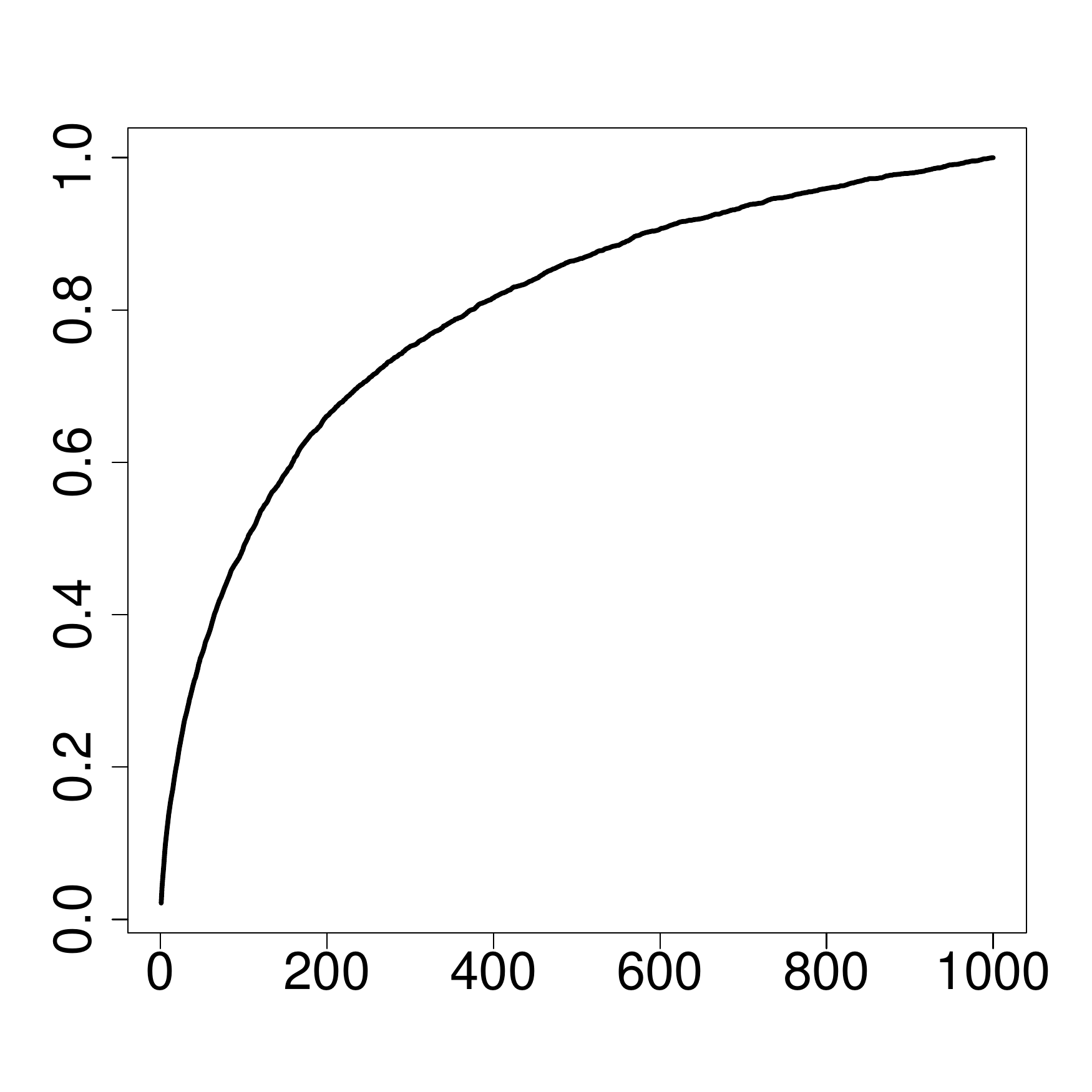}};
				\node at (7+\x,0) {\includegraphics[scale=0.25]{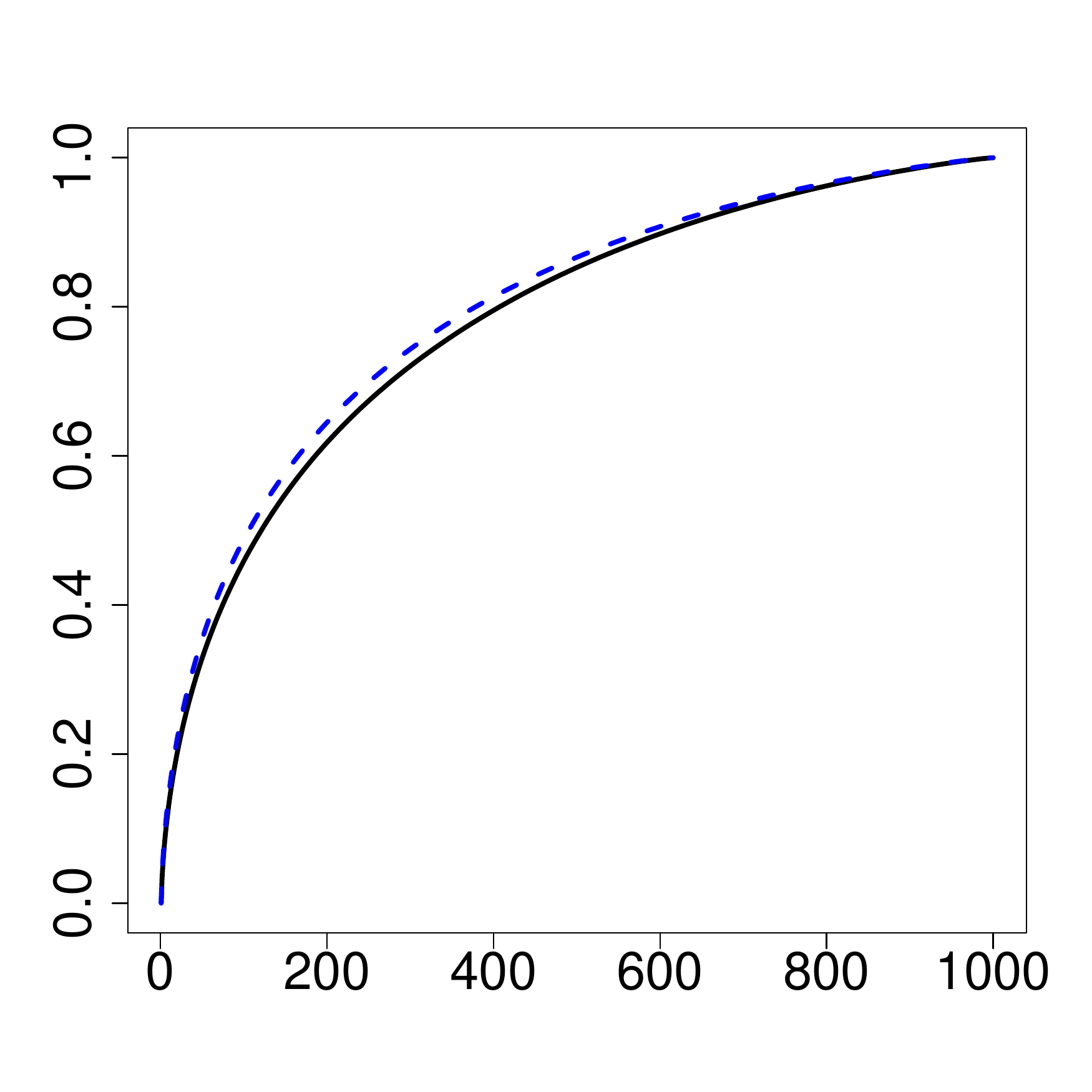}};
				
				\node[rotate=90] at (-0.5,0.2) {\scriptsize Probability};
				\node[rotate=90] at (5.5,0.2) {\scriptsize Probability};
				\node at (2,-2.1) {\scriptsize Physical intensity levels (in cpm)};
				\node at (8,-2.1) {\scriptsize Physical intensity levels (in cpm)};
				\node at (1.8,-2.6) {\scriptsize (a) Empirical CDF for subject 31144};
				\node at (7.8,-2.6) {\scriptsize (b) Wasserstein barycentre for married/unmarried};
			\end{tikzpicture} 
		\end{center}
		\vspace{-0.2in}
		\caption{Panel (a) plots the preprocessed data for a randomly selected subject, where we transform the raw trajectory data to a cumulative distribution function in $\mathcal{W}_2([1,1000])$. Panel (b) plots the Wasserstein barycentre of the empirical cumulative distribution functions in the married group (black solid line) and unmarried group (blue dashed line). 
		}
		\label{real_f_1}
	\end{figure}
	
	\begin{figure}
		\begin{center}
			\begin{tikzpicture}[scale=1.2, every node/.style={scale=1.2}]
				\newcommand\x{0.8}
				\node at (0+\x,0) {\includegraphics[scale=0.21]{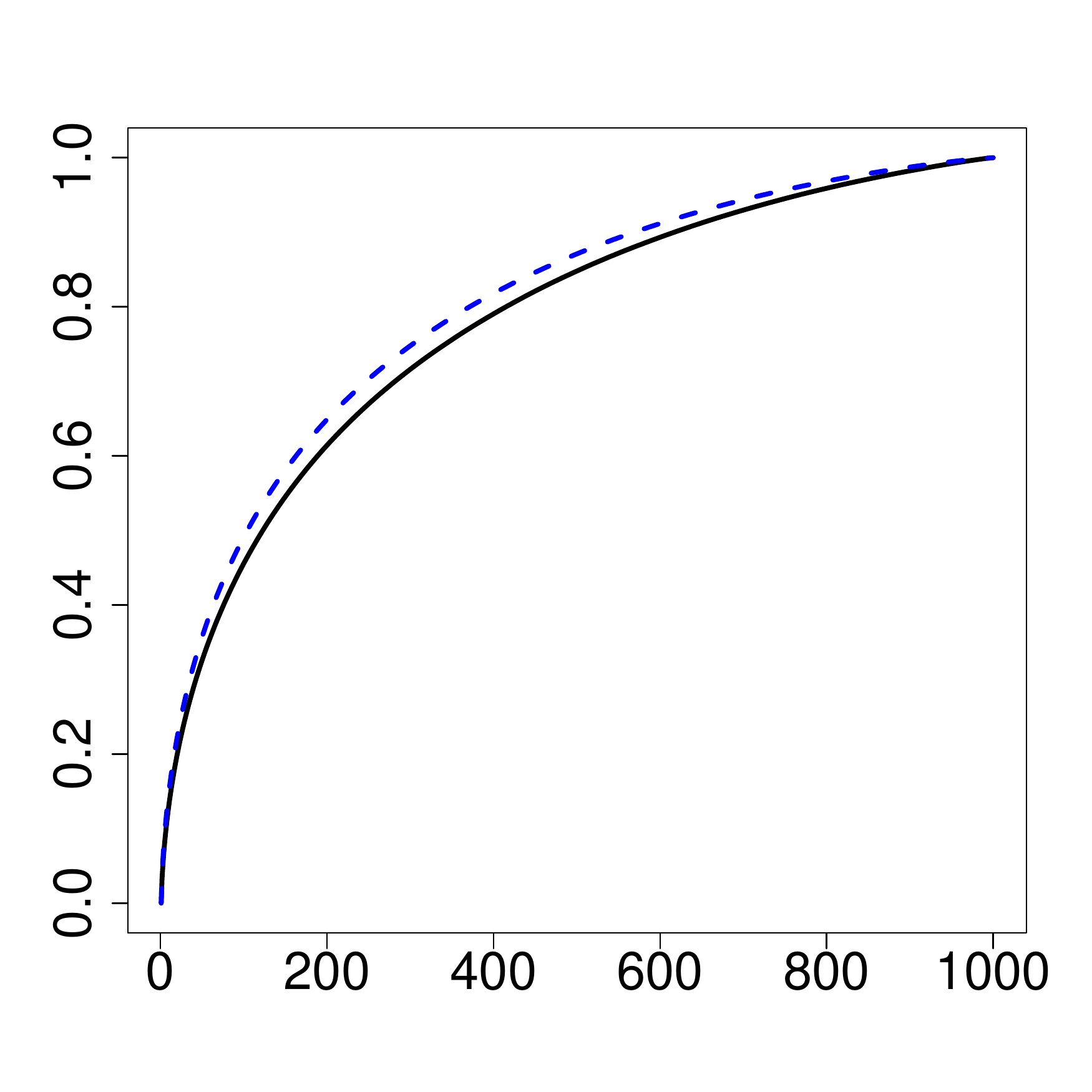}};
				\node at (4+\x,0) {\includegraphics[scale=0.21]{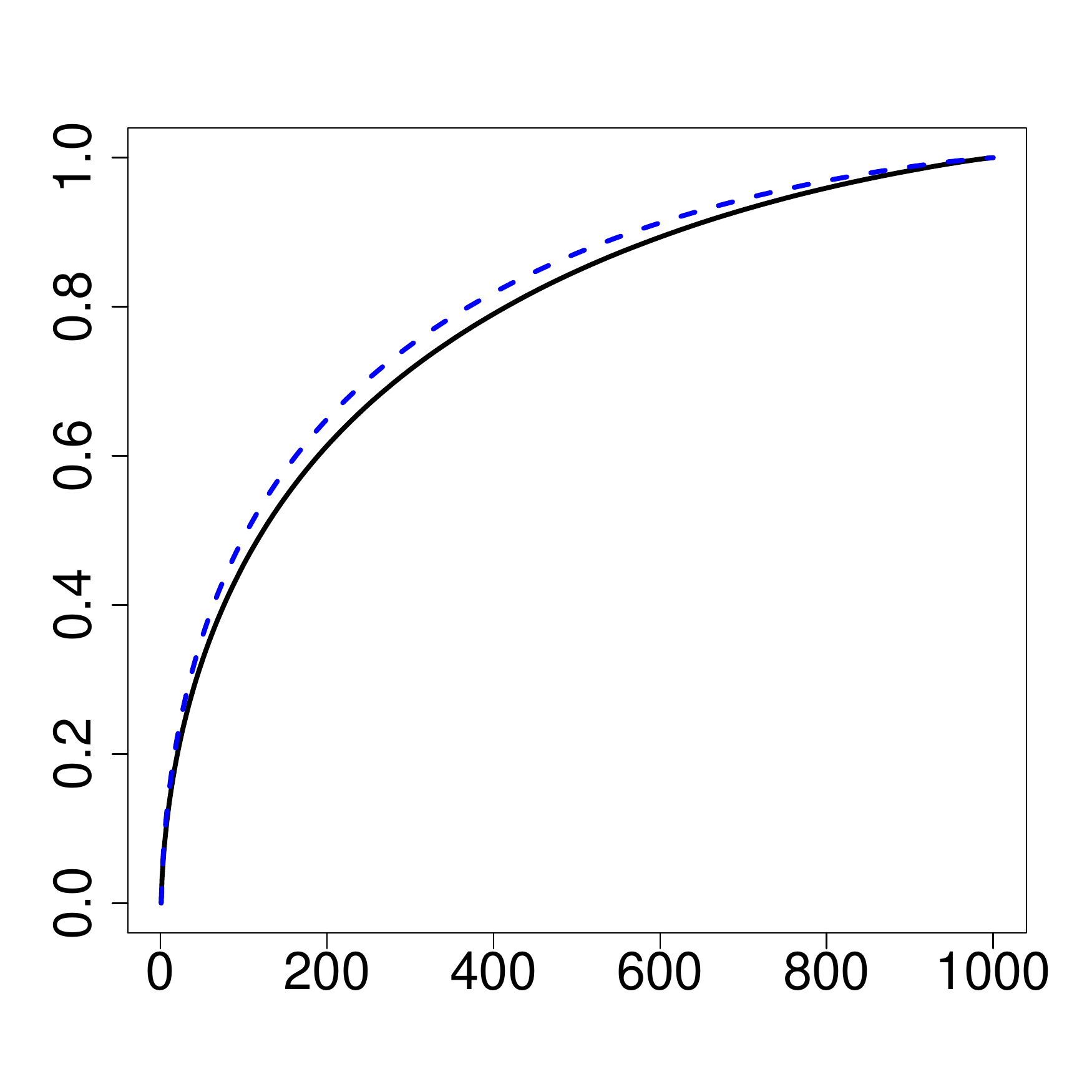}};
				\node at (8+\x,0)
				{\includegraphics[scale=0.21]{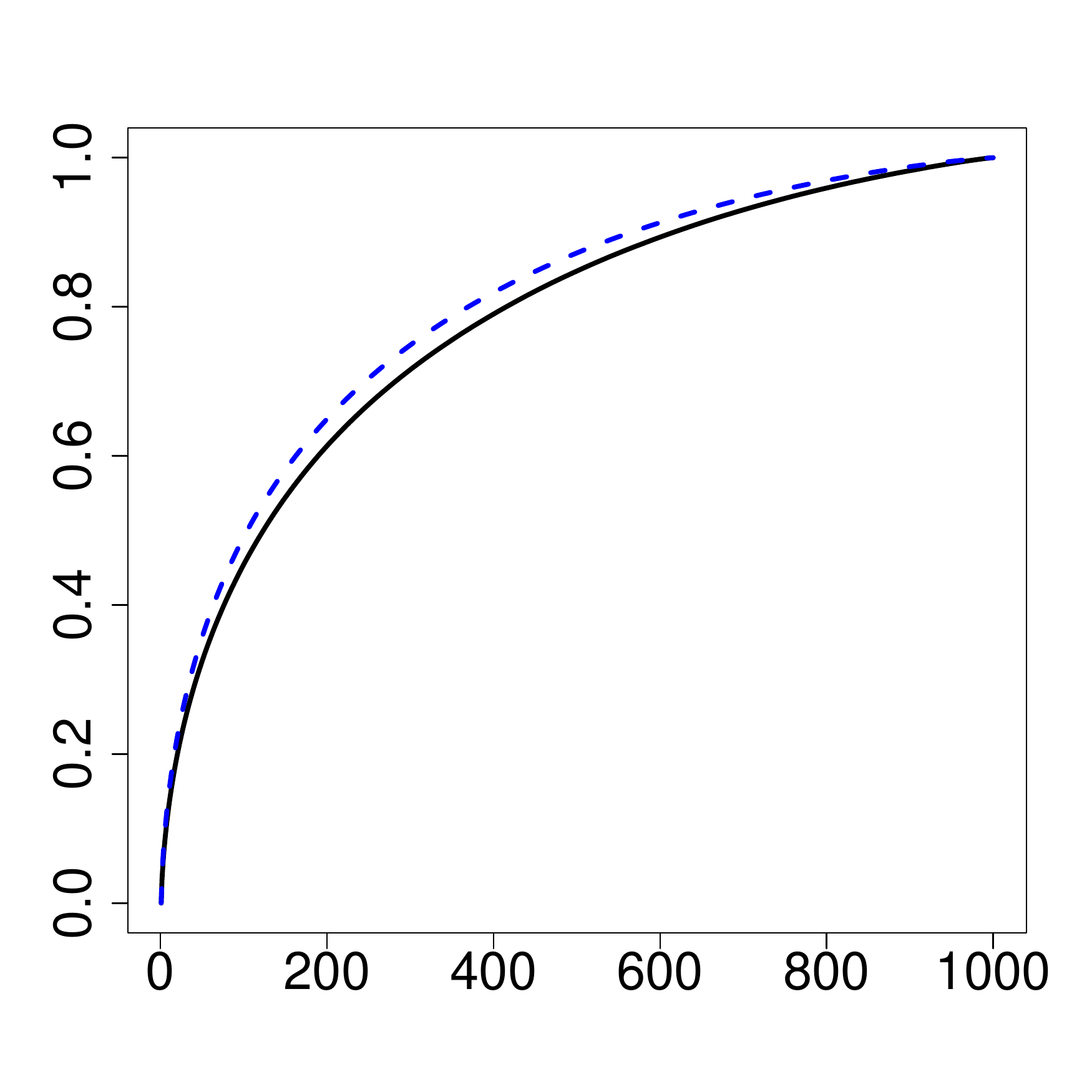}};
				\node[rotate=90] at (-1.2,0.2) {\scriptsize Probability};
				\node[rotate=90] at (2.8,0.2) {\scriptsize Probability};
				\node[rotate=90] at (6.8,0.2) {\scriptsize Probability};
				\node at (1,-1.8) {\scriptsize Physical intensity levels (in cpm)};
				\node at (5,-1.8) {\scriptsize Physical intensity levels (in cpm)};
				\node at (9,-1.8) {\scriptsize Physical intensity levels (in cpm)};
				\node at (1,-2.3) {\scriptsize (a) $\widehat\mu_1$ and $ \widehat\mu_0$ (OR) };
				\node at (5,-2.3) {\scriptsize (b) $\widehat\mu_1$ and $ \widehat\mu_0 $ (DR)};
				\node at (9,-2.3) {\scriptsize (c) $\widehat\mu_1$ and $ \widehat\mu_0 $ (CF)};
			\end{tikzpicture} 
		\end{center}
		\vspace{-0.2in}
		\caption{Panel (a), (b) and (c) plots $\widehat\mu_1$ (black solid line) and $\widehat\mu_0 $ (blue dashed line) estimated by OR, DR and CF, respectively. }
		\label{real_f_2}
	\end{figure}

		We then present a finer analysis of these data with the proposed approaches. 
	    We first plot the estimates of causal Wasserstein barycentres, $\widehat{\mu}_1$ and $\widehat{\mu}_0$ by OR, DR, and CF in Figure \ref{real_f_2}. Here, for $a=0,1,$ $\widehat\mu_a$ is the corresponding cumulative distribution function of the estimate $\widehat\mu_a^{-1}=\widehat\mu^{-1,\widehat\lambda}_a\circ\widehat\lambda^{-1}$ for $\mu_a^{-1}$,  where for the OR method, 
	    $\widehat\mu_a^{-1,\widehat\lambda} = \mathbb{P}_n \widehat{m}_a^{\widehat{\lambda}}(X),$ for the DR method,  $\widehat\mu_a^{-1,\widehat\lambda}$ is defined in \eqref{eq:dr}, and for the CF method, in the spirit of \eqref{eq:CF-median}, {$\widehat\mu_a^{-1}= median\{\widehat\mu_{a,CF}^{-1,\widehat\lambda,r}\}_{r=1}^R$ with $\widehat\mu_{a,CF}^{-1,\widehat\lambda,r}$ being the estimator in \eqref{eq:dr-cf} for the $r$th partitioning.}  	With these estimators, the former is stochastically greater than the latter, suggesting that  marriage improves the entire distribution of physical intensity  level at every quantile.

	    To quantify the size of the treatment effect, we take the difference between $\widehat{\mu}_1^{-1}$ and $\widehat{\mu}_0^{-1}$, corresponding to the difference in quantiles of the mean potential outcomes $\mu_1$ and $\mu_0$. 
	    To quantify the uncertainty of these estimates, we plot the corresponding $ 95\% $ confidence bands in Figure \ref{real_f_3} (b)--(d), where the confidence bands for the doubly robust and cross-fitting estimators were obtained using our asymptotic results presented in Theorems \ref{thm:AN-W-fixed-ref} {and \ref{thm:AN-W-fixed-ref-cf}}, 
	    and the confidence band for the outcome regression estimator was obtained using the conventional linear regression confidence interval for the slope coefficient corresponding to the exposure $A$. As expected, the estimation results from the three methods are very close to each other, and the OR method has the tightest confidence band among the three methods.

	    One can see from Figure \ref{real_f_3} that the effect of marriage on physical activity level is significant at the 0.05 level. 
	    One can also get the causal effect on individual quantiles from these plots. For example, 
	    according to the DR estimation results, on average, marriage improves the median  physical intensity level by 19.1 (95\% CI = [12.1, 26.1]) cpm. {Due to Theorem \ref{prop:equal}, this effect may be interpreted on both the population and individual levels. On the population level, this means that marriage improves the median of ``average'' (in the sense of Wasserstein barycentre) physical intensity level by 19.1 cpm. On the individual level, this means that the average improvement on the median physical intensity level is 19.1 cpm.} We also estimate the Wasserstein distance between $\hat{\mu}_1^{DR}, \hat{\mu}_0^{DR}$, $W_2(\hat{\mu}_1^{DR}, \hat{\mu}_0^{DR})=\|\hat{\Delta}^{\hat{\mu}_0^{DR}, DR}\|_{\hat{\mu}_0^{DR}} $, and the estimate is $27.6$ cpm (95\% CI: [24.0, 31.2]).

		\begin{figure}
		\begin{center}
			\begin{tikzpicture}[scale=1.2, every node/.style={scale=1.2}]
				\newcommand\x{0.8}
				\node at (0+\x,0) {\includegraphics[scale=0.17]{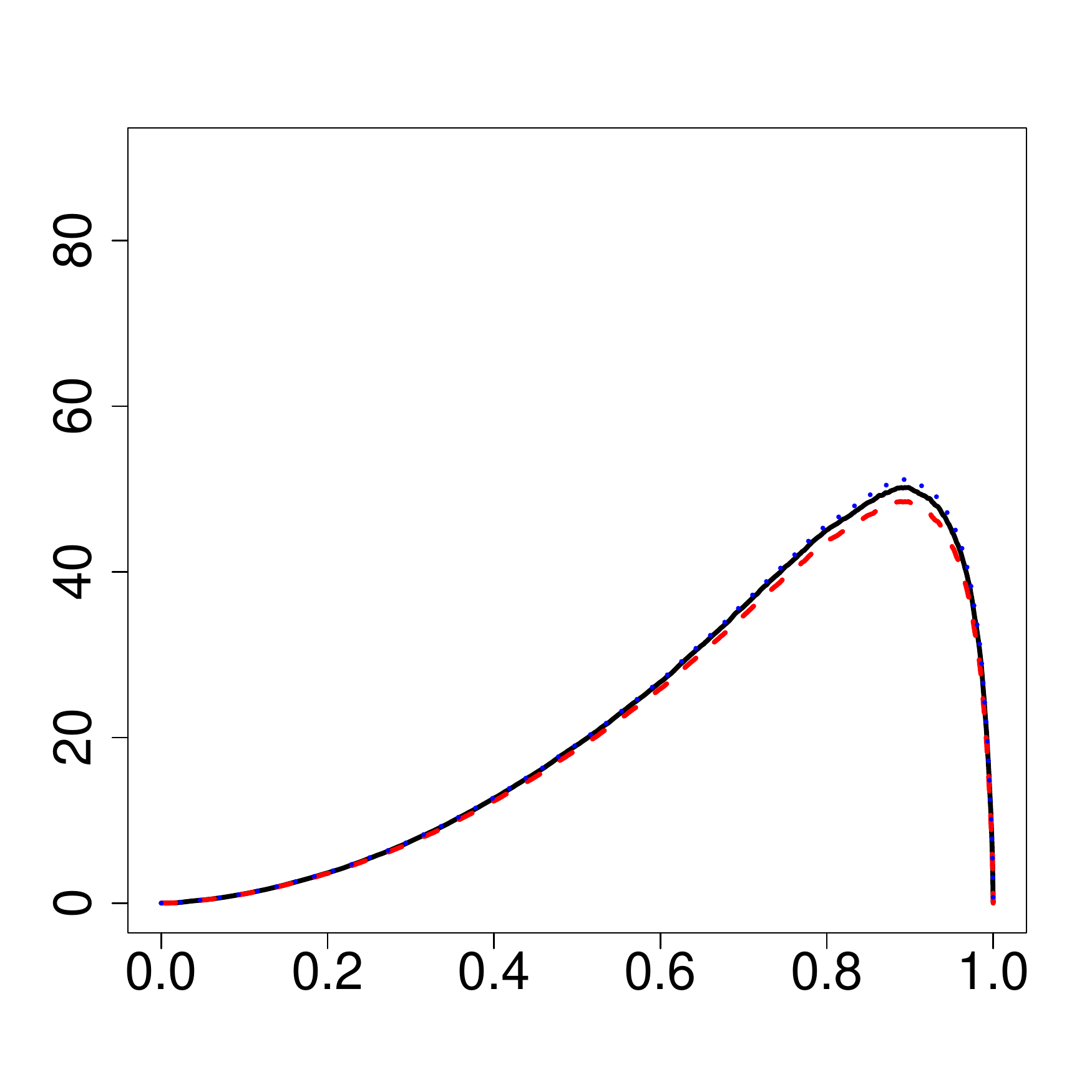}};
				\node at (3+\x,0) {\includegraphics[scale=0.17]{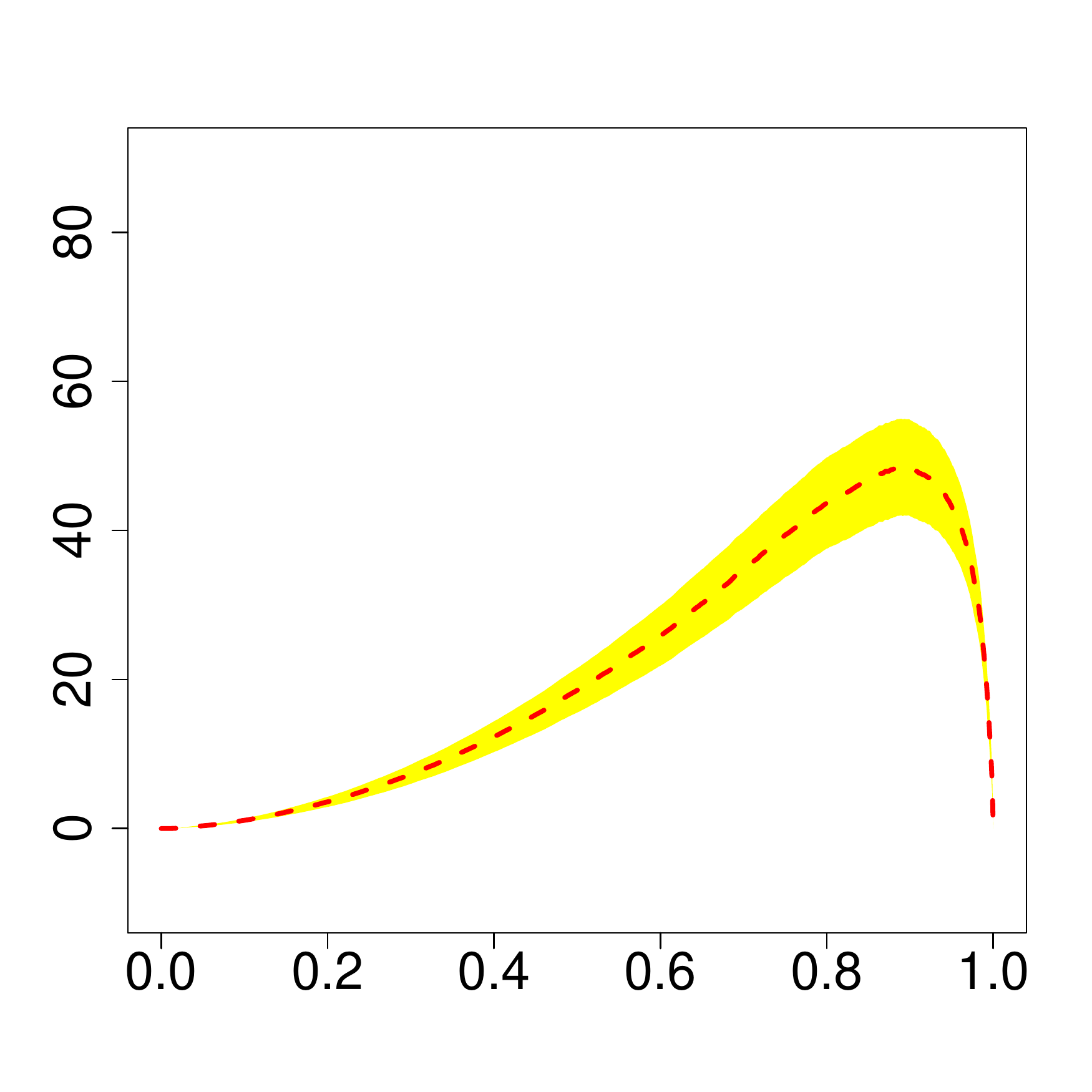}};
				\node at (6+\x,0) {\includegraphics[scale=0.17]{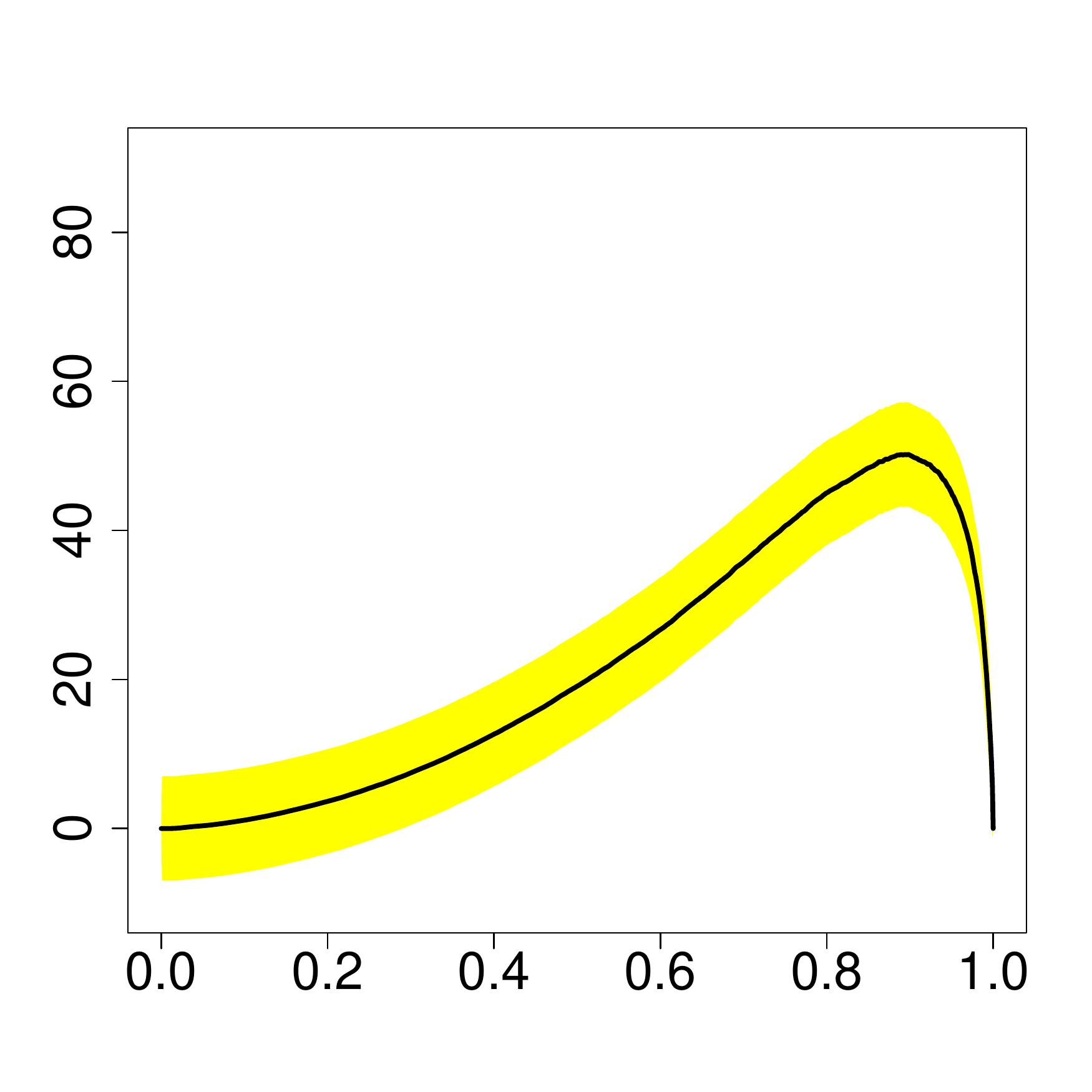}};
				\node at (9+\x,0) {\includegraphics[scale=0.17]{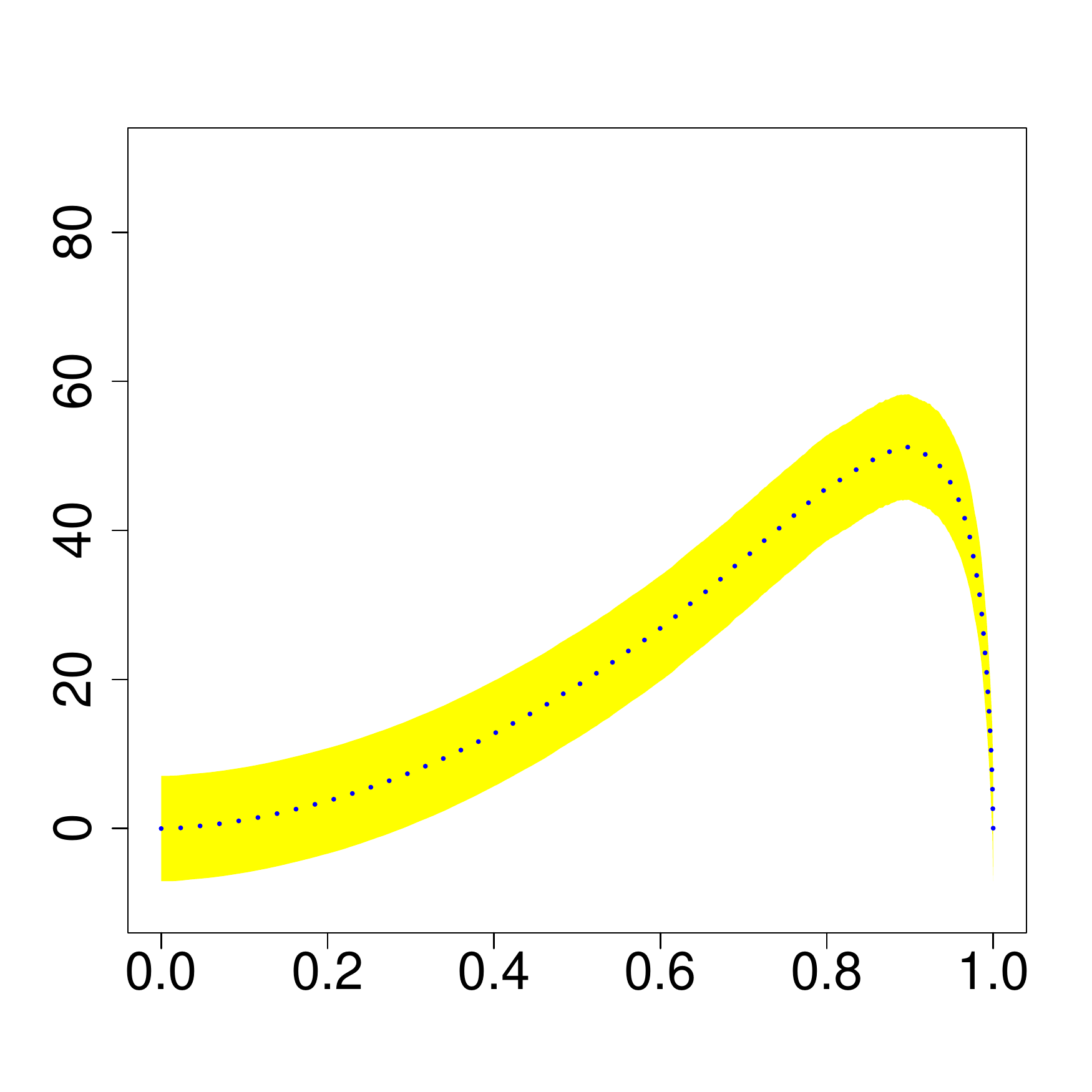}};
				\node[rotate=90] at (-0.7,0.15) {\tiny Treatment effect (in cpm)};
				\node[rotate=90] at (2.3,0.15) {\tiny Treatment effect (in cpm)};
				\node[rotate=90] at (5.3,0.15) {\tiny Treatment effect (in cpm)};
				\node[rotate=90] at (8.3,0.15) {\tiny Treatment effect (in cpm)};
				\node at (0.9,-1.5) {\tiny Probability};
				\node at (3.9,-1.5) {\tiny Probability};
				\node at (6.9,-1.5) {\tiny Probability};
				\node at (9.9,-1.5) {\tiny Probability};
				\node at (0.9,-2) {\scriptsize (a) Point estimates};
				\node at (3.9,-2) {\scriptsize (b) 95\% CB for OR};
				\node at (6.9,-2) {\scriptsize (c) 95\% CB for DR};
				\node at (9.9,-2) {\scriptsize (d) 95\% CB for CF};
			\end{tikzpicture} 
		\end{center}
		\vspace{-0.2in}
		\caption{Difference in quantiles estimates with the OR (red dashed), DR (black solid) and CF (blue dotted) estimators:  point estimates and 95\% confidence bands. }
		\label{real_f_3}
	\end{figure}

One may also be interested in predicting the unobserved potential outcome for a particular individual.
As an illustration, we estimate ${Y}_i(1)$ for Subject 31144, who was unmarried so $Y_{31144} = Y_{31144}(0).$
Recall that   $\Delta_i^{Y_i(0)} = T_i-\idf $. We first estimate the individual causal effect map for Subject 31144 using the average causal effect map with reference distribution  $\widehat{Y}_{31144}:$ $\widehat{T}_{31144}=\widehat{\Delta}^{\widehat{Y}_{31144}}+\idf,$ {the latter being estimated using the DR method.} We then apply the individual causal transport map to his empirical CDF  to obtain $\widehat{Y}_{31144}(1)$ plotted in Figure \ref{real_f_4}(b). From this, one may obtain, for example, getting married would raise his mean physical activity from $144.5$ cpm to $166.3$ cpm (95\% CI: [159.5, 173.7]).

		\begin{figure}
		\begin{center}
			\begin{tikzpicture}[scale=1.2, every node/.style={scale=1.2}]
				\newcommand\x{0.8}
				\node at (2+\x,0) {\includegraphics[scale=0.25]{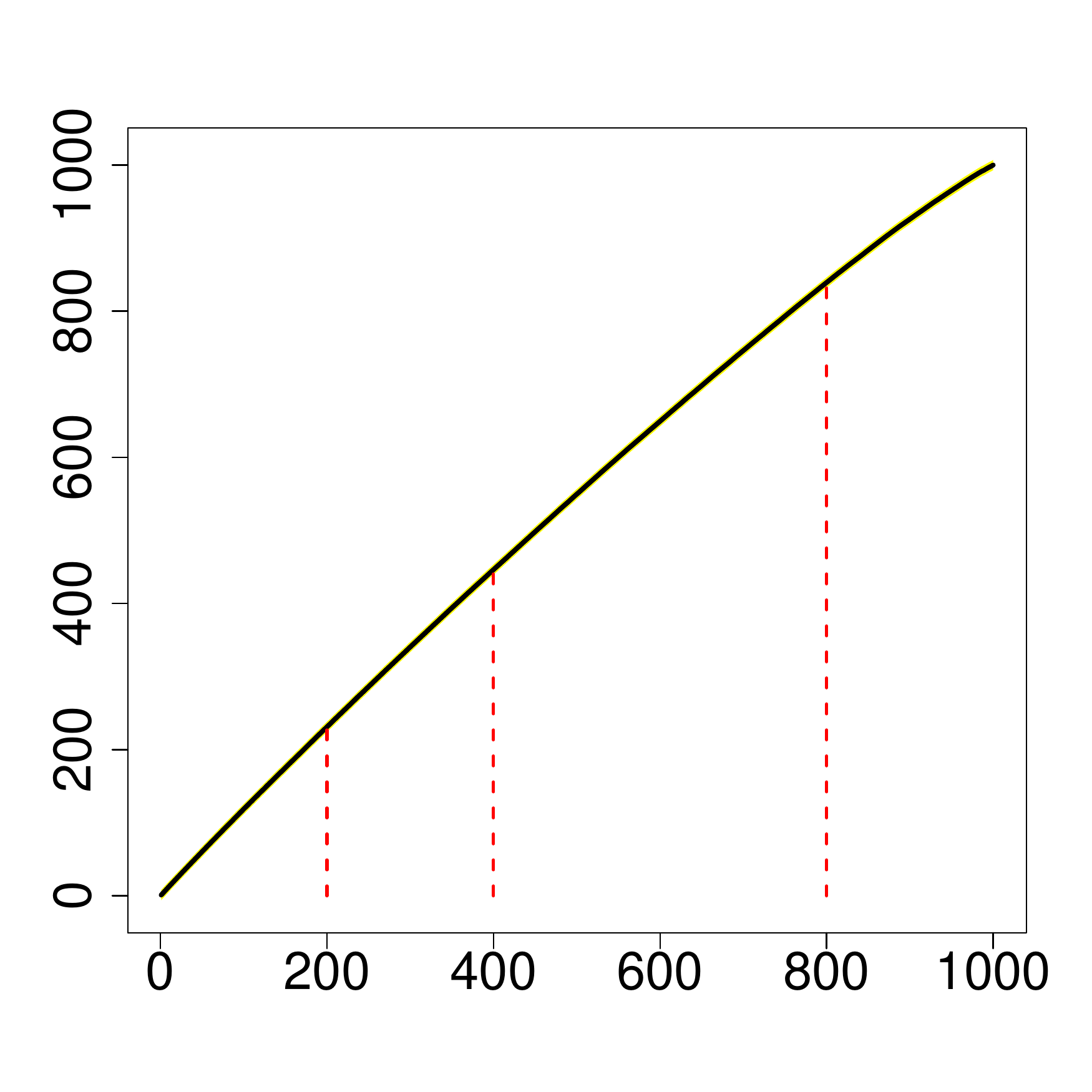}};
				\node at (8+\x,0) {\includegraphics[scale=0.25]{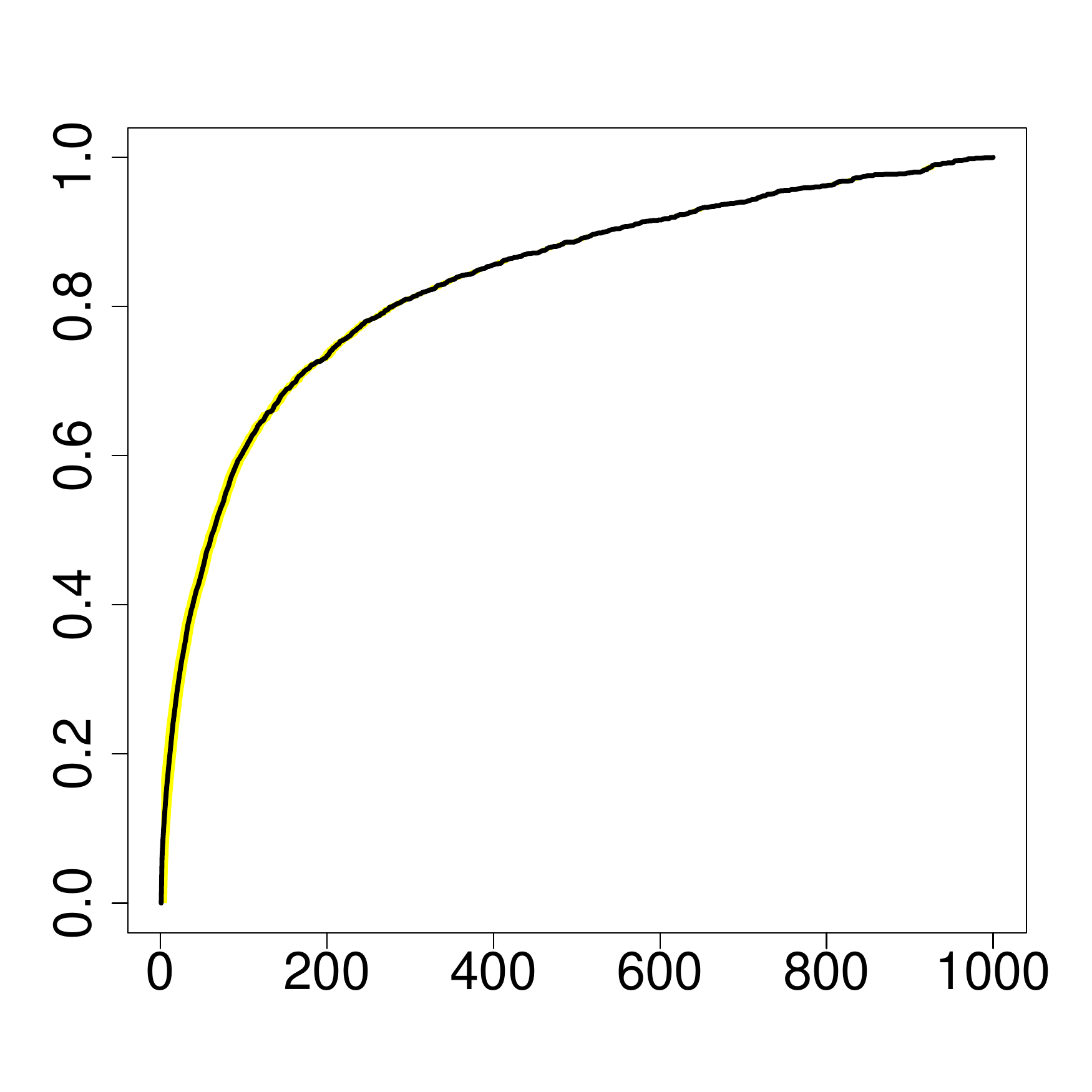}};
				\node[rotate=90] at (0.4,0.2) {\scriptsize Physical intensity levels (in cpm)};
				\node[rotate=90] at (6.4,0.2) {\scriptsize Probability};
				\node at (9,-2.1) {\scriptsize Physical intensity levels (in cpm)};
				\node at (3,-2.1) {\scriptsize Physical intensity levels (in cpm)};
				\node at (3,-2.5) {\scriptsize (a) Population causal transport map (DR)};
				\node at (9,-2.5) {\scriptsize (b) $\widehat{Y}_{31144}(1)$};
			\end{tikzpicture} 
		\end{center}
		\vspace{-0.2in}
		\caption{Panel (a) plots {the DR estimate of} population causal transport map, superimposed with the 95\% confidence band; {Panel (b) plots the DR estimate of counterfactual outcome $Y(1)$ with 95\% confidence band for individual 31144, who was unmarried so that $Y_{31144} = Y_{31144}(0)$.}}
		\label{real_f_4}
	\end{figure}

We also compare our adjusted estimates with the results where we do not adjust for the observed confounders age and gender. In particular, we apply the OR,  DR, and CF estimators for estimating the average treatment effect $\Delta^\lambda$. We plot these estimates and the corresponding $ 95\% $ confidence bands in Figure \ref{noconfoundertxeffect}. One can see the treatment effect is attenuated without adjusting for   age and gender.

\begin{figure}[h!]
	\begin{center}
		\begin{tikzpicture}[scale=1.2, every node/.style={scale=1.2}]
			\newcommand\x{0.8}
			\node at (0+\x,0) {\includegraphics[scale=0.17]{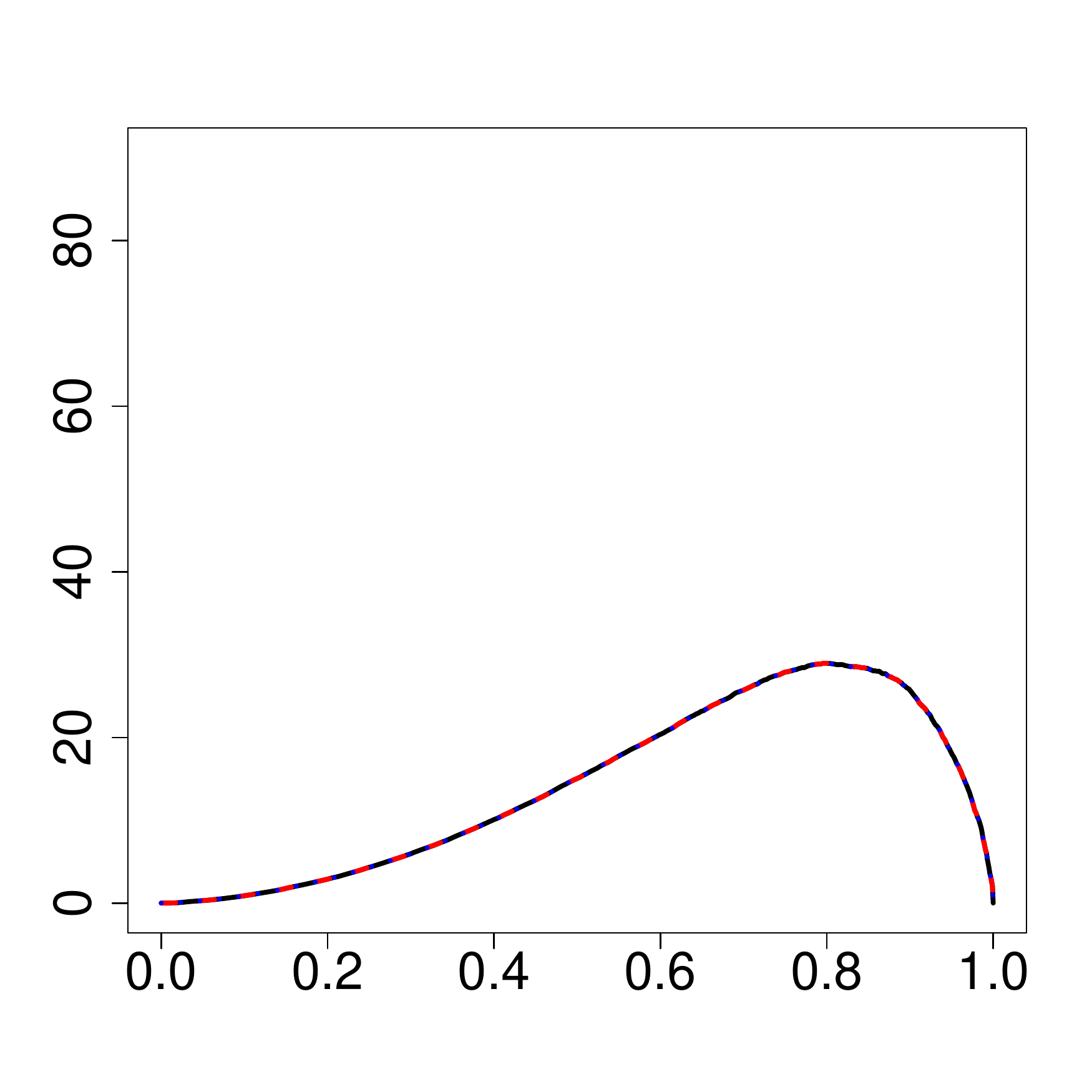}};
			\node at (3+\x,0) {\includegraphics[scale=0.17]{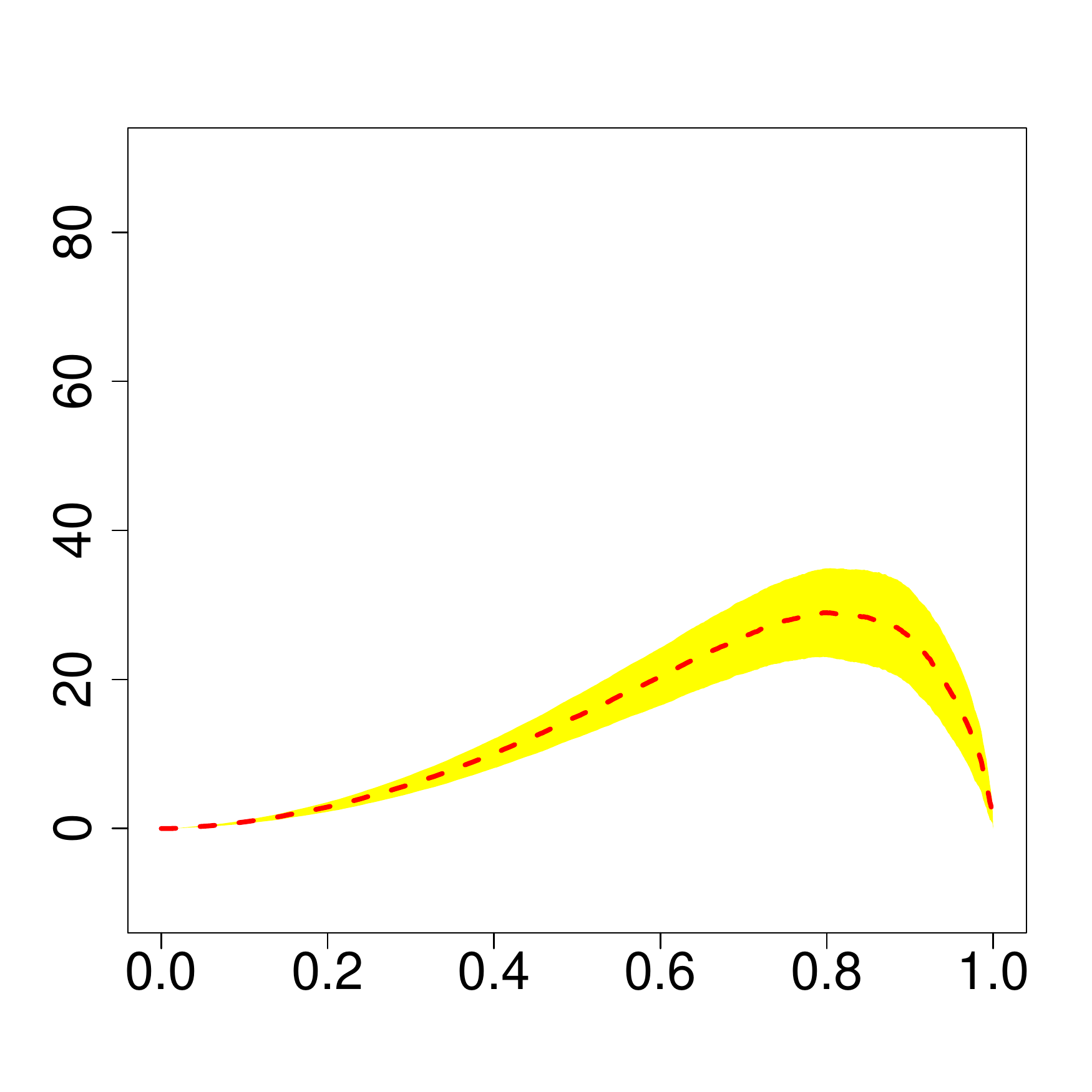}};
			\node at (6+\x,0) {\includegraphics[scale=0.17]{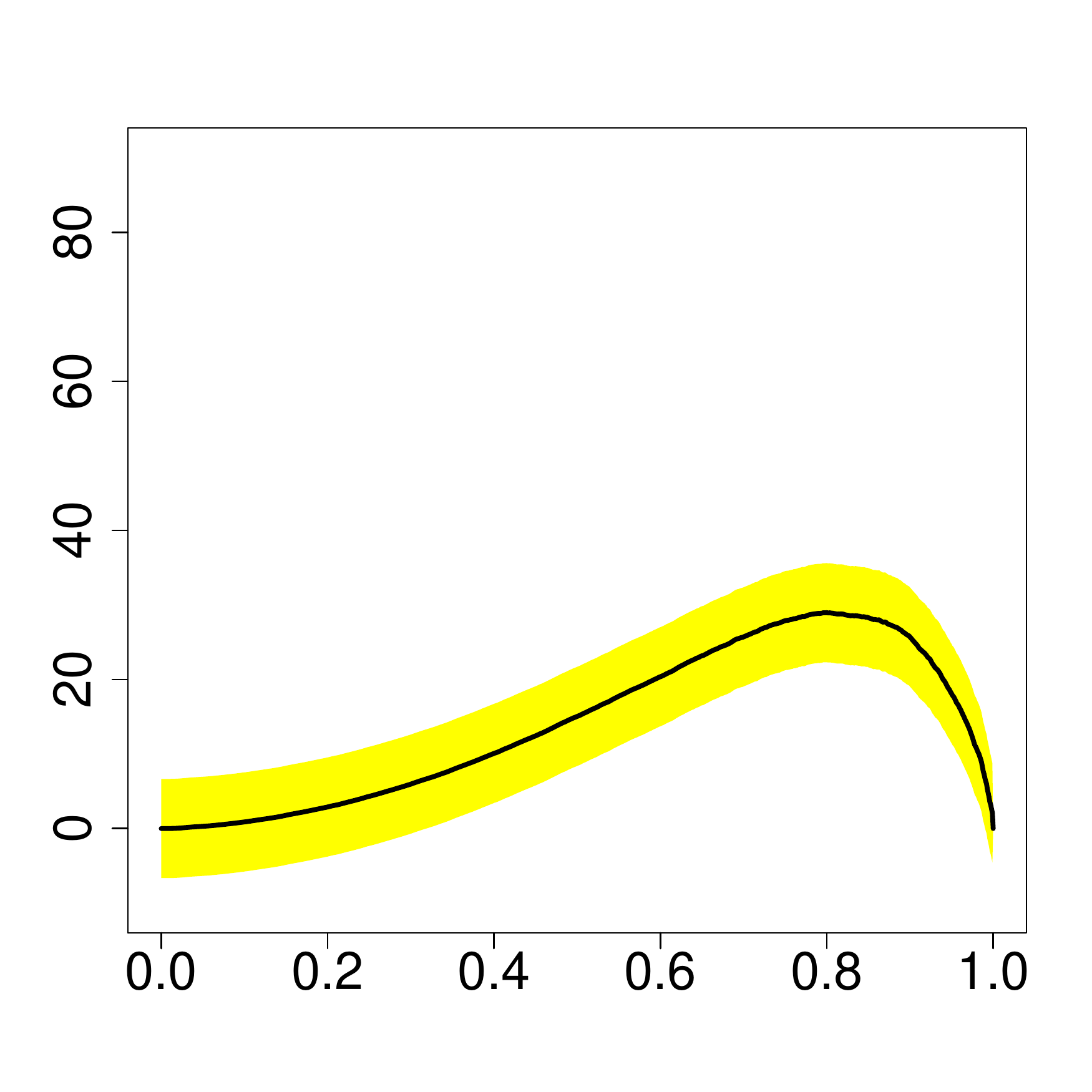}};
			\node at (9+\x,0) {\includegraphics[scale=0.17]{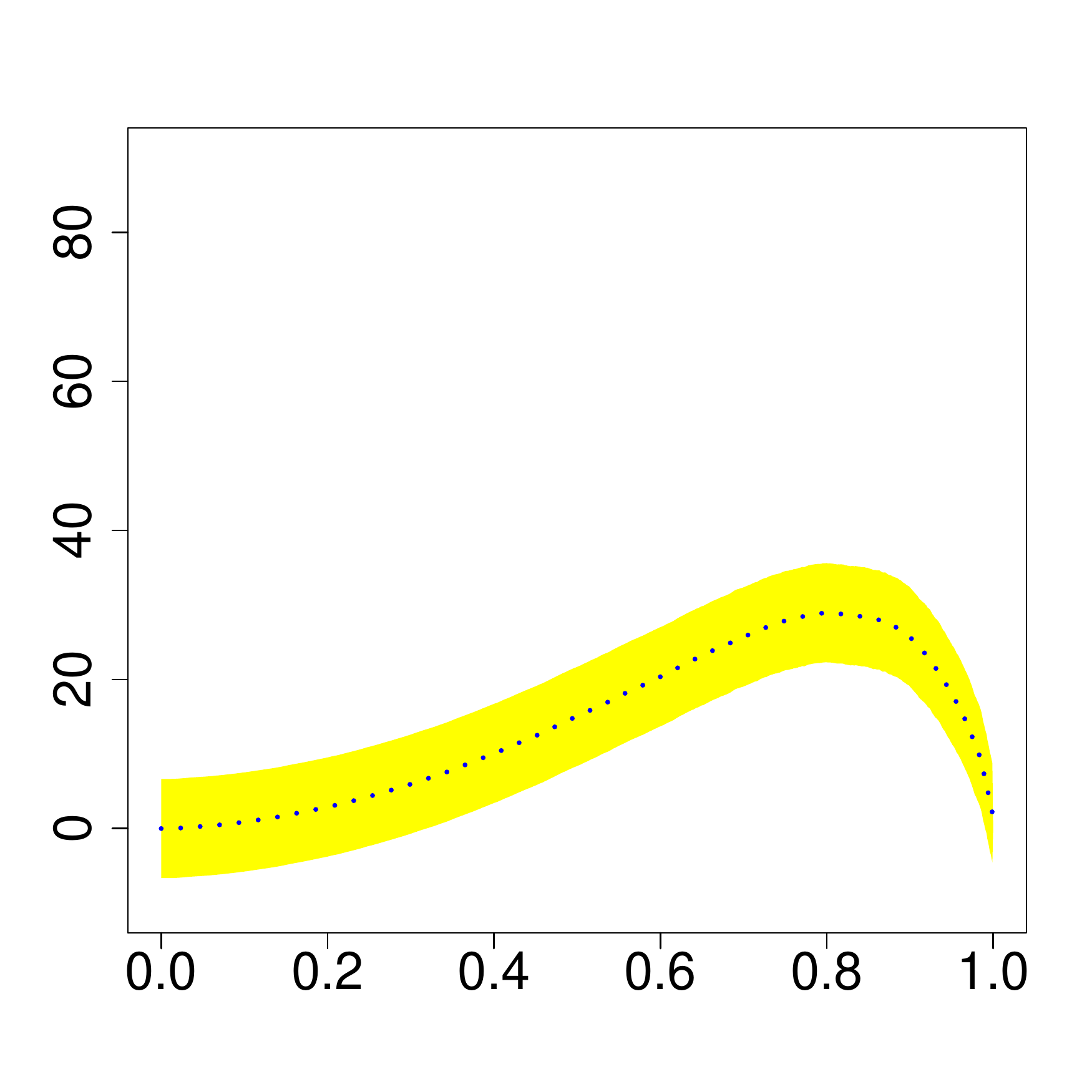}};
			\node[rotate=90] at (-0.7,0.15) {\tiny Treatment effect (in cpm)};
			\node[rotate=90] at (2.3,0.15) {\tiny  Treatment effect (in cpm)};
			\node[rotate=90] at (5.3,0.15) {\tiny Treatment effect (in cpm)};
			\node[rotate=90] at (8.3,0.15) {\tiny Treatment effect (in cpm)};
			\node at (0.9,-1.5) {\scriptsize Probability};
			\node at (3.9,-1.5) {\scriptsize Probability};
			\node at (6.9,-1.5) {\scriptsize Probability};
			\node at (9.9,-1.5) {\scriptsize Probability};
			\node at (0.9,-2) {\scriptsize (a) Point estimates};
			\node at (3.9,-2) {\scriptsize (b) 95\% CB for OR};
			\node at (6.9,-2) {\scriptsize (c) 95\% CB for DR};
			\node at (9.9,-2) {\scriptsize (d) 95\% CB for CF};
		\end{tikzpicture} 
	\end{center}
	\vspace{-0.2in}
	\caption{Difference in quantiles estimates with the OR (red dashed), DR (black solid) and CF (blue dotted) estimators when we do not adjust for observed confounders age and gender:  point estimates and 95\% confidence bands. 
}
	\label{noconfoundertxeffect}
\end{figure}

	\section{Discussion}
	\label{sec:discussion}
	
	In this paper, we study causal inference for distribution functions that reside in a Wasserstein space.  We propose novel definitions of causal effects and develop doubly robust estimation procedures for estimating these effects under the assumption of no unmeasured confounding. It would be interesting to extend classical causal inference methods for dealing with unmeasured confounding, such as the instrumental variable methods \citep[e.g.][]{ogburn2015doubly,wang2018bounded} to this setting. 	
	
	To the best of our knowledge, ours is the first formal study of causal effects for outcomes defined in a non-linear space. As such, we have only considered a leading special case of non-linear spaces. There are many other data objects from non-linear spaces that we do not consider in this paper. For example, the Wasserstein spaces of probability distributions on higher dimensional Euclidean spaces exhibit structures different from $\ws$ and thus pose new challenges for causal inference on such spaces.
	We also note that although the Wasserstein space $\ws$ is not a Riemannian manifold \citep{Bigot2017}, it can be endowed with a Riemannian structure, including the tangent space and Riemannian logarithmic map.
	In particular, let $cl(S)$ denote the closure of set $S$.  With a {continuous} reference distribution $\lambda,$  the space $T_\lambda \ws=cl\{k(\y^{-1}\circ\lambda-\idf):\y\in\ws, k\in \mR^+\}$ can be viewed as the tangent space of $\ws$ at $\lambda$, and the mapping  $\y \mapsto \y^{-1}\circ \lambda-\idf$    can be viewed as the Riemannian logarithmic map at $\lambda$ \citep{Ambrosio2004}.  	From this perspective, with the notation $\wwlog_\lambda \y=\y^{-1}\circ \lambda$, the individual causal effect maps can be written as
$
			\Delta_i^\lambda =   \wwlog_\lambda \Y_i(1)-\wwlog_\lambda \Y_i(0) = \{\wwlog_\lambda \Y_i(1)-\idf\} - \{\wwlog_\lambda \Y_i(0) - \idf\},
		$		so they  may be equivalently defined as the contrasts between the Riemannian logarithmic maps of distribution functions $Y_i(1)$ and  $Y_i(0)$. By Theorem \ref{prop:equal}, the average causal effect map may then be equivalently defined as
		$ \Delta^\lambda =\expect(\wwlog_\lambda Y(1) - \idf)-\expect(\wwlog_\lambda \Y(0) - \idf).$ 
	These connections allow one to extend the proposed definition of causal effect from random distributions to random elements residing on a Riemannian manifold; see \cite{srivastava2016functional} for concepts and tools of Riemannian manifolds that are relevant to statistics.

	Another interesting venue for future research is the study of efficiency theory with distribution-valued outcomes. 
	It is well-known that the classical doubly robust and cross-fitting estimators \citep{robins1994estimation,chernozhukov2018double} are both doubly robust and locally semiparametric efficient. In Theorems \ref{thm:AN-W-fixed-ref} and \ref{thm:AN-W-fixed-ref-cf} we 
	establish double robustness of our proposed doubly robust and cross-fitting estimators. On the other hand, to establish semiparametric efficiency of these proposed estimators, one needs to extend semiparametric efficiency theory to accommodate distribution-valued outcomes that reside in infinite-dimensional functional spaces.  This will be developed in a separate paper.
	
	\section*{Supplementary Material}\label{sec:supple}
	The supplementary file contains  some auxiliary results,  technical lemmas, and proofs for all the  theorems.
	 \texttt{R} code to reproduce the simulation studies and data analysis can be found in the repository 
	\url{https://github.com/kongdehanstat/causaldistributionfunction}. The data analyzed in Section \ref{sec:app} is available at \url{https://wwwn.cdc.gov/nchs/nhanes/ContinuousNhanes/Default.aspx?BeginYear=2005}.

	\bibliographystyle{asa}  
	\bibliography{causalmanifold,causal}

\clearpage

\begin{center}
	
	{\Large Supplementary Material for ``Causal Inference on Distribution Functions''}
	
	
	
	
\end{center}
\setcounter{equation}{0}
\setcounter{figure}{0}
\setcounter{table}{0}
\makeatletter
\renewcommand{\theequation}{S\arabic{equation}}
\renewcommand{\thefigure}{S\arabic{figure}}
\renewcommand{\thetable}{S\arabic{table}}
\setcounter{section}{0}

		\begin{abstract}
		{The supplementary file contains additional examples, some auxiliary results,  technical lemmas and proofs for all the  theorems.}
	\end{abstract}
	
\section{Additional Motivating Examples}\label{sec:add-ex}

\begin{example}[Cellular Differentiation]\label{ex:mp}
	In developmental biology, scientists are often interested in how an exposure influences the cellular differentiation process. In these studies, multiple samples of tissues 
	may be collected in the exposure and control groups. 
	For each sample, one randomly selects a set of cells and measures the  expression level for $p\geq 1$ genes in these cells. This process is  then repeated over a  period of time.
	To understand molecular programs related to cell differentiation, 
	in a high-profile work, \cite{Schiebinger2019} developed a so-called Waddington-OT framework based on the Wasserstein geometry. 
	Under this framework, at each time point $t$, a sample is represented by a p-dimensional distribution
 	$\mathscr P_t$ 
	of  gene expression level over a population of cells.  Typically these distributions are multi-modal, corresponding to different cell types in the samples.
Experiment results in \cite{Schiebinger2019} suggest that cellular differentiation follows the shortest path under the Wasserstein geometry. 
In other words, one may use the Wasserstein geometry to reconstruct the differentiation path $\mathscr P_t,  0\leq t \leq T$ with observations  at time $0$ and $T.$
	\end{example}

	\begin{example}[Metagenomics]\label{ex:mg}
	In microbial ecology, it is of interest to study whether an exposure changes the microbiome system in an environmental site (e.g., a human gut or acid mine drainage).  	
	To study this problem,   multiple samples of microorganisms are collected from the environmental site in both control and exposure cohorts. For each sample, scientists use shotgun sequencing to obtain DNA sequences, and map each of the DNA sequence onto a node of a reference phylogenetic tree \citep{vonMering2007}.  Consequently, a sample of microorganisms can be represented by an empirical distribution on a phylogenetic tree \citep{evans2012phylogenetic}. 
	Such a distribution encodes both relative gene abundance and taxonomic information, which together characterize a microbiome system.
	\end{example}

\section{Inference Based on Wasserstein Distance}\label{sec:S1}	
It was suggested by a reviewer to provide estimation and inference details also for the Wasserstein distance $W_2(\mu_1,\mu_0)$ due to its simplicity, even though this quantity does not satisfy the desiderata (d). For this, by simple calculation or Lemma \ref{lem:dist-norm-W},  we note that $W_2(\mu_1,\mu_0)=\|\Delta^\lambda\|_\lambda$ for any fixed continuous reference distribution $\lambda$. Therefore, it is intuitive to estimate $W_2(\mu_1,\mu_0)$ by $\|\hat\Delta_{DR}^\lambda\|_\lambda$ whose  asymptotic distribution is provided below.

According to Theorem \ref{thm:AN-W-fixed-ref}, $\sqrt n(\hat\Delta_{DR}^\lambda-\Delta_\lambda)$ converges weakly to a centered Gaussian process $G$ in $L^2(\ldomain;\lambda)$. Let $\phi(h)=\|h\|^2_\lambda$ for $h\in L^2(\ldomain;\lambda)$. It is seen that its Hadamard derivative $\phi^\prime_h$ \citep[Section 20.2,][]{vanderVaart1998} at $h$ is $\phi_h(g)=2\langle g,h \rangle_\lambda$ for $g\in L^2(\ldomain;\lambda)$. Then, according to Theorem 20.8 of \cite{vanderVaart1998}, $\sqrt n(\|\hat \Delta^\lambda_{DR}\|_\lambda^2-\|\Delta^\lambda\|_\lambda^2)$ converges weakly to $2\langle \Delta^\lambda, G\rangle_\lambda$ which is a centered Gaussian (real-valued) random variable when $\Delta_{DR}^\lambda\neq 0$. In case of $\Delta_{DR}^\lambda=0$,  by continuous mapping theorem, $n \|\hat \Delta^\lambda_{DR}\|_\lambda^2$ converges weakly to $\|G\|_\lambda^2$. Based on these results, we can also derive the asymptotic distribution of $\sqrt n(\|\hat \Delta^\lambda_{DR}\|_\lambda-\|\Delta^\lambda\|_\lambda)$ by applying the delta method or continuous mapping theorem again. For example,  $\sqrt n(\|\hat \Delta^\lambda_{DR}\|_\lambda-\|\Delta^\lambda\|_\lambda)$ converges weakly to the $N(0,\sigma^2\|\Delta^\lambda\|_\lambda^{-2})$ with $\sigma^2=\var(\langle \Delta^\lambda, G\rangle_\lambda)$ by the classic delta theorem when $\Delta^\lambda\neq 0$, and  converges to $\|G\|_\lambda$ weakly by the continuous mapping theorem when $\Delta^\lambda=0$.

The above results can be used to perform inference such as hypothesis tests on $W_2(\mu_1,\mu_0)$. For instance, to test the null hypothesis $W_2(\mu_1,\mu_0)=0$, which is equivalent to $\|\Delta^\lambda\|_\lambda=0$, we can use the test statistic $\sqrt n \|\hat \Delta_{DR}^\lambda\|_\lambda$, and reject the null hypothesis at the significance level $\alpha$ if it exceeds the $1-\alpha$ quantile of $\|G\|_\lambda$. Such quantile can be estimated via resampling, as follows. As in Remark \ref{rem:cov}, we can obtain an estimate $\hat C$ of the covariance function of $G$, and as in Remark \ref{rem:scb}, resample $B$ (e.g., $B=1000$) realizations $G_1,\ldots,G_B$ from the centered Gaussian process with the covariance function $\hat C$. For each realization we compute the norm $\|G_j\|_\lambda$ and finally estimate the $1-\alpha$  quantile by the empirical $1-\alpha$ quantile of $\|G_1\|_\lambda,\ldots,\|G_B\|_\lambda$.

\section{Remark on Condition Expectation in Assumption \ref{assu:eY}}\label{sec:S-remark}
The equation \eqref{eqn:2}  involves conditional expectation of a random variable $W_2^2(\eY_i,Y_i)$ given a random distribution $Y_i$. Such conditional expectation is well defined, as follows. Underlying all random quantities is a probability space $(\Omega,\mathscr E,P)$ with a sample space $\Omega$, an event space ($\sigma$-field) $\mathscr E$ and a probability measure $P$. Both $Y_i:\Omega\rightarrow \ws$ and $\widehat Y_i:\Omega\rightarrow \ws$ are  measurable maps taking values in $\ws$, while $W_2^2(\widehat Y_i,Y_i):\Omega\rightarrow \real$ is a real-valued measurable map. Note that $\expect\{W_2^2(\widehat Y_i,Y_i)|Y_i\}$ is understood to be the conditional expectation   $\expect\{W_2^2(\widehat Y_i,Y_i)|\sigma(Y_i)\}$ of the real random variable $W_2^2(\widehat Y_i,Y_i)$ given the $\sigma$-field $\sigma(Y_i)$, where $\sigma(Y_i)\subset \mathscr E$ is the smallest sub-$\sigma$-field that makes $Y_i$ measurable. By the definition of conditional expectation given a sub-$\sigma$-field, $\expect\{W_2^2(\widehat Y_i,Y_i)|\sigma(Y_i)\}$ is a measurable function, and further by the Doob--Dynkin lemma, there exists a measurable function $g:\ws\rightarrow \real$ such that $\expect\{W_2^2(\widehat Y_i,Y_i)|Y_i\}=\expect\{W_2^2(\widehat Y_i,Y_i)|\sigma(Y_i)\}=g(Y_i)$. In the equation \eqref{eqn:2},   $\expect\{W_2^2(\widehat Y_i,Y_i)|Y_i=\y\}$ represents $g(\y)$, and thus is well defined.

\section{Proof of \Cref{thm:AN-W-fixed-ref}}

To simplify notation and unify the proofs, for probability distributions $\lambda,\nu$, write $\wwlog_\lambda \nu=\nu^{-1}\circ \lambda$ and 
\begin{equation}\label{eq:w-pt}
	\pt_{\lambda}^\nu g=g\circ \lambda^{-1}\circ\nu \quad\text{ for } g\in L^2(\ldomain;\lambda).
\end{equation}
Also, let $Z_i=\wwlog_\lambda \Y_i$, $\hat Z_i=\wwlog_{\hat \lambda} \eY_i$, $R_i=\pt_{\hat\lambda}^\lambda \hat Z_i-Z_i$, and $D_a(x)=\pt_{\hat\lambda}^\lambda \hat m_a^{\hat\lambda}(x)-\tilde m_a^\lambda(x)$. The quantity $R_i$ can be viewed as the residual due to the discrepancy between $\hat\lambda$ and $\lambda$, and between $\eY_i$ and $\Y_i$. 
Define \begin{align*}
	\psi_1 & := \expect\left[\textstyle\dfrac{A\wwlog_{\lambda} Y}{\pi(X)}-\left\{\textstyle\dfrac{A}{\pi(X)}-1\right\}m_1^\lambda(X)\right],\\
	\psi_0 & := \expect\left[\textstyle\dfrac{(1-A)\wwlog_{\lambda}Y}{1-\pi(X)}-\left\{\textstyle\dfrac{1-A}{1-\pi(X)}-1\right\} m_0^\lambda(X)\right]
\end{align*}
and their sample versions
\begin{align*}
	\hat\psi_1 & := \splavg\left[\textstyle\dfrac{A\wwlog_{\hat\lambda}\hat Y}{\hat\pi(X)}-\left\{\textstyle\dfrac{A}{\hat\pi(X)}-1\right\}\hat m_1^{\hat\lambda}(X)\right],\\
	\hat\psi_0 & := \splavg\left[\textstyle\dfrac{(1-A)\wwlog_{\hat\lambda}\hat Y}{1-\hat\pi(X)}-\left\{\textstyle\dfrac{1-A}{1-\hat\pi(X)}-1\right\}\hat m_0^{\hat\lambda}(X)\right].
\end{align*}
Then we have $\Delta^{\lambda}=\psi_1-\psi_0$ and $\hat\Delta_{DR}^{\hat\lambda}=\hat\psi_1-\hat\psi_0$. 
In the above and in what follows, when $\hat\lambda=\lambda$, the operator $\tau_{\hat\lambda}^\lambda$ is an identity operator and has no effect. The proof will based on the following decomposition for $\tau_{\hat\lambda}^{\lambda}\hat\psi_{1}$:
\begin{align*}
	\tau_{\hat\lambda}^{\lambda}\hat\psi_{1} -\psi_1 & = \splavg \left[ \textstyle\dfrac{A Z + AR}{\hat\pi(X)}-\left\{\textstyle\dfrac{A}{\hat\pi(X)}-1\right\}\{\tilde m_{1}^\lambda(X)+D_{1}(X)\} \right]  -\psi_1 \\
	& =\underbrace{(\splavg-\expect_{n}) \left[ \textstyle\dfrac{A\{Z-\tilde m_{1}^\lambda(X)\}}{\hat\pi(X)}+\tilde m_{1}^\lambda(X) -\textstyle\dfrac{A\{Z-m_{1}^{\lambda,\ast}(X)\}}{\pi^\ast(X)}-m_1^{\lambda,\ast}(X) \right]}_{\textup{I}} \nonumber\\
	& \,\,\,\,\,\,+ \underbrace{(\splavg-\expect_{n}) \left[\textstyle\dfrac{A\{Z-m_1^{\lambda,\ast}(X)\}}{\pi^\ast(X)}+m_1^{\lambda,\ast}(X) \right]}_{\textup{II}} \nonumber \\
	& \,\,\,\,\,\,+  \underbrace{\expect_{n} \left[ \textstyle\dfrac{\{\tilde m_{1}^\lambda(X)-m_1^\lambda(X)\}\{\hat{\pi}(X)-A\}}{\hat\pi(X)} \right]}_{\textup{III}} \\
	& \,\,\,\,\,\, +  \underbrace{\mathbb P_{n} \left[ \left\{1-\textstyle\dfrac{A}{\hat\pi(X)}\right\}D_{1}(X) \right]}_{\textup{IV}}\\
	& \,\,\,\,\,\, +  \underbrace{\mathbb P_{n} \left\{\textstyle\dfrac{ AR}{\hat\pi(X)} \right\}}_{\textup{V}}.
\end{align*}
Here, $\expect_n(O)=n^{-1}\sum_{i=1}^n \expect (O_i)$ for generic random quantities $O_1,\ldots,O_n$. 
The decomposition for the other term is similar and thus omitted. In the sequel, we use $c$ to denote a positive constant and allow its value to vary in different occurrences.  

\begin{proof}[Proof of part \ref{thm:AN-W-rate-fixed-ref}] This is a direct consequence of Claims \ref{claim:f:I}--\ref{claim:f:V} and the assumed rates of $\alpha_n$ and $\nu_n$.
\end{proof}
\begin{proof}[Proof of part \ref{thm:AN-W-AN-fixed-ref}]
	Under the assumed conditions, the terms $\textup{I}$ and $\textup{III}$--$\textup{V}$ are of order $\op(n^{-1/2})$. Consequently,
	\begin{equation}
		\sqrt n (\pt_{\hat\lambda}^\lambda\hat\psi_1-\psi_1) =  \sqrt{n}(\splavg-\expect) \left[\textstyle\dfrac{A\{Z-m_1^{\lambda,\ast}(X)\}}{\pi^{\ast}(X)}+m_1^{\lambda,\ast}(X)\right] +  \op(1). \label{eq:AN-W-a=1}
	\end{equation}
	Similar deviations for the case $a=0$ lead to 
	\begin{equation}\label{eq:AN-W-a=0}
		\sqrt n (\pt_{\hat\lambda}^\lambda\hat\psi_0-\psi_0) =  \sqrt{n}(\splavg-\expect) \left[\textstyle\dfrac{(1-A)\{Z-m_0^{\lambda,\ast}(X)\}}{1-\pi^\ast(X)}+m_0^{\lambda,\ast}(X) \right]  +  \op(1).
	\end{equation}
	By combining \Cref{eq:AN-W-a=0,eq:AN-W-a=1}, the asymptotic normality of $\pt_{\hat\lambda}^\lambda\hat\Delta_{DR}^{\hat\lambda}-\Delta^\lambda$ follows from a central limit theorem and  Slutsky's lemma, with the fact that $\ws$ has a bounded diameter (since $\tdomain$ is assumed to be a bounded interval of $\real$) and thus $Z$, $m_0^{\lambda,\ast}(X)$ and  $m_1^{\lambda,\ast}(X)$ have finite variance.
\end{proof}

\begin{claim}\label{claim:f:I} 
	$\textup{I}=\op(n^{-1/2})$. 
\end{claim}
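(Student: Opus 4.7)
The plan is to show $\|\textup{I}\|_\lambda=\op(n^{-1/2})$ by recognising $\textup{I}$ as an empirical-process remainder of the standard doubly-robust form and exploiting the Donsker/stability structure supplied by Assumption \ref{assu:DR-fixed-ref}. For each $t\in\ldomain$ I write the integrand as
$$q_t := (t\times \tilde m_1^\lambda\times \widehat{\pi}) - (t\times m_1^{\lambda,\ast}\times \pi^\ast),$$
in the notation of Assumption \ref{assu:DR-fixed-ref}\ref{assu:DR-donsker-fixed-ref}, so that both summands lie in the Donsker class $\mathscr F_1^\lambda$ with probability tending to one. I will then split
$$\textup{I}(t)=(\splavg-\expect)[q_t]+(\expect-\expect_n)[q_t]$$
and bound each piece uniformly in $t$.

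For the first piece I will first control the $L^2(P)$-diameter of $\{q_t:t\in\ldomain\}$. Boundedness of $\tdomain$ and $\ldomain$ makes $Y^{-1}\circ\lambda(t)$ uniformly bounded in $t$; Assumption \ref{assu:DR-std}\ref{assu:DR-std-pi-bound} keeps $\widehat{\pi},\pi^\ast$ bounded away from $0$ and $1$; and Assumption \ref{assu:DR-std}\ref{assu:DR-std-convergence-fixed-ref} gives $\sup_x\|\tilde m_1^\lambda(x)-m_1^{\lambda,\ast}(x)\|_\lambda+\sup_x|\widehat{\pi}(x)-\pi^\ast(x)|=\op(1)$. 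A dominated-convergence argument then yields $\sup_{t\in\ldomain}\expect[q_t^2]=\op(1)$. Stochastic equicontinuity of $\sqrt n(\splavg-\expect)$ over $\mathscr F_1^\lambda$ (via the Donsker property in Assumption \ref{assu:DR-fixed-ref}\ref{assu:DR-donsker-fixed-ref}) will therefore deliver $\sup_{t\in\ldomain}\sqrt n|(\splavg-\expect)[q_t]|=\op(1)$.

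For the second piece I will exploit the stability bound in Assumption \ref{assu:DR-fixed-ref}\ref{assu:DR-stability-fixed-ref}. Let $q_t^{(-i)}$ be the analogue of $q_t$ with the leave-one-out estimators $\tilde m_{1,-i}^\lambda,\widehat{\pi}_{-i}$. Since $q_t^{(-i)}$ is independent of $(A_i,X_i,Y_i)$, we have $\expect[q_t^{(-i)}(A_i,X_i,Y_i)]=\expect[q_t^{(-i)}(A',X',Y')]$ for an independent copy $(A',X',Y')$. Lipschitz continuity of the integrand in $(m,\pi)$, valid on the high-probability event $\{\widehat{\pi},\widehat{\pi}_{-i}\in(\epsilon,1-\epsilon)\}$, combined with Cauchy--Schwarz and the $O(n^{-1/2})$ stability bounds, will produce $\sup_{t,i}|\expect[q_t(A_i,X_i,Y_i)-q_t^{(-i)}(A_i,X_i,Y_i)]|=O(n^{-1/2})$ and the analogous bound for $\expect[q_t-q_t^{(-i)}]$; averaging over $i$ gives $\sup_t|\expect_n[q_t]-\expect[q_t]|=O(n^{-1/2})$.

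Putting the two bounds together yields $\sup_t|\textup{I}(t)|=\op(n^{-1/2})$, which by boundedness of $\ldomain$ upgrades to $\|\textup{I}\|_\lambda=\op(n^{-1/2})$. The main obstacle I expect is the first step: upgrading the pointwise-in-$t$ asymptotic equicontinuity to a uniform-in-$t$ statement. This requires showing that the family $\{q_t:t\in\ldomain\}$ has a vanishing $L^2(P)$-diameter inside the difference class $\mathscr F_1^\lambda-\mathscr F_1^\lambda$, and invoking the equicontinuity of the limit Gaussian process in the induced $L^2(P)$ pseudo-metric jointly in $t$ and in the estimator arguments $(\tilde m_1^\lambda,\widehat{\pi})$.
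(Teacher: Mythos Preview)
Your first piece is exactly the paper's argument: the paper's entire proof of this claim is a one-line appeal to Assumption \ref{assu:DR-std}\ref{assu:DR-std-convergence-fixed-ref} (uniform convergence of $\tilde m_1^\lambda,\widehat\pi$ to $m_1^{\lambda,\ast},\pi^\ast$) together with Assumption \ref{assu:DR-fixed-ref}\ref{assu:DR-donsker-fixed-ref} (Donsker class). That is, $\textup{I}(t)=\sqrt n\,\mathcal G_1(t\times\tilde m_1^\lambda\times\widehat\pi)-\sqrt n\,\mathcal G_1(t\times m_1^{\lambda,\ast}\times\pi^\ast)$ is an empirical-process increment over a Donsker class between two indices whose $L^2(P)$ distance vanishes, so asymptotic equicontinuity gives $o_p(n^{-1/2})$ uniformly in $t$. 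The paper does \emph{not} invoke the stability part \ref{assu:DR-stability-fixed-ref} of Assumption \ref{assu:DR-fixed-ref} here.

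The split you add, and the resulting ``second piece'' $(\expect-\expect_n)[q_t]$, is where the proposal runs into trouble. First, there is a notational ambiguity: if $\expect$ denotes the frozen population operator $P$ that underlies the Donsker process $\sqrt n\,\mathcal G_1$, then $(\expect-\expect_n)[q_t]=P[q_t]-\expect_n[q_t]$ is the difference between a \emph{random} quantity $P[q_t]$ (depending on $\tilde m_1^\lambda,\widehat\pi$) and a constant. Your stability argument only controls the deterministic discrepancy $\bigl|\expect_n[q_t]-\expect\{P[q_t]\}\bigr|$; it says nothing about the random fluctuation $P[q_t]-\expect\{P[q_t]\}$, which is generically of order $O_p(n^{-1/2})$ when the nuisance estimators converge at a parametric rate. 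Second, even taking your own bookkeeping at face value, you state that the second piece is $O(n^{-1/2})$, and then conclude $\textup{I}=o_p(n^{-1/2})$; but $o_p(n^{-1/2})+O(n^{-1/2})$ is only $O_p(n^{-1/2})$, so the conclusion does not follow.

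In short: keep your first piece and drop the second. In the paper's setup, the centering in $\textup{I}$ is the frozen $P$-expectation implicit in the definition of $\sqrt n\,\mathcal G_1$, so the Donsker/equicontinuity step alone already delivers $\|\textup{I}\|_\lambda=o_p(n^{-1/2})$, with no recourse to leave-one-out stability.
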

This claim is due to Assumptions \ref{assu:DR-std}\ref{assu:DR-std-convergence-fixed-ref} and \ref{assu:DR-fixed-ref}\ref{assu:DR-donsker-fixed-ref}.

\begin{claim}\label{claim:f:II}
	$\textup{II}=\Op(n^{-1/2})$.
\end{claim}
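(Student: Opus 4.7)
The plan is to treat $\textup{II}$ as a centered empirical average of i.i.d.\ random elements taking values in the Hilbert space $L^2(\ldomain;\lambda)$, and then apply a Markov/Chebyshev-type bound on the second moment. Let
\[
\xi_i := \frac{A_i\{Z_i - m_1^{\lambda,\ast}(X_i)\}}{\pi^\ast(X_i)} + m_1^{\lambda,\ast}(X_i),
\]
so that $\textup{II} = n^{-1}\sum_{i=1}^n (\xi_i - \expect \xi_i)$. Because $(A_i,X_i,Y_i)$ are i.i.d.\ and $m_1^{\lambda,\ast}$, $\pi^\ast$ are non-random limit functions, the $\xi_i$ are i.i.d.\ elements of $L^2(\ldomain;\lambda)$.

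The main step is to verify $\expect \|\xi_1\|_\lambda^2 < \infty$. First, since $\tdomain$ is a bounded interval, the quantile functions $Y_i^{-1}$ take values in $\tdomain$, hence $\|Z_i\|_\lambda^2 = \int |Y_i^{-1}\circ\lambda|^2\, d\lambda \leq c$ for a constant $c$ depending only on $|\tdomain|$. Second, $\pi^\ast$ is bounded away from $0$ and $1$: by Assumption \ref{assu:DR-std}\ref{assu:DR-std-pi-bound} we have $\epsilon < \widehat\pi(X) < 1-\epsilon$ with probability one, and the uniform convergence $\sup_x|\widehat\pi(x) - \pi^\ast(x)| = \op(1)$ in Assumption \ref{assu:DR-std}\ref{assu:DR-std-convergence-fixed-ref} forces $\epsilon \leq \pi^\ast(X) \leq 1-\epsilon$ almost surely. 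Third, $m_1^{\lambda,\ast}(X) = \plim \tilde m_1^\lambda(X)$, and since the true $m_1^\lambda(X) = \expect\{Z \mid A=1, X\}$ satisfies $\|m_1^\lambda(X)\|_\lambda \leq c$ (again by boundedness of $\tdomain$), and uniform convergence is assumed on $\tilde m_1^\lambda$, we obtain $\|m_1^{\lambda,\ast}(X)\|_\lambda \leq c$ uniformly in $X$. Combining these bounds gives $\|\xi_1\|_\lambda \leq c/\epsilon$ almost surely, hence $\expect\|\xi_1\|_\lambda^2 \leq C$ for a finite constant $C$.

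Consequently, by Fubini's theorem and independence,
\[
\expect \|\textup{II}\|_\lambda^2 \,=\, \expect\left\|n^{-1}\sum_{i=1}^n (\xi_i - \expect \xi_i)\right\|_\lambda^2 \,=\, n^{-1}\,\expect\|\xi_1 - \expect \xi_1\|_\lambda^2 \,\leq\, n^{-1}\cdot C,
\]
and Markov's inequality in the Hilbert space $L^2(\ldomain;\lambda)$ yields $\|\textup{II}\|_\lambda = \Op(n^{-1/2})$, as claimed.

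This step is essentially routine; the only point requiring care is justifying that $\pi^\ast$ inherits the positivity bound and that $m_1^{\lambda,\ast}(X)$ stays uniformly bounded in the functional norm, both of which follow from Assumption \ref{assu:DR-std} combined with the standing positivity Assumption \ref{assu:positivity} and boundedness of $\tdomain$. No Donsker-class machinery or empirical-process maximal inequality is needed for this particular term, since the integrand is a fixed (non-estimated) function of the data.
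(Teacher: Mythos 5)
Your proof is correct in its conclusion and takes a slightly more elementary route than the paper: you bound $\expect\|\textup{II}\|_\lambda^2$ by $C/n$ via independence and apply Markov's inequality, whereas the paper invokes a Hilbert-space central limit theorem for the same centered i.i.d.\ average. Both routes rest on the same essential ingredient — finite second moments for $Z$ and $m_1^{\lambda,\ast}(X)$ coming from the boundedness of $\tdomain$ — and your Chebyshev-type argument is, if anything, more self-contained, since it does not require knowing (or proving) a CLT for Hilbert-space-valued random elements, which only becomes necessary for the weak-convergence statement in part~(ii) of Theorem~\ref{thm:AN-W-fixed-ref}.

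One small logical slip: to justify $\|m_1^{\lambda,\ast}(X)\|_\lambda \leq c$, you argue from the boundedness of the \emph{true} regression function $m_1^\lambda$ together with the uniform convergence $\sup_x\|\tilde m_1^\lambda(x)-m_1^{\lambda,\ast}(x)\|_\lambda=\op(1)$. But that convergence is of $\tilde m_1^\lambda$ to $m_1^{\lambda,\ast}$, not to $m_1^\lambda$, so the bound on $m_1^\lambda$ does not transfer to $m_1^{\lambda,\ast}$ along this chain. In the doubly robust setting the probability limit $m_1^{\lambda,\ast}$ may well differ from $m_1^\lambda$ when the outcome model is misspecified. The correct justification is simpler: the estimators $\tilde m_1^\lambda$ take values in $\wwlog_\lambda(\ws)=\{\y^{-1}\circ\lambda:\y\in\ws\}$, which is a bounded subset of $L^2(\ldomain;\lambda)$ because $\tdomain$ is a bounded interval; hence the uniform limit $m_1^{\lambda,\ast}$ inherits the same bound. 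With that repair, the argument is complete and matches the spirit of the paper's one-line proof.
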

This is a direct consequence of a central limit theorem, with the fact that $\ws$ has a bounded diameter and thus $Z$ and $m_1^{\lambda,\ast}(X)$ have finite variance.

\begin{claim}\label{claim:f:III}
	$\textup{III}=O\left(\varrho_\pi\varrho_m+n^{-1/2}\varrho_m+n^{-1/2}\varrho_\pi\right)+O( n^{-1/2}\varrho_m^{1/2})$.
\end{claim}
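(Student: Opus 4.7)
\textbf{Plan for Claim \ref{claim:f:III}.} The quantity $\textup{III}$ is a deterministic element of $L^2(\ldomain;\lambda)$, and the target is to bound $\|\textup{III}\|_\lambda$. Two structural features drive the approach: first, both $\tilde m_1^\lambda$ and $\hat\pi$ depend on $(X_i,A_i,Y_i)$, so the identity $\expect[A_i-\pi(X_i)\mid X_i]=0$ cannot be exploited directly; second, the two nuisance errors enter only through a product, which is exactly what enables the double-robust rate $\varrho_m\varrho_\pi$. The natural remedy for the first issue is the leave-one-out device supplied by Assumption \ref{assu:DR-fixed-ref}\ref{assu:DR-stability-fixed-ref}.

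The plan is to first split the centering factor as $\hat\pi(X_i)-A_i=\{\hat\pi(X_i)-\pi(X_i)\}+\{\pi(X_i)-A_i\}$, which separates $\textup{III}=\textup{III}_a+\textup{III}_b$. In both parts I would substitute $\tilde m_1^\lambda(X_i)=\tilde m_{1,-i}^\lambda(X_i)+\delta_{m,i}$ and $\hat\pi(X_i)=\hat\pi_{-i}(X_i)+\delta_{\pi,i}$, where stability yields $\expect\|\delta_{m,i}\|_\lambda^2\leq C_4/n$ and $\expect|\delta_{\pi,i}|^2\leq C_4/n$, and I would expand $1/\hat\pi(X_i)=1/\hat\pi_{-i}(X_i)-\delta_{\pi,i}/\{\hat\pi_{-i}(X_i)\hat\pi(X_i)\}$, using $\hat\pi,\hat\pi_{-i}\geq \epsilon$ (Assumption \ref{assu:DR-std}\ref{assu:DR-std-pi-bound}) to absorb denominators into constants.

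For $\textup{III}_a$, the leading term after substitution involves only leave-one-out nuisance quantities, which are independent of $(X_i,A_i,Y_i)$. Conditioning on $\mathscr D_{-i}$, Cauchy--Schwarz in $X_i$ gives a pointwise-in-$\mathscr D_{-i}$ bound of $\epsilon^{-1}\vertiii{\tilde m_{1,-i}^\lambda-m_1^\lambda}_\lambda\,\|\hat\pi_{-i}-\pi\|_2$, and a second Cauchy--Schwarz over $\mathscr D_{-i}$ upgrades this to $\varrho_m\varrho_\pi/\epsilon$ once one verifies, via Minkowski and stability, that the leave-one-out fourth moments are of the same order as $\varrho_m^4$ and $\varrho_\pi^4$. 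The remaining stability-residual pieces of $\textup{III}_a$ pair one nuisance bias factor with one $O(n^{-1/2})$ stability residual, contributing $O(n^{-1/2}\varrho_m)$ and $O(n^{-1/2}\varrho_\pi)$.

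For $\textup{III}_b$, the key observation is that after substitution the principal leave-one-out piece
\[
\frac{1}{n}\sum_{i=1}^n\expect\!\left[\frac{\{\tilde m_{1,-i}^\lambda(X_i)-m_1^\lambda(X_i)\}\{\pi(X_i)-A_i\}}{\hat\pi_{-i}(X_i)}\right]
\]
vanishes by the tower property conditional on $(X_i,\mathscr D_{-i})$. The surviving cross-terms are of two types: (i) those pairing a bias factor with a single stability residual, which give $O(n^{-1/2}\varrho_m)$ and $O(n^{-1/2}\varrho_\pi)$ via Cauchy--Schwarz; and (ii) the more delicate term coupling $\delta_{m,i}$ with $(\pi(X_i)-A_i)$. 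A naive Cauchy--Schwarz on (ii) yields only $O(n^{-1/2})$. To obtain the promised $O(n^{-1/2}\varrho_m^{1/2})$ rate, I would use the boundedness of $\tdomain$ to get a uniform $L^\infty$ bound on $\tilde m_{1,-i}^\lambda(X_i)-m_1^\lambda(X_i)$, then interpolate between that $L^\infty$ bound and its $L^2(\lambda)$ rate of order $\varrho_m$, trading a factor of $\|\cdot\|_\lambda^{1/2}$ against a constant inside the Cauchy--Schwarz, which saves a factor of $\varrho_m^{1/2}$ over the naive estimate.

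The principal obstacle is this last step: extracting the rate $n^{-1/2}\varrho_m^{1/2}$ rather than the cruder $n^{-1/2}$ from the cross-term involving $\delta_{m,i}$ and $\pi(X_i)-A_i$. It requires combining the $L^2$ stability bound on $\delta_{m,i}$, the uniform boundedness of $\tilde m_{1,-i}^\lambda-m_1^\lambda$ inherited from the boundedness of $\tdomain$, and the $L^2(\lambda)$ rate $\varrho_m$ in a single H\"older-type inequality. A secondary bookkeeping task is transferring moment bounds between the leave-one-out and full estimators, which Assumption \ref{assu:DR-fixed-ref}\ref{assu:DR-stability-fixed-ref} and Minkowski handle routinely.
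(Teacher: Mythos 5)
Your decomposition $\textup{III}=\textup{III}_a+\textup{III}_b$ and your Cauchy--Schwarz-plus-stability treatment of $\textup{III}_a$ match the paper's argument and are fine. The gap is in $\textup{III}_b$, specifically the term you label (ii). After the leave-one-out substitution and the tower-property cancellation, what survives is essentially
\[
\expect_n\!\left[\frac{\delta_{m,i}\{\pi(X_i)-A_i\}}{\hat\pi_{-i}(X_i)}\right],
\]
and Assumption \ref{assu:DR-fixed-ref}\ref{assu:DR-stability-fixed-ref} only gives $\expect\|\delta_{m,i}\|_\lambda^2\lesssim n^{-1}$; there is no $\varrho_m$ hiding in $\delta_{m,i}$, which is a leave-one-out perturbation, not a nuisance error. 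The $L^\infty/L^2$ interpolation you propose is targeted at $\tilde m_{1,-i}^\lambda-m_1^\lambda$, which is not a factor of this term, and even applied to the naively recombined quantity $\expect_n[\{\tilde m_1^\lambda-m_1^\lambda\}\{\pi-A\}]$, no H\"older or interpolation inequality will manufacture the $\varrho_m^{1/2}$ factor: interpolating a bounded $L^\infty$ bound with an $L^2$ rate $\varrho_m$ gives you $L^p$ rates of the form $\varrho_m^{2/p}$, not a factor of $n^{-1/2}$ times $\varrho_m^{1/2}$. Your plan therefore delivers only $O(n^{-1/2})$ for this term, which is strictly weaker than the claim and, more importantly, is not $o(n^{-1/2})$ as required for Theorem \ref{thm:AN-W-fixed-ref}\ref{thm:AN-W-AN-fixed-ref}.

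The $\varrho_m^{1/2}$ exponent is intrinsically an empirical-process/chaining phenomenon, and the paper extracts it from the metric entropy condition in Assumption \ref{assu:DR-fixed-ref}\ref{assu:DR-ep} via \Cref{thm:hilbertian-empirical} in the Supplementary Material. Concretely, the paper sets $V_i(g)=\{g(X_i)-m_1^\lambda(X_i)\}\{\pi(X_i)-A_i\}$ and $S_n(g)=n^{-1/2}\sum_i V_i(g)$, notes $S_n(m_1^\lambda)=0$, $\expect S_n(g)=0$, and the Lipschitz bound $\|V_i(g)-V_i(h)\|_\lambda\le 2\eta_i(g,h)$. The entropy bound $\log N_a(\delta,r,\eta)\le Kr\delta^{-1}$ (exponent $\zeta=1/2$ in the notation of \Cref{thm:hilbertian-empirical}) then yields, conditionally on $\mathbb X$, $\expect\{\|S_n(\tilde m_1^\lambda)\|_\lambda\mid\mathbb X\}\lesssim \eta(\tilde m_1^\lambda,m_1^\lambda)^{1/2}$, and a final Jensen step gives $\|\expect_n[\{\tilde m_1^\lambda-m_1^\lambda\}\{\pi-A\}]\|_\lambda \le n^{-1/2}\expect\|S_n(\tilde m_1^\lambda)\|_\lambda\lesssim n^{-1/2}\{\expect\eta^2(\tilde m_1^\lambda,m_1^\lambda)\}^{1/4}=O(n^{-1/2}\varrho_m^{1/2})$. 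That chaining step, and hence Assumption \ref{assu:DR-fixed-ref}\ref{assu:DR-ep}, is the missing ingredient in your proposal; without it the bound cannot be improved past $O(n^{-1/2})$.
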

By Cauchy--Schwartz inequality, with Assumption \ref{assu:DR-std}\ref{assu:DR-std-pi-bound}, we have
\begin{align*}
	\textup{III} & =\left\|\expect_n \left[ \textstyle\dfrac{\{\tilde m_1^\lambda(X)-m_1^\lambda(X)\}\{\hat{\pi}(X)-A\}}{\hat\pi(X)}\right]\right\|_{\lambda} \\
	& \leq c\left\|\expect_n \left[ \{\tilde m_1^\lambda(X)-m_1^\lambda(X)\}\{\hat{\pi}(X)-A\}\right]\right\|_{\lambda} \\
	& \leq c\expect_n \| \{\tilde m_1^\lambda(X)-m_1^\lambda(X)\}\{\hat{\pi}(X)-\pi(X)\}\|_{\lambda} + c\left\|\expect_n \left[ \{\tilde m_1^\lambda(X)-m_1^\lambda(X)\}\{{\pi}(X)-A\}\right]\right\|_{\lambda} \\
	& \leq cn^{-1}\sum_{i=1}^n\sqrt{\expect\{ |\hat\pi(X_i)-\pi(X_i)|^2\}\expect\{\|\tilde m_1^\lambda(X_i)-m_1^\lambda(X_i)\|^2_{\lambda}\}}+ O( n^{-1/2}\varrho_m^{1/2})\\ 
	& = O\left(\varrho_\pi\varrho_m+n^{-1/2}\varrho_m+n^{-1/2}\varrho_\pi\right)+O( n^{-1/2}\varrho_m^{1/2}),
\end{align*}
where $c$ is a constant depending on the constant $\epsilon$ in Assumption \ref{assu:DR-std}\ref{assu:DR-std-pi-bound}, and the last equality is obtained by using Assumption \ref{assu:DR-fixed-ref}\ref{assu:DR-stability-fixed-ref}. In the above, the third inequality relies on the bound 
\begin{equation}\label{eq:IV}
	\left\|\expect_n \left[ \{\tilde m_1^\lambda(X)-m_1^\lambda(X)\}\{{\pi}(X)-A\}\right]\right\|_{\lambda}=O( n^{-1/2}\varrho_m^{1/2}),
\end{equation}
which we establish below. Let $V_i(g)=\{g(X_i)-m_1^\lambda(X_i)\}\{\pi(X_i)-A_i\}$, $\eta_i(g,h)=\|g(X_i)-h(X_i)\|_\lambda$, $\eta^2(g,h)=\frac{1}{n}\sum_{i=1}^n \eta_i^2(g,h)$, and $S_n(g)=\frac{1}{\sqrt n}\sum_{i=1}^n V_i(g)$. Then $S_n(m_1^\lambda)=0$, $\expect\{V_i(g)\}=0$ and $\expect\{S_n(g)\}=0$ for all $g$. In addition, $\|V_i(g)-V_i(h)\|_\lambda\leq |\pi(X_i)-A|\eta_i(g,h)\leq 2\eta_i(g,h)$. Then, according to Assumption \ref{assu:DR-fixed-ref}\ref{assu:DR-ep} and \Cref{thm:hilbertian-empirical}, holding $\mathbb X$ fixed, we deduce that, for some universal constants $c_0,b_0,r>0$,  $$\pr\left(\sup_{g\in B_r(m_1^\lambda)}\frac{\|S_n(g)\|_\lambda}{\eta(g,m_1^\lambda)^{1/2}}\geq b\sqrt{K}\,\bigg|\, \mathbb X\right)\leq \exp\big(-\frac{c_0bK}{r}\big)$$ holds for all $b\geq b_0$. This further implies that  $$\expect\left(\sup_{g\in B_r(m_1^\lambda)}\frac{\|S_n(g)\|_\lambda}{\eta(g,m_1^\lambda)^{1/2}}\,\bigg|\, \mathbb X\right)\leq c_1$$ for a fixed constant $c_1$ for all $\mathbb{X}$, and further,  $\expect\left(\|S_n(g)\|_\lambda\,|\, \mathbb X\right)\leq c_1 {\eta(g,m_1^\lambda)^{1/2}}$ for all $g\in B_r(m_1^\lambda)$. By assumption, $\tilde m_1^\lambda\in B_r(m_1^\lambda)$ almost surely, and thus $\expect\left(\|S_n(\tilde m_1^\lambda)\|_\lambda\,|\, \mathbb X\right)\leq c_1 {\eta(\tilde m_1^\lambda,m_1^\lambda)^{1/2}}$. Consequently,
\begin{align*}
	\left\|\expect_n \left[ \{\tilde m_1^\lambda(X)-m_1^\lambda(X)\}\{{\pi}(X)-A\}\right]\right\|_{\lambda} & = n^{-1/2}\|\expect S_n(\tilde m_1^\lambda)\|_\lambda \leq  n^{-1/2}\expect\left(\|S_n(\tilde m_1^\lambda)\|_\lambda\right) \\
	& \leq c_1  n^{-1/2}\expect\eta(\tilde m_1^\lambda,m_1^\lambda)^{1/2} \leq c_1  n^{-1/2} \{\expect\eta^2(\tilde m_1^\lambda,m_1^\lambda)\}^{1/4} \\
	& = O(n^{-1/2}\varrho_m^{1/2}).
\end{align*}

\begin{claim}\label{claim:f:IV}
	$\textup{IV} = \Op(n^{-1/2}\varrho_\pi+\varrho_\pi\nu_n+\varrho_\pi\alpha_n+n^{-1}) + \op(n^{-1/2})$.
\end{claim}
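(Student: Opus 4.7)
The plan is to isolate the propensity-score error from the ``clean'' mean-zero process by writing
\begin{equation*}
\textup{IV} \;=\; \underbrace{\mathbb{P}_n\!\left[\left\{1-\tfrac{A}{\pi(X)}\right\}D_1(X)\right]}_{\textup{IV}_a}
\;+\; \underbrace{\mathbb{P}_n\!\left[A\left\{\tfrac{1}{\pi(X)}-\tfrac{1}{\hat\pi(X)}\right\}D_1(X)\right]}_{\textup{IV}_b}.
\end{equation*}
The term $\textup{IV}_b$ carries the $\varrho_\pi$-dependence in the stated bound, while $\textup{IV}_a$ is the source of the $O_P(n^{-1})+o_P(n^{-1/2})$ pieces.

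For $\textup{IV}_b$, I would use the positivity of $\pi$ (Assumption \ref{assu:positivity}) and of $\widehat\pi$ (Assumption \ref{assu:DR-std}\ref{assu:DR-std-pi-bound}), followed by Cauchy--Schwarz:
\begin{equation*}
\|\textup{IV}_b\|_\lambda \;\leq\; \epsilon^{-2}\sqrt{\mathbb{P}_n|\widehat\pi(X)-\pi(X)|^2}\;\sqrt{\mathbb{P}_n\|D_1(X)\|_\lambda^2}.
\end{equation*}
A Markov-type argument gives $\mathbb{P}_n|\widehat\pi-\pi|^2=\Op(\varrho_\pi^2)$, while Assumption \ref{assu:DR-D-rate-fixed-ref} combined with Assumption \ref{assu:DR-lambda} yields $\mathbb{P}_n\|D_1(X)\|_\lambda^2=\Op(n^{-1}+\alpha_n^2+\nu_n^2)$. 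Since $(n^{-1}+\alpha_n^2+\nu_n^2)^{1/2}\lesssim n^{-1/2}+\alpha_n+\nu_n$, this produces $\|\textup{IV}_b\|_\lambda=\Op(n^{-1/2}\varrho_\pi+\varrho_\pi\alpha_n+\varrho_\pi\nu_n)$, matching the first three summands in the claim.

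For $\textup{IV}_a$, the key observation is that $\xi_i:=1-A_i/\pi(X_i)$ is bounded and satisfies $\expect[\xi_i\mid X_i]=0$. The plan is to bound the second moment
\begin{equation*}
\expect\|\textup{IV}_a\|_\lambda^2 \;=\; n^{-2}\sum_{i,j}\expect[\xi_i\xi_j\langle D_1(X_i),D_1(X_j)\rangle_\lambda].
\end{equation*}
The diagonal contribution is at most $n^{-1}\cdot c\cdot \expect\|D_1(X_1)\|_\lambda^2=O(n^{-1}(n^{-1}+\alpha_n^2+\nu_n^2))$. For the off-diagonal terms ($i\neq j$), the obstacle is that $D_1$ is fitted on the whole sample and hence correlated with $(X_i,A_i)$. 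I would resolve this via a leave-one-out swap: replace $D_1$ by $D_{1,-i}$ (an estimator not using subject $i$), so that conditional on the remaining data, $\xi_i$ is mean-zero and independent of $D_{1,-i}(X_i)$, killing the cross-term. The resulting residual from the swap is controlled by the stability condition in Assumption \ref{assu:DR-fixed-ref}\ref{assu:DR-stability-fixed-ref}, which bounds $\expect\|\tilde m_a^\lambda(X_i)-\tilde m_{a,-i}^\lambda(X_i)\|_\lambda^2$ and $\expect|\widehat\pi(X_i)-\widehat\pi_{-i}(X_i)|^2$ by $C_4/n$; together with the analogous stability for the $\widehat\lambda$-dependent part (inherited through Assumption \ref{assu:DR-lambda}), this shows each cross-term is $O(n^{-1}\cdot(n^{-1}+\alpha_n^2+\nu_n^2))$, so the full double sum is $O(n^{-1}(n^{-1}+\alpha_n^2+\nu_n^2))$ as well. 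Taking the square root, $\|\textup{IV}_a\|_\lambda=\Op(n^{-1/2}\sqrt{n^{-1}+\alpha_n^2+\nu_n^2})=\Op(n^{-1})+\Op(n^{-1/2}\alpha_n)+\Op(n^{-1/2}\nu_n)=\Op(n^{-1})+\op(n^{-1/2})$ under $\alpha_n,\nu_n=o(1)$, completing the bound.

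The main technical difficulty is the leave-one-out decoupling for $\textup{IV}_a$: a direct empirical-process argument of the type used in Claim \ref{claim:f:III} would only deliver an $\Op(n^{-3/4})$-type bound through $\eta(D_1,0)^{1/2}$, which is weaker than the advertised $\Op(n^{-1})+\op(n^{-1/2})$. Squaring and exploiting the conditional zero-mean of $\xi_i$ via the stability assumption is what sharpens the $n^{-1/2}\cdot(\text{rate})$ to $n^{-1/2}\cdot\sqrt{n^{-1}+\alpha_n^2+\nu_n^2}$, and it is here that Assumption \ref{assu:DR-fixed-ref}\ref{assu:DR-stability-fixed-ref} pays off. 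The remaining algebra is routine.
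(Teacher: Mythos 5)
Your decomposition of $\textup{IV}$ into $\textup{IV}_a + \textup{IV}_b$ and your handling of $\textup{IV}_b$ (the paper's $\textup{IV}_2$) exactly match the paper's proof: isolate the $\widehat\pi-\pi$ error, apply Cauchy--Schwarz using the positivity assumptions, and invoke Assumptions \ref{assu:DR-lambda} and \ref{assu:DR-D-rate-fixed-ref} to control $\mathbb{P}_n\|D_1(X)\|_\lambda^2$. That piece is fine.

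For $\textup{IV}_a$ (the paper's $\textup{IV}_1$) you take a genuinely different route, and your self-critique of the ``direct'' route is misplaced. The paper disposes of this term by citing ``an argument similar to the derivation of \eqref{eq:IV},'' i.e.~the Hilbert-valued empirical-process bound of Theorem \ref{thm:hilbertian-empirical} together with the metric-entropy structure in Assumption \ref{assu:DR-fixed-ref}\ref{assu:DR-ep}, obtaining $\textup{IV}_1 = \op(n^{-1/2})$. You correctly anticipate that this type of argument yields an $\Op(n^{-3/4})$-type bound, but you wrongly conclude that this is insufficient: $\Op(n^{-3/4})$ is already $\op(n^{-1/2})$, and the claimed bound is $\Op(n^{-1})+\op(n^{-1/2})$, so the direct empirical-process argument does the job. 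Your leave-one-out second-moment machinery is therefore unnecessary here, and introduces gaps that the paper's route avoids.

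Specifically, the leave-one-out sketch has three issues. First, the diagonal bound $\expect\|D_1(X_1)\|_\lambda^2 = O(n^{-1}+\alpha_n^2+\nu_n^2)$ does not follow from Assumption \ref{assu:DR-D-rate-fixed-ref}, which gives only a stochastic ($\Op$) rate for $\mathbb{P}_n\|D_1(X)\|_\lambda^2$; an $\Op$ statement does not imply a moment bound, and the paper's proof works conditionally and uses Markov-type arguments precisely to avoid needing such a moment bound. Second, Assumption \ref{assu:DR-fixed-ref}\ref{assu:DR-stability-fixed-ref} states leave-one-out stability only for $\tilde m_a^\lambda$ and $\widehat\pi$; the quantity $D_1$ involves the estimator $\widehat m_a^{\widehat\lambda}\circ\widehat\lambda^{-1}\circ\lambda$ built from the noisy $\widehat Y_i$ and the estimated reference $\widehat\lambda$, whose leave-one-out stability is not covered, and your parenthetical claim that stability for the $\widehat\lambda$-part is ``inherited through Assumption \ref{assu:DR-lambda}'' is not correct since that assumption is a rate condition on $W_2(\widehat\lambda,\lambda)$, not a stability condition. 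Third, for $i\neq j$ the swap $D_1\mapsto D_{1,-i}$ does not fully decouple the cross-term $\langle D_1(X_i),D_1(X_j)\rangle_\lambda$, since $D_{1,-i}(X_j)$ still depends on subject $j$'s data; you would need to leave out both $i$ and $j$, and the accounting becomes substantially more delicate than the sketch lets on.
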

	We  first observe that
\begin{align*}
	\splavg\left[\left\{\textstyle\dfrac{A}{\hat\pi(X)}-1\right\}D_1(X) \right] & = \underbrace{\splavg\left[\left\{\textstyle\dfrac{A}{\pi(X)}-1\right\}D_1(X) \right]}_{\textup{IV}_1} + \underbrace{\splavg\left[\left\{\textstyle\dfrac{A\{\pi(X)-\hat\pi(X)\}}{\hat\pi(X)\pi(X)}\right\}D_1(X) \right]}_{\textup{IV}_2}.
\end{align*}
 The term  $\textup{IV}_1$ can be shown to have the order $\op(n^{-1/2})$ by an argument that is similar to the derivation of \eqref{eq:IV}. For the second term, we have
 \begin{align*}
 	\|\textup{IV}_2\|_\lambda & \leq \splavg \|\{\pi(X)-\hat\pi(X)\}D_1(X)\|_\lambda\\
 	& \leq \sqrt{\splavg |\pi(X)-\hat\pi(X)|^2 \splavg \|D_1(X)\|_\lambda^2   }\\
 	& = \Op\big((\varrho_\pi + n^{-1/2})(W_2(\hat\lambda,\lambda)+\nu_n+\alpha_n)\big)\\
 	& = \Op(n^{-1/2}\varrho_\pi+\varrho_\pi\nu_n+\varrho_\pi\alpha_n+n^{-1})
 \end{align*}
where, the first inequality is due to Assumptions \ref{assu:positivity} and \ref{assu:DR-std}\ref{assu:DR-std-pi-bound} on $\pi$ and $\hat\pi$, and the last two equalities are derived by using Assumptions  \ref{assu:DR-lambda} and \ref{assu:DR-D-rate-fixed-ref}, as well as the assumed rates of $\nu_n$ and $\alpha_n$.

\begin{claim}\label{claim:f:V}
$\textup{V}=\Op(\alpha_n+\nu_n)$.
\end{claim}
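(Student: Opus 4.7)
The plan is to reduce $\|\textup{V}\|_\lambda$ to a sample average of Wasserstein distances $W_2(\widehat Y_i, Y_i)$ and then invoke Assumption \ref{assu:eY} directly. First, using the triangle inequality for the $L^2(\lambda)$-norm together with the lower bound $\widehat\pi(X_i) > \epsilon$ from Assumption \ref{assu:DR-std}\ref{assu:DR-std-pi-bound} and the fact that $A_i \in \{0,1\}$, I would bound
$$\|\textup{V}\|_\lambda \leq \mathbb P_n \left\|\tfrac{A R}{\widehat\pi(X)}\right\|_\lambda \leq \epsilon^{-1} \mathbb P_n \|R_i\|_\lambda.$$

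Next, I would exploit the continuity of $\widehat\lambda$ (which is among the standing hypotheses of the theorem) so that $\widehat\lambda \circ \widehat\lambda^{-1} = \idf$. This collapses the transport operator in the definition of $R_i$, giving
$$R_i = \pt_{\widehat\lambda}^\lambda \widehat Z_i - Z_i = \widehat Y_i^{-1} \circ \widehat\lambda \circ \widehat\lambda^{-1} \circ \lambda - Y_i^{-1} \circ \lambda = (\widehat Y_i^{-1} - Y_i^{-1}) \circ \lambda.$$
Applying the change-of-variables $u = \lambda(t)$ (valid since $\lambda$ is continuous) and the standard isometry between $\ws$ and the space of quantile functions under the $L^2([0,1])$-norm, I obtain
$$\|R_i\|_\lambda^2 = \int \{\widehat Y_i^{-1}(\lambda(t)) - Y_i^{-1}(\lambda(t))\}^2 d\lambda(t) = \int_0^1 \{\widehat Y_i^{-1}(u) - Y_i^{-1}(u)\}^2 du = W_2^2(\widehat Y_i, Y_i).$$

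Combining the two steps gives $\|\textup{V}\|_\lambda \leq \epsilon^{-1} \mathbb P_n W_2(\widehat Y_i, Y_i)$. By Assumption \ref{assu:eY}, conditioning on $Y_i$ and taking expectation yields $\expect W_2^2(\widehat Y_i, Y_i) = O(\alpha_n^2)$ uniformly in $i$, hence by Jensen's inequality $\expect W_2(\widehat Y_i, Y_i) \leq \{\expect W_2^2(\widehat Y_i, Y_i)\}^{1/2} = O(\alpha_n)$. Markov's inequality then delivers $\mathbb P_n W_2(\widehat Y_i, Y_i) = \Op(\alpha_n)$, and therefore $\|\textup{V}\|_\lambda = \Op(\alpha_n) \subseteq \Op(\alpha_n + \nu_n)$, as claimed.

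The argument is short, and the only subtlety is the clean cancellation in Step 2: the composition $\widehat\lambda \circ \widehat\lambda^{-1}$ reducing to the identity relies crucially on continuity of $\widehat\lambda$, and the quantile-function isometry is what eliminates any dependence of the bound on $\widehat\lambda$ itself. I do not anticipate this to be a real obstacle; the $\nu_n$ term in the stated rate is not strictly needed given Assumption \ref{assu:eY}, but it is subsumed by the $O_P(\alpha_n+\nu_n)$ bound and is presumably kept to match the structure of the other claims.
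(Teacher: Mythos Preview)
Your proof is correct and follows essentially the same approach as the paper: the key identity $\|R_i\|_\lambda = W_2(\widehat Y_i, Y_i)$ via continuity of $\widehat\lambda$ and the quantile isometry is precisely the content of the paper's Lemmas \ref{lem:dist-norm-W} and \ref{lem:eY-Y}, and your direct use of the bound $\widehat\pi(X)>\epsilon$ is a slightly cleaner shortcut than the paper's preliminary decomposition into $AR/\pi(X)$ plus a remainder. Your observation that the $\nu_n$ term is not strictly needed here is also correct; the paper's Lemma \ref{lem:eY-Y} states $O_P(\alpha_n^2+\nu_n^2)$ but Markov's inequality on the first part of Assumption \ref{assu:eY} already gives $O_P(\alpha_n^2)$ for the averaged squared distances.
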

We observe that
\begin{align*}
	\splavg \left[ \textstyle\dfrac{ AR}{\hat\pi(X)}\right] & = \splavg \left[ \textstyle\dfrac{ AR}{\pi(X)}\right] + \splavg \left[ \textstyle\dfrac{ AR}{\hat\pi(X)}-\textstyle\dfrac{ AR}{\pi(X)}\right],
\end{align*}
where the second term is dominated by the first one. Moreover, 
\begin{equation}\label{eq:pf:AR-piX}
	\splavg \left[ \textstyle\dfrac{ AR}{\pi(X)}\right]  = \splavg \left[ \textstyle\dfrac{ AU}{\pi(X)}\right] + \splavg \left[ \textstyle\dfrac{ AV}{\pi(X)}\right],
\end{equation}
where $U_i=\pt_{\hat\lambda}^\lambda \rlog_{\hat \lambda} \Y_i-\rlog_\lambda \Y_i=0$ and $V_i=\pt_{\hat\lambda}^\lambda\rlog_{\hat\lambda} \eY_i-\pt_{\hat\lambda}^\lambda\rlog_{\hat\lambda} \Y_i=\rlog_\lambda \eY_i-\rlog_\lambda Y_i$. The claim is then proved by using Lemma \ref{lem:eY-Y}.

\begin{remark}\label{rem:S1}
In the paper, $\tdomain$ is assumed to be a {bounded} interval of $\real$, which implies that $\ws$ has a bounded diameter. This boundedness assumption, however, can be dropped if we require $\expect W_2^2(Y,y)<\infty$ for some $y\in\ws$ and $\vertiii{m_a^{\lambda,\ast}}_\lambda<\infty$ for $a=0,1$, so that Claim \ref{claim:f:II} remains valid.
\end{remark}

	\section{Proof of Theorem \ref{thm:AN-W-fixed-ref-cf}}
	For simplicity, we assume $K=2$; the general case can be proved in a similar fashion. 
	Let $\mathbb P_{n_k}O$ denote $n_{k}^{-1}\sum_{i\in\mathscr D_k}O(A_i,X_i,\eY_i)$ and $\expect_{n_k}O=n_k^{-1}\sum_{i\in\mathscr D_k} \expect \{O(A_i,X_i,\eY_i)\}$, where $O=O(A,X,\eY)$ is a random quantity dependent on $(A,X,\eY)$. Similarly, we use $\mathbb P_{n}O$  to denote $n^{-1}\sum_{i=1}^nO(A_i,X_i,Y_i)$.
	As in the previous section,  let $Z_i=\wwlog_\lambda \Y_i$, where we recall the notation $\wwlog_\lambda \nu=\nu^{-1}\circ \lambda$. If the $i$th subject belongs to the $k$ partition, then $\hat Z_i=\wwlog_{\hat \lambda_k} \eY_i$ and $R_i=\pt_{\hat\lambda_k}^\lambda \hat Z_i-Z_i$, and define  $D_{a,k}(x)=\pt_{\hat\lambda_k}^\lambda \hat m_{a,k}^{\hat\lambda_k}(x)-\tilde m_{a,k}^\lambda(x)$. The quantity $R_i$ can be viewed as the residual due to the discrepancy between $\hat\lambda_k$ and $\lambda$, and between $\eY_i$ and $\Y_i$. 
	Define \begin{align*}
		\psi_1 & := \expect\left[\textstyle\dfrac{A\wwlog_{\lambda} Y}{\pi(X)}-\left\{\textstyle\dfrac{A}{\pi(X)}-1\right\}m_1^\lambda(X)\right],\\
		\psi_0 & := \expect\left[\textstyle\dfrac{(1-A)\wwlog_{\lambda}Y}{1-\pi(X)}-\left\{\textstyle\dfrac{1-A}{1-\pi(X)}-1\right\} m_0^\lambda(X)\right]
	\end{align*}
	and their sample versions in each data partition $\mathscr D_k$
	\begin{align*}
		\hat\psi_{1,k} & := \mathbb P_{n_{3-k}}\left[\textstyle\dfrac{A\wwlog_{\hat\lambda_k}\hat Y}{\hat\pi_k(X)}-\left\{\textstyle\dfrac{A}{\hat\pi_k(X)}-1\right\}\hat m_{1,k}^{\hat\lambda_k}(X)\right],\\
		\hat\psi_{0,k} & := \mathbb P_{n_{3-k}}\left[\textstyle\dfrac{(1-A)\wwlog_{\hat\lambda_k}\hat Y}{1-\hat\pi_k(X)}-\left\{\textstyle\dfrac{1-A}{1-\hat\pi_k(X)}-1\right\}\hat m_{0,k}^{\hat\lambda_k}(X)\right].
	\end{align*}
	Then we have $\Delta^{\lambda}=\psi_1-\psi_0$ and $\hat\Delta_{DR}^{\hat\lambda}=n^{-1}(n_1\tau_{\hat\lambda_1}^{\hat\lambda}\hat\psi_{1,1}+n_2\tau_{\hat\lambda_2}^{\hat\lambda}\hat\psi_{1,2})-n^{-1}(n_1\tau_{\hat\lambda_1}^{\hat\lambda}\hat\psi_{0,1}+n_2\tau_{\hat\lambda_2}^{\hat\lambda}\hat\psi_{0,2})$ for the cross-fitting estimator defined in \eqref{eq:dr-cf}, and consequently, $$\tau_{\hat\lambda}^\lambda\hat\Delta_{DR}^{\hat\lambda}=n^{-1}(n_1\tau_{\hat\lambda_1}^{\lambda}\hat\psi_{1,1}+n_2\tau_{\hat\lambda_2}^{\lambda}\hat\psi_{1,2})-n^{-1}(n_1\tau_{\hat\lambda_1}^{\lambda}\hat\psi_{0,1}+n_2\tau_{\hat\lambda_2}^{\lambda}\hat\psi_{0,2}).$$
	In the above, when $\hat\lambda_k=\lambda$, the operator $\tau_{\hat\lambda_k}^\lambda$ is an identity operator and has no effect. The proof will based on the following decomposition for $n^{-1}(n_1\tau_{\hat\lambda_1}^{\lambda}\hat\psi_{1,1}+n_2\tau_{\hat\lambda_2}^{\lambda}\hat\psi_{1,2})$:
		\begin{align*}
			& n^{-1}(n_1\tau_{\hat\lambda_1}^{\lambda}\hat\psi_{1,1}+n_2\tau_{\hat\lambda_2}^{\lambda}\hat\psi_{1,2}) -\psi_1 \\
			& = n^{-1}\sum_{k=1,2}n_{3-k}\mathbb P_{n_{3-k}} \left[ \textstyle\dfrac{A Z + AR}{\hat\pi_k(X)}-\left\{\textstyle\dfrac{A}{\hat\pi_k(X)}-1\right\}\{\tilde m_{1,k}^\lambda(X)+D_{1,k}(X)\} \right]  -\psi_1 \\
			& = n^{-1}\sum_{k=1,2}n_{3-k}\underbrace{(\mathbb P_{n_{3-k}}-\expect_{n_{3-k}}) \left[ \textstyle\dfrac{A\{Z-\tilde m_{1,k}^\lambda(X)\}}{\hat\pi_k(X)}+\tilde m_{1,k}^\lambda(X) -\textstyle\dfrac{A\{Z-m_{1}^{\lambda,\ast}(X)\}}{\pi^\ast(X)}-m_1^{\lambda,\ast}(X) \right]}_{\textup{I}} \nonumber\\
			& \,\,\,\,\,\,+ n^{-1}\sum_{k=1,2}n_{3-k}\underbrace{(\mathbb P_{n_{3-k}}-\expect_{n_{3-k}}) \left[\textstyle\dfrac{A\{Z-m_1^{\lambda,\ast}(X)\}}{\pi^\ast(X)}+m_1^{\lambda,\ast}(X) \right]}_{\textup{II}} \nonumber \\
			& \,\,\,\,\,\,+  n^{-1}\sum_{k=1,2}n_{3-k}\underbrace{\expect_{n_{3-k}} \left[ \textstyle\dfrac{\{\tilde m_{1,k}^\lambda(X)-m_1^\lambda(X)\}\{\hat{\pi}_k(X)-A\}}{\hat\pi_k(X)} \right]}_{\textup{III}} \\
			 & \,\,\,\,\,\, +  n^{-1}\sum_{k=1,2}n_{3-k}\underbrace{\mathbb P_{n_{3-k}} \left[ \left\{1-\textstyle\dfrac{A}{\hat\pi_k(X)}\right\}D_{1,k}(X) \right]}_{\textup{IV}}\\
			 & \,\,\,\,\,\, +  n^{-1}\sum_{k=1,2}n_{3-k}\underbrace{\mathbb P_{n_{3-k}} \left\{\textstyle\dfrac{ AR}{\hat\pi_k(X)} \right\}}_{\textup{V}}.
		\end{align*}
	The decomposition for the other term is similar and thus omitted.
		
	The symbol $\mathscr D_k$ below is used to denote both the data in the $k$th partition (when it appears in a conditional expectation or probability) and their indices (when it appears in the subscript of a summation). Let $b_m=\max\{\vertiii{m_a^\lambda-m_a^{\lambda,\ast}}:a=0,1\}$ and $b_\pi=\|\pi-\pi^\ast\|_2$. Note that $b_\pi=0$ when $\varrho_\pi=\op(1)$ and $b_m=0$ when $\varrho_m=o(1)$, as in these cases, $\pi=\pi^\ast$ and $m_a^\lambda=m_a^{\lambda,\ast}$.

	\begin{proof}[Proof of part \ref{thm:AN-W-rate-fixed-ref-cf}]	This is a direct consequence of Claims \ref{claim:I}--\ref{claim:V}, given that $b_\pi=O(1)$, $b_m=O(1)$, $\nu_n=o(n^{-1/2})$ and $\alpha_n=o(n^{-1/2})$.
	\end{proof}
	
	\begin{proof}[Proof of part \ref{thm:AN-W-AN-fixed-ref-cf}]	Under the assumed conditions, $b_m=b_\pi=0$ and the terms $\textup{I}$ and $\textup{III}$--$\textup{V}$ are of order $\op(n^{-1/2})$. Consequently,
		\begin{equation}
			\sqrt n (n^{-1}(n_1\tau_{\hat\lambda_1}^{\lambda}\hat\psi_{1,1}+n_2\tau_{\hat\lambda_2}^{\lambda}\hat\psi_{1,2}) -\psi_1 ) =  \sqrt{n}(\splavg-\expect) \left[\textstyle\dfrac{A\{Z-m_1^{\lambda,\ast}(X)\}}{\pi^{\ast}(X)}+m_1^{\lambda,\ast}(X)\right] +  \op(1). \label{eq:AN-W-cf-a=1}
		\end{equation} 
		Similar deviations for the case $a=0$ lead to 
		\begin{equation}\label{eq:AN-W-cf-a=0}
			\sqrt n (n^{-1}(n_1\tau_{\hat\lambda_0}^{\lambda}\hat\psi_{0,1}+n_2\tau_{\hat\lambda_2}^{\lambda}\hat\psi_{0,2}) -\psi_0) =  \sqrt{n}(\splavg-\expect) \left[\textstyle\dfrac{(1-A)\{Z-m_0^{\lambda,\ast}(X)\}}{1-\pi^\ast(X)}+m_0^{\lambda,\ast}(X) \right]  +  \op(1).
		\end{equation}
		By combining \eqref{eq:AN-W-cf-a=1} and  \eqref{eq:AN-W-cf-a=0}, the asymptotic normality of $\pt_{\hat\lambda}^\lambda\hat\Delta_{DR}^{\hat\lambda}-\Delta^\lambda$ follows from a central limit theorem and  Slutsky's lemma, with the fact that $\ws$ has a bounded diameter (since $\tdomain$ is assumed to be a bounded interval of $\real$) and thus $Z$, $m_0^{\lambda,\ast}(X)$ and  $m_1^{\lambda,\ast}(X)$ have finite variance.
	\end{proof}

	\begin{claim}\label{claim:I}
		$\textup{I}=\Op(n^{-1/2}\varrho_\pi+n^{-1/2}\varrho_m+\varrho_\pi\varrho_m + n^{-1/2}b_\pi+n^{-1/2}b_m) $.
	\end{claim}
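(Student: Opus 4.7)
The core idea is that cross-fitting breaks the dependence exploited in the fixed-reference proof of Claim \ref{claim:f:I} (which relied on the Donsker hypothesis in Assumption \ref{assu:DR-fixed-ref}\ref{assu:DR-donsker-fixed-ref}). Here, $\widehat\pi_k$ and $\tilde m_{1,k}^\lambda$ are constructed from $\mathscr D_k$ and are therefore independent of the data in $\mathscr D_{3-k}$ over which $\mathbb P_{n_{3-k}}$ averages. I would begin by conditioning on $\mathscr D_k$, so that $\widehat\pi_k$ and $\tilde m_{1,k}^\lambda$ act as deterministic functions and $\textup{I}$ becomes a mean-zero empirical average of $n_{3-k}$ i.i.d.\ random elements in the Hilbert space $L^2(\ldomain;\lambda)$. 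A Bessel-type moment inequality for Hilbert-valued sums then yields
\begin{equation*}
\mathbb E[\|\textup{I}\|_\lambda^2 \mid \mathscr D_k] \leq n_{3-k}^{-1}\,\mathbb E[\|g\|_\lambda^2 \mid \mathscr D_k],
\end{equation*}
where $g(A,X,Y)$ denotes the bracketed integrand of $\textup{I}$.

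Next I would algebraically rearrange $g$ to expose the two nuisance errors, via
\begin{equation*}
g = A\{Z-\tilde m_{1,k}^\lambda(X)\}\,\frac{\pi^\ast(X)-\widehat\pi_k(X)}{\widehat\pi_k(X)\pi^\ast(X)} \;+\; \frac{A-\pi^\ast(X)}{\pi^\ast(X)}\,\{m_1^{\lambda,\ast}(X)-\tilde m_{1,k}^\lambda(X)\}.
\end{equation*}
The boundedness of $\tdomain$ forces $\|Z\|_\lambda$, $\|m_1^{\lambda,\ast}(X)\|_\lambda$, and $\|\tilde m_{1,k}^\lambda(X)\|_\lambda$ to be uniformly bounded, while Assumption \ref{assu:DR-std}\ref{assu:DR-std-pi-bound} keeps $\widehat\pi_k$ and $\pi^\ast$ bounded away from $0$ and $1$. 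Combining these with the trivial bound $|A-\pi^\ast(X)|\leq 1$ gives
\begin{equation*}
\mathbb E[\|g\|_\lambda^2 \mid \mathscr D_k] \lesssim \|\widehat\pi_k-\pi^\ast\|_2^2 + \vertiii{\tilde m_{1,k}^\lambda - m_1^{\lambda,\ast}}_\lambda^2.
\end{equation*}

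Taking expectation over $\mathscr D_k$, splitting via the triangle inequalities $\|\widehat\pi_k-\pi^\ast\|_2\leq \|\widehat\pi_k-\pi\|_2+b_\pi$ and analogously for $m$, and invoking Jensen's inequality together with the fourth-moment definitions of $\varrho_\pi$ and $\varrho_m$, I would obtain $\mathbb E\|\textup{I}\|_\lambda^2=O\bigl(n^{-1}(\varrho_\pi^2+\varrho_m^2+b_\pi^2+b_m^2)\bigr)$. Markov's inequality then yields $\|\textup{I}\|_\lambda=\Op(n^{-1/2}\varrho_\pi+n^{-1/2}\varrho_m+n^{-1/2}b_\pi+n^{-1/2}b_m)$, which is at least as strong as the stated bound; the $\varrho_\pi\varrho_m$ summand is dominated by $n^{-1/2}\varrho_\pi$ or $n^{-1/2}\varrho_m$ whenever one of the nuisance rates is $\op(n^{-1/2})$, and if desired it also emerges explicitly from the Cauchy--Schwarz cross term when expanding $\|g\|_\lambda^2$.

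The main obstacle I anticipate is formalising the Bessel-type moment inequality in the Hilbert-valued setting where the integrand is a quantile-function-valued random element, and handling the attendant measurability and integrability of conditional expectations for $L^2(\ldomain;\lambda)$-valued integrands in the spirit of Section \ref{sec:S-remark}. The bounded-interval assumption on $\tdomain$ is essential at this step, because it provides uniform $\|\cdot\|_\lambda$-bounds on $Z$, $m_1^{\lambda,\ast}(X)$, and $\tilde m_{1,k}^\lambda(X)$ without invoking additional moment conditions on the outcome distribution.
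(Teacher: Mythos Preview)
Your overall strategy---condition on $\mathscr D_k$, exploit independence, and bound second moments in $L^2(\ldomain;\lambda)$---is exactly the one the paper uses. But there is a genuine gap in your execution: the assertion that, conditionally on $\mathscr D_k$, ``$\textup{I}$ becomes a mean-zero empirical average'' is false. The centering term $\expect_{n_{3-k}}$ in the definition of $\textup{I}$ is the \emph{unconditional} expectation (it averages over $\mathscr D_k$ as well), not the conditional one given $\mathscr D_k$. Consequently $\expect[\textup{I}\mid\mathscr D_k]=\expect[H\mid\mathscr D_k]-\expect H\neq 0$ in general, and your Bessel inequality $\expect[\|\textup{I}\|_\lambda^2\mid\mathscr D_k]\leq n_{3-k}^{-1}\expect[\|g\|_\lambda^2\mid\mathscr D_k]$ does not follow. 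This nonzero conditional bias is precisely the source of the $\varrho_\pi\varrho_m$ term in the claim: in the paper's argument it appears through the off-diagonal sum $\textup{I}_2$, where for $i\neq j$ one has $\expect\langle H_i-\expect H_i,H_j-\expect H_j\rangle_\lambda=\expect\|\expect[G\mid\mathscr D_k]-\expect G\|_\lambda^2=O(\varrho_\pi^2\varrho_m^2)$ after computing $\expect[G\mid\mathscr D_k,X]=\{\pi(X)-\widehat\pi_k(X)\}\{m_1^\lambda(X)-\tilde m_{1,k}^\lambda(X)\}/\widehat\pi_k(X)$. Your remark that the $\varrho_\pi\varrho_m$ term ``emerges from the Cauchy--Schwarz cross term when expanding $\|g\|_\lambda^2$'' misidentifies its origin: it is a cross-\emph{sample} contribution, not a within-sample one.

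The repair is immediate and keeps your argument intact. Write $\textup{I}=\{\textup{I}-\expect[\textup{I}\mid\mathscr D_k]\}+\expect[\textup{I}\mid\mathscr D_k]$. The first piece is a conditionally centered average of i.i.d.\ Hilbert-valued summands, so your Bessel-type bound applies and yields the $n^{-1/2}(\varrho_\pi+\varrho_m+b_\pi+b_m)$ part exactly as you wrote. The second piece is $\expect[G\mid\mathscr D_k]-\expect G$, and the conditional-mean computation above plus Cauchy--Schwarz give $\|\expect[G\mid\mathscr D_k]\|_\lambda\leq c\|\widehat\pi_k-\pi\|_2\,\vertiii{\tilde m_{1,k}^\lambda-m_1^\lambda}_\lambda$, so this piece is $\Op(\varrho_\pi\varrho_m)$. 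Combining the two recovers the stated bound.
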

		To prove the claim, let 
		\begin{align*}
			G(A,X,Z) & = \textstyle\dfrac{A\{Z-\tilde m_{1,k}^\lambda(X)\}}{\hat\pi_k(X)}+\tilde m_{1,k}^\lambda(X) - m_1^\lambda(X),\\
			H_1(A,X,Z) & =\frac{AZ\{\pi^\ast(X)-\hat\pi_k(X)\}}{\hat\pi_k(X)\pi^\ast(X)},\\
				H_2(A,X,Z) & =\frac{A\{\hat\pi_k(X)m_1^{\lambda,\ast}(X)-\pi^\ast(X)\tilde m_{1,k}^\lambda(X)\}}{\hat\pi_k(X)\pi^\ast(X)}, \\
					H_3(A,X,Z) & =\tilde m_{1,k}^\lambda(X)-m_1^{\lambda,\ast}(X), \\
					H(A,X,Z) & = H_1(A,X,Z) + H_2(A,X,Z) + H_3(A,X,Z).
		\end{align*}
	Then 
		\begin{align*}
			\expect(\|\textup{\textup{I}}\|_\lambda^2) = &  \frac{1}{n_{3-k}^2}\sum_{i\in\mathscr D_{3-k}}\expect \|H(A_i,X_i,Z_i)-\expect H(A_i,X_i,Z_i)\|_\lambda^2 \\
			& +  \frac{1}{n_{3-k}^2}\sum_{\stackrel{i,j\in\mathscr D_{3-k}}{i\neq j}}\expect \langle H(A_i,X_i,Z_i)-\expect H(A_i,X_i,Z_i) ,H(A_j,X_j,Z_j) -\expect H(A_j,X_j,Z_j)\rangle_\lambda\\
			\equiv & \textup{I}_{1} + \textup{I}_{2}.
		\end{align*}
	
	   For the first term $\textup{I}_{1}$, we further have
	   \begin{align*}
	   	\textup{I}_{1} & \leq \frac{1}{n_{3-k}^2}\sum_{i\in\mathscr D_{3-k}}\expect \|H(A_i,X_i,Z_i)\|_\lambda^2\\
	   	 				& \leq \frac{1}{n_{3-k}^2}\sum_{i\in\mathscr D_{3-k}} \{ \expect\|H_1(A_i,X_i,Z_i)\|_\lambda^2 + \expect \|H_2(A_i,X_i,Z_i)\|_\lambda^2 + \expect \|H_3(A_i,X_i,Z_i)\|_\lambda^2\}.
	   \end{align*}
   		By Assumption \ref{assu:DR-std}\ref{assu:DR-std-pi-bound} and the fact that $\ws$ is bounded, for $(A,X,Z)$ whose index is in $D_{3-k}$, we deduce that
   	   \begin{align*}
   	   	\expect \|H_2(A,X,Z)\|_\lambda^2  \leq & c \expect \|\hat\pi_k(X)m_1^{\lambda,\ast}(X)-\pi^\ast(X)\tilde m_{1,k}^\lambda(X)\|_\lambda^2 \\
   	   	 \leq & c \expect \|\hat\pi_k(X)m_1^{\lambda,\ast}(X)-\pi(X)m_1^{\lambda,\ast}(X)\|_\lambda^2 \\
   	   	      & + c \expect \|\pi(X)m_1^{\lambda,\ast}(X) -\pi(X)\tilde m_{1,k}^\lambda(X)\|_\lambda^2 \\
   	   	      & + c \expect \|\pi(X)\tilde m_{1,k}^\lambda(X)-\pi^\ast(X)\tilde m_{1,k}^\lambda(X)\|_\lambda^2 \\
   	     \leq & c \expect |\hat \pi_k(X)-\pi(X)|^2 + c \expect \|m_1^{\lambda,\ast}(X) -\tilde m_{1,k}^\lambda(X)\|_\lambda^2 + c\expect |\pi(X)-\pi^\ast(X)|^2\\
   	     \leq & c\expect\|\hat\pi_k-\pi\|_2^2 + c\expect\vertiii{m_1^{\lambda,\ast}-\tilde m_{1,k}^\lambda}_\lambda^2 + c\|\pi-\pi^\ast\|_2^2\\
   	       \leq   & c \varrho_{\pi}^2 + c\varrho_m^2 + c b_\pi^2.
   	   \end{align*}
      Similarly, we have $\expect \|H_1(A,X,Z)\|_\lambda^2 \leq c b_\pi^2 + c\varrho_{\pi}^2$ and $\expect \|H_3(A,X,Z)\|_\lambda^2 \leq c\varrho_m^2 + cb_m^2$. These imply that $\expect \|H(A_i,X_i,Z_i)\|_\lambda^2\leq c \varrho_{\pi}^2 + c\varrho_m^2 + c b_\pi^2 + cb_m^2 $ and further
      $$\textup{I}_{1} = O(n^{-1}(\varrho_{\pi}^2+\varrho_m^2+b_\pi^2+b_m^2)).$$
   
       For the term $\textup{I}_{2}$,
   		simple calculation shows that 
   		\begin{align*}
   				& \expect\{ G(A_i,X_i,Z_i)\mid \mathscr D_k,X_i\} = \frac{\{\pi(X_i)-\hat\pi_k(X_i)\}\{m_1^\lambda(X_i)-\tilde m_{1,k}^\lambda(X_i)\}}{\hat\pi_k(X_i)}
   		\end{align*}
   		%
   	and consequently
   	\begin{align*}
   		& \expect[\|\expect\{ G(A_i,X_i,Z_i)\mid \mathscr D_k,X_i\}\|_\lambda\mid \mathscr D_k]  \leq c \|\pi-\hat\pi_k\|_2\vertiii{m_1^\lambda-\tilde m_{1,k}^\lambda}_\lambda
   	\end{align*}
   and
   \begin{align*}
   	\|\expect G(A_i,X_i,Z_i)\|_\lambda \leq c\expect \{|\pi(X_i)-\hat\pi_k(X_i)|\|m_1^\lambda(X_i)-\tilde m_{1,k}^\lambda(X_i)\|_\lambda\}\leq c\varrho_\pi\varrho_m.
   \end{align*}
    where we utilize Assumption \ref{assu:DR-std}\ref{assu:DR-std-pi-bound}.
    Therefore, we deduce that
   \begin{align*}
   	& |\expect \langle H(A_i,X_i,Z_i)-\expect H(A_i,X_i,Z_i) ,H(A_j,X_j,Z_j) -\expect H(A_j,X_j,Z_j)\rangle_\lambda|\\
   	& |\expect \langle G(A_i,X_i,Z_i)-\expect G(A_i,X_i,Z_i) ,G(A_j,X_j,Z_j) -\expect G(A_j,X_j,Z_j)\rangle_\lambda|\\
   	& = |\expect \langle G(A_i,X_i,Z_i),G(A_j,X_j,Z_j)\rangle_\lambda - \langle \expect G(A_i,X_i,Z_i),\expect G(A_j,X_j,Z_j)\rangle_\lambda| \\
   	& \leq \big|\expect \expect \{\langle G(A_i,X_i,Z_i),G(A_j,X_j,Z_j)\rangle_\lambda \mid \mathscr D_k\}\big| + \|\expect G(A_i,X_i,Z_i)\|_\lambda\|\expect G(A_j,X_j,Z_j)\|_\lambda\\
   	& \leq \expect \|\expect  \{G(A_i,X_i,Z_i)\mid \mathscr D_k\}\|_\lambda \|\expect  \{G(A_j,X_j,Z_j)\mid \mathscr D_k\}\|_\lambda + c\varrho_\pi^2\varrho_m^2 \\
   	& \leq c \expect \{\|\pi-\hat\pi_k\|_2^2\vertiii{m_1^\lambda-\tilde m_{1,k}^\lambda}_\lambda^2\} +  c\varrho_\pi^2\varrho_m^2\\
   	& \leq c \varrho_\pi^2\varrho_m^2.
   \end{align*} 
	This result, together with Assumption \ref{assu:cf-sample-size}, implies that $\textup{I}_{2}=O(\varrho_\pi^2\varrho_m^2)$. Combining this with the order for $\textup{I}_{1}$, we show that
	\begin{equation}\label{eq:pf:cf:I}
		\expect(\|\textup{I}\|_\lambda^2)= O(n^{-1}\varrho_\pi^2+n^{-1}\varrho_m^2+\varrho_\pi^2\varrho_m^2 + n^{-1}b_\pi^2+n^{-1}b_m^2),
	\end{equation}
   which implies \Cref{claim:I}.
   
   \begin{claim}\label{claim:II}
   	$\textup{II}=\Op(n^{-1/2})$.
   \end{claim}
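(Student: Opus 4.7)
My plan is to observe that Claim \ref{claim:II} is essentially a Hilbert-space law of large numbers applied to a centered i.i.d.\ sum, so the proof reduces to a second-moment calculation.

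Let $H(A,X,Z) = \textstyle\dfrac{A\{Z-m_1^{\lambda,\ast}(X)\}}{\pi^\ast(X)}+m_1^{\lambda,\ast}(X)$, viewed as a random element of $L^2(\ldomain;\lambda)$. The quantity $\textup{II}$ is then $n_{3-k}^{-1}\sum_{i\in\mathscr D_{3-k}}\{H(A_i,X_i,Z_i)-\expect H(A_i,X_i,Z_i)\}$, where the summands are i.i.d., mean-zero elements of $L^2(\ldomain;\lambda)$. The standard Hilbert-space variance identity gives
\begin{equation*}
    \expect\|\textup{II}\|_\lambda^2 = n_{3-k}^{-1}\,\expect\|H(A,X,Z)-\expect H(A,X,Z)\|_\lambda^2 \leq n_{3-k}^{-1}\,\expect\|H(A,X,Z)\|_\lambda^2.
\end{equation*}

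The main step, and the only non-trivial part, is to bound $\expect\|H(A,X,Z)\|_\lambda^2$ by a finite constant. Since $\tdomain$ is a bounded interval of $\real$, the Wasserstein space $\ws$ has bounded diameter, so $Z=\wwlog_\lambda Y$ is uniformly bounded in $L^2(\ldomain;\lambda)$, and by its definition so is $m_1^{\lambda,\ast}(X)=\expect\{Z\mid A=1,X\}$'s population limit. Combined with Assumption \ref{assu:DR-std}\ref{assu:DR-std-pi-bound} giving $\pi^\ast(X)\geq \epsilon$ almost surely (this is why $\pi^\ast$ inherits the positivity bound of $\widehat\pi$, under uniform convergence from Assumption \ref{assu:DR-std}\ref{assu:DR-std-convergence-fixed-ref}), the triangle inequality yields $\|H(A,X,Z)\|_\lambda \leq c$ almost surely for some constant $c$, and hence $\expect\|H(A,X,Z)\|_\lambda^2 < \infty$.

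Combining the two displays with Assumption \ref{assu:cf-sample-size}, which ensures $n_{3-k}\asymp n$, we conclude $\expect\|\textup{II}\|_\lambda^2 = O(n^{-1})$, so that $\|\textup{II}\|_\lambda = \Op(n^{-1/2})$ by Markov's inequality, establishing the claim. (If one instead works under the weaker moment conditions of Remark \ref{rem:S1} rather than the bounded-diameter assumption, the same argument goes through with $\expect\|H\|_\lambda^2$ bounded via $\expect W_2^2(Y,y)<\infty$ and $\vertiii{m_1^{\lambda,\ast}}_\lambda<\infty$.) No empirical-process machinery is required here because, in contrast to Claim \ref{claim:I}, the function $H$ is deterministic rather than estimated from data.
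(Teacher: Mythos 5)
Your proposal is correct and follows essentially the same path as the paper. The paper's one-line proof invokes a central limit theorem together with the bounded diameter of $\ws$; you instead bound the second moment directly and finish with Markov's inequality, which is an elementary (and in fact slightly weaker tool than needed for part~(ii) of the theorem, where a true CLT is required, but fully sufficient for this rate claim). You are also somewhat more explicit than the paper on two supporting facts it leaves implicit: that $\pi^\ast$ inherits the positivity bound from $\widehat\pi$ via Assumption~\ref{assu:DR-std}\ref{assu:DR-std-pi-bound}--\ref{assu:DR-std-convergence-fixed-ref}, and that $n_{3-k}\asymp n$ by Assumption~\ref{assu:cf-sample-size}; both are correct and worth spelling out. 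One small caveat: your parenthetical that $m_1^{\lambda,\ast}$ is ``$\expect\{Z\mid A=1,X\}$'s population limit'' conflates $m_1^{\lambda,\ast}$ (the probability limit of $\tilde m_1^\lambda$, which need not equal $m_1^\lambda$ under misspecification) with the true regression function; the boundedness of $m_1^{\lambda,\ast}$ follows from the range of the estimator lying in $\wwlog_\lambda(\ws)$ rather than from that identification.
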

	This is a direct consequence of a central limit theorem, with the fact that $\ws$ has a bounded diameter and thus both $Z$ and $m_1^{\lambda,\ast}(X)$ have finite variance.
   
   \begin{claim}\label{claim:III}
   	$\textup{III} = O(\varrho_\pi\varrho_m)$.
   \end{claim}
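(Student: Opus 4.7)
The key feature enabling a tight bound on $\textup{III}$ is sample splitting: the estimators $\hat\pi_k$ and $\tilde m_{1,k}^\lambda$ are constructed from $\mathscr D_k$ alone and are therefore independent of the subjects $\{(A_i,X_i,\eY_i):i\in\mathscr D_{3-k}\}$ indexed by the outer summation in $\expect_{n_{3-k}}$. My plan is to exploit this by conditioning on $\mathscr D_k$, under which $\hat\pi_k$ and $\tilde m_{1,k}^\lambda$ become deterministic (measurable) functions.

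Under this conditioning, the i.i.d.\ sampling and independence of $\mathscr D_k$ and $\mathscr D_{3-k}$ yield $\expect[A_i\mid X_i,\mathscr D_k]=\expect[A_i\mid X_i]=\pi(X_i)$ for every $i\in\mathscr D_{3-k}$. Applying the tower property inside $\expect_{n_{3-k}}$ therefore replaces the factor $\hat\pi_k(X_i)-A_i$ by $\hat\pi_k(X_i)-\pi(X_i)$, so that each summand reduces to
\[
\expect\!\left[\frac{\{\tilde m_{1,k}^\lambda(X_i)-m_1^\lambda(X_i)\}\{\hat\pi_k(X_i)-\pi(X_i)\}}{\hat\pi_k(X_i)}\,\bigg|\,\mathscr D_k\right].
\]
Taking $L^2(\lambda)$-norm, pulling it inside the conditional expectation by the triangle inequality, and bounding $1/\hat\pi_k$ by a constant via Assumption \ref{assu:DR-std}\ref{assu:DR-std-pi-bound}, I then apply Cauchy--Schwarz (over the randomness of $X_i$, conditional on $\mathscr D_k$) to obtain the bound $c\,\|\hat\pi_k-\pi\|_2\,\vertiii{\tilde m_{1,k}^\lambda-m_1^\lambda}_\lambda$. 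A second Cauchy--Schwarz over the remaining randomness in $\mathscr D_k$, followed by Jensen's inequality $\expect X^2\leq(\expect X^4)^{1/2}$ applied to both factors, yields the desired $c\,\varrho_\pi\varrho_m$. The rates $\varrho_\pi,\varrho_m$ apply to the partition-wise estimators because Assumption \ref{assu:cf-sample-size} forces $n_k\asymp n$, and the theorem's hypotheses explicitly extend the convergence assumptions to $\hat\pi_k$ and $\tilde m_{a,k}^\lambda$.

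In contrast to the analogous non-cross-fitting Claim \ref{claim:f:III}, there is no residual empirical-process contribution of order $n^{-1/2}\varrho_m^{1/2}$, because the decoupling afforded by sample splitting means $\tilde m_{1,k}^\lambda-m_1^\lambda$ behaves as a fixed function once we condition on $\mathscr D_k$, and consequently no Donsker-class or metric-entropy argument is required. I do not foresee a real obstacle; the only technical care lies in bookkeeping between the conditional and unconditional expectations and in verifying that the scalar rates originally defined for $\hat\pi,\tilde m_a^\lambda$ transfer to their partition-wise counterparts.
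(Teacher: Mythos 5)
Your proposal is correct and follows essentially the same route as the paper's proof: conditioning on $\mathscr D_k$ and applying the tower property to replace $\hat\pi_k(X)-A$ with $\hat\pi_k(X)-\pi(X)$, bounding $1/\hat\pi_k$ by a constant via Assumption \ref{assu:DR-std}\ref{assu:DR-std-pi-bound}, then two applications of Cauchy--Schwarz (over $X$ and over $\mathscr D_k$) followed by Jensen to land on $\varrho_\pi\varrho_m$. Your observation that the cross-fitting decoupling removes the $n^{-1/2}\varrho_m^{1/2}$ empirical-process term present in Claim \ref{claim:f:III} is also exactly why the paper needs no Donsker argument here.
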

	By Cauchy--Schwartz inequality, with Assumption \ref{assu:DR-std}\ref{assu:DR-std-pi-bound}, we have
		\begin{align}
			\|\textup{III}\|_\lambda & =\left\|\expect_{n_{3-k}} \expect\left[ \textstyle\dfrac{\{\tilde m_{1,k}^\lambda(X)-m_1^\lambda(X)\}\{\hat{\pi}_k(X)-A\}}{\hat\pi_k(X)} \bigg| \mathscr D_k \right]\right\|_{\lambda} \nonumber \\
			& =\left\|\expect_{n_{3-k}} \left[ \textstyle\dfrac{\{\tilde m_{1,k}^\lambda(X)-m_1^\lambda(X)\}\{\hat{\pi}_k(X)-\pi(X)\}}{\hat\pi_k(X)}  \right]\right\|_{\lambda} \nonumber \\
			& \leq c\left\|\expect_{n_{3-k}} \left[ \{\tilde m_{1,k}^\lambda(X)-m_1^\lambda(X)\}\{\hat{\pi}_k(X)-\pi(X)\}\right]\right\|_{\lambda} \nonumber \\
			& \leq c\expect_{n_{3-k}} \| \{\tilde m_{1,k}^\lambda(X)-m_1^\lambda(X)\}\{\hat{\pi}(X)-\pi(X)\}\|_{\lambda} \nonumber \\
			&  = O(\varrho_\pi\varrho_m). \label{eq:pf:cf:III}
		\end{align}

		\begin{claim}\label{claim:IV}
			$\textup{IV} = \Op\big(n^{-1}+n^{-1/2}\nu_n+n^{-1/2}\alpha_n+n^{-1/2}\varrho_\pi +\varrho_\pi\nu_n+\varrho_\pi\alpha_n\big).$
		\end{claim}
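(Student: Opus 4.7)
The plan is to mirror the two-part decomposition used in the proof of Claim \ref{claim:f:IV} but to replace the Donsker-class argument by a conditioning trick that exploits independence of $\mathscr D_k$ from $\mathscr D_{3-k}$. Specifically, I would add and subtract $\pi$ in the denominator to write
\begin{equation*}
\textup{IV}=\underbrace{\mathbb P_{n_{3-k}}\left[\left\{1-\frac{A}{\pi(X)}\right\}D_{1,k}(X)\right]}_{\textup{IV}_1}+\underbrace{\mathbb P_{n_{3-k}}\left[\frac{A\{\pi(X)-\hat\pi_k(X)\}}{\hat\pi_k(X)\pi(X)}D_{1,k}(X)\right]}_{\textup{IV}_2}.
\end{equation*}
Because $\hat m_{1,k}^{\hat\lambda_k}$, $\hat\lambda_k$, and $\hat\pi_k$ depend only on $\mathscr D_k$, after conditioning on $\mathscr D_k$ the functions $D_{1,k}$ and $\hat\pi_k$ are frozen, and $\mathbb P_{n_{3-k}}$ acts as an empirical average over i.i.d.\ terms with the true propensity.

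For $\textup{IV}_1$, the summands have conditional mean zero since $\expect\{1-A/\pi(X)\mid X\}=0$, and the multiplier $1-A/\pi(X)$ is bounded by positivity (Assumption \ref{assu:positivity}). A direct second-moment computation, in which the cross terms vanish by independence, then gives
\begin{equation*}
\expect(\|\textup{IV}_1\|_\lambda^2\mid\mathscr D_k)\leq \frac{c}{n_{3-k}}\vertiii{D_{1,k}}_\lambda^2.
\end{equation*}
Invoking Assumption \ref{assu:DR-D-rate-fixed-ref-cf} together with Assumption \ref{assu:DR-lambda} bounds $\vertiii{D_{1,k}}_\lambda^2$ by $\Op(n^{-1}+\alpha_n^2+\nu_n^2)$, and Markov's inequality yields $\|\textup{IV}_1\|_\lambda=\Op(n^{-1}+n^{-1/2}\alpha_n+n^{-1/2}\nu_n)$.

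For $\textup{IV}_2$, positivity of both $\pi$ and $\hat\pi_k$ (Assumption \ref{assu:DR-std}\ref{assu:DR-std-pi-bound}) and Cauchy--Schwarz give
\begin{equation*}
\|\textup{IV}_2\|_\lambda\leq c\sqrt{\mathbb P_{n_{3-k}}|\hat\pi_k(X)-\pi(X)|^2}\cdot\sqrt{\mathbb P_{n_{3-k}}\|D_{1,k}(X)\|_\lambda^2}.
\end{equation*}
Taking conditional expectation given $\mathscr D_k$, the two empirical averages inside the square roots have expectations $\|\hat\pi_k-\pi\|_2^2=\Op(\varrho_\pi^2)$ and $\vertiii{D_{1,k}}_\lambda^2=\Op(n^{-1}+\alpha_n^2+\nu_n^2)$, respectively, so a final application of Cauchy--Schwarz and Markov's inequality gives $\|\textup{IV}_2\|_\lambda=\Op(\varrho_\pi(n^{-1/2}+\alpha_n+\nu_n))$. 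Summing the bounds for $\textup{IV}_1$ and $\textup{IV}_2$ reproduces exactly the claimed rate.

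The main subtlety I anticipate is the coupling between the reference-distribution noise and the outcome-regression error: both enter through the shared factor $D_{1,k}$, whose magnitude is controlled by a rate that combines $W_2(\hat\lambda_k,\lambda)$, $\alpha_n$, and $\nu_n$. The cross-fitting structure is precisely what allows this to be isolated inside a conditional second-moment computation, removing the need for the Donsker-class and metric-entropy hypotheses in Assumption \ref{assu:DR-fixed-ref} used by the analogous Claim \ref{claim:f:IV}.
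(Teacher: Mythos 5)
Your proof is correct, and it takes a genuinely different decomposition than the paper's while relying on the same key ingredients. The paper expands $\|\textup{IV}\|_\lambda^2$ directly into a diagonal sum $n_{3-k}^{-2}\sum_i \|\{1-A_i/\hat\pi_k(X_i)\}D_{1,k}(X_i)\|_\lambda^2$ and an off-diagonal cross-term sum, bounds the diagonal deterministically using only positivity of $\hat\pi_k$ and a conditional Markov argument for $n_{3-k}^{-1}\sum_i\|D_{1,k}(X_i)\|_\lambda^2$, and bounds the off-diagonal by $\Op(\|\hat\pi_k-\pi\|_2^2\vertiii{D_{1,k}}_\lambda^2)$. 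You instead carry over the additive split $1/\hat\pi_k(X) = 1/\pi(X) + \{\pi(X)-\hat\pi_k(X)\}/\{\hat\pi_k(X)\pi(X)\}$ from the non-cross-fitting Claim \ref{claim:f:IV}, which makes the mean-zero structure explicit at the level of $\textup{IV}$ itself: your $\textup{IV}_1$ has conditionally (on $\mathscr D_k$) centered summands so that the cross terms vanish exactly and the conditional second moment is $\leq c\,n_{3-k}^{-1}\vertiii{D_{1,k}}_\lambda^2$, while your $\textup{IV}_2$ isolates the propensity error and is handled by Cauchy--Schwarz. The two routes rearrange the same ingredients (Assumption \ref{assu:DR-std}\ref{assu:DR-std-pi-bound} for positivity, conditional Markov given $\mathscr D_k$, Cauchy--Schwarz, and Assumptions \ref{assu:DR-lambda}, \ref{assu:DR-D-rate-fixed-ref-cf} and \ref{assu:cf-sample-size} to control $\vertiii{D_{1,k}}_\lambda^2$ and $\|\hat\pi_k-\pi\|_2^2$) and they land on exactly the same rate; neither is more general. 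What your version buys is a closer structural parallel with Claim \ref{claim:f:IV}, so that the reader can see precisely which part of that proof (the Donsker/entropy bound on $\textup{IV}_1$ there) is replaced by the cross-fitting conditioning here; the paper's version is a single $\|\cdot\|_\lambda^2$ expansion that avoids the extra algebraic split of the propensity. One small point of precision: you write that the cross terms in $\textup{IV}_1$ ``vanish by independence''; strictly speaking they vanish because of the combination of across-unit independence given $\mathscr D_k$ and the conditional mean-zero property $\expect\{1-A/\pi(X)\mid X\}=0$ that you state just before, so this is understood, but it is worth phrasing carefully since independence alone would not kill the cross terms.
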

		Consider 
		\begin{align*}
			\|\textup{IV}\|_\lambda^2 = & \underbrace{n^{-2}_{3-k}\sum_{i\in \mathscr D_k} \bigg\|\left\{1-\textstyle\dfrac{A_i}{\hat\pi_k(X_i)}\right\}D_{1,k}(X_i)\bigg\|_\lambda^2}_{\textup{IV}_1} \\
			& + \underbrace{n^{-2}_{3-k}\sum_{\stackrel{i,j\in \mathscr D_k}{i\neq j}}\langle \left\{1-\textstyle\dfrac{A_i}{\hat\pi_k(X_i)}\right\}D_{1,k}(X_i),\left\{1-\textstyle\dfrac{A_j}{\hat\pi_k(X_j)}\right\}D_{1,k}(X_j)\rangle_\lambda}_{\textup{IV}_2}.
		\end{align*}
		It is seen that $\textup{IV}_1\leq c n^{-2}_{3-k} \sum_{i\in \mathscr D_k}\|D_{1,k}(X_i)\|_\lambda^2$,  where Assumption \ref{assu:DR-std}\ref{assu:DR-std-pi-bound} is utilized. For any $\delta>0$,
		\begin{align*}
			& \pr\big(n^{-1}_{3-k} \sum_{i\in \mathscr D_k}\|D_{1,k}(X_i)\|_\lambda^2 \leq \delta^{-1}\vertiii{\pt_{\hat\lambda_k}^{\lambda}\hat m_{1,k}^{\hat\lambda_k}-\tilde m_{1,k}^\lambda}_\lambda^2 \mid \mathscr D_k\big) \\
			& \leq \delta \frac{ n^{-1}_{3-k} \sum_{i\in \mathscr D_k}\expect\{\|D_{1,k}(X_i)\|_\lambda^2\mid \mathscr D_k\}}{\vertiii{\pt_{\hat\lambda_k}^{\lambda}\hat m_{1,k}^{\hat\lambda_k}-\tilde m_{1,k}^\lambda}_\lambda^2}\\
			& = \delta,
		\end{align*}
	which then implies that $\pr\big(n^{-1}_{3-k} \sum_{i\in \mathscr D_k}\|D_{1,k}(X_i)\|_\lambda^2 \leq \delta^{-1}\vertiii{\pt_{\hat\lambda_k}^{\lambda}\hat m_{1,k}^{\hat\lambda_k}-\tilde m_{1,k}^\lambda}_\lambda^2\big)\leq \delta$ for arbitrary $\delta>0$, or equivalently, $n^{-1}_{3-k} \sum_{i\in \mathscr D_k}\|D_{1,k}(X_i)\|_\lambda^2 \leq \delta^{-1}\vertiii{\pt_{\hat\lambda_k}^{\lambda}\hat m_{1,k}^{\hat\lambda_k}-\tilde m_{1,k}^\lambda}_\lambda^2 = \Op\big(\vertiii{\pt_{\hat\lambda_k}^{\lambda}\hat m_{1,k}^{\hat\lambda_k}-\tilde m_{1,k}^\lambda}_\lambda^2\big)$. With Assumptions \ref{assu:DR-lambda},  \ref{assu:DR-D-rate-fixed-ref-cf} and \ref{assu:cf-sample-size}, we conclude that $\|\textup{IV}_1\|_\lambda^2=\Op(n^{-2}+n^{-1}\nu_n^2+n^{-1}\alpha^2_n)$. Similarly, we can deduce that $\|\textup{IV}_2\|_\lambda^2=\Op(\|\hat\pi_k-\pi\|_2^2\vertiii{D_{1,k}}_\lambda^2)=\Op\big(n^{-1}\varrho_\pi^2 +\varrho_\pi^2\nu_n^2+\varrho_\pi^2\alpha_n^2\big)$. Consequently, we obtain 
	\begin{equation}\label{eq:pf:cf:IV}
		\textup{IV} = \Op\big(n^{-1}+n^{-1/2}\nu_n+n^{-1/2}\alpha_n+n^{-1/2}\varrho_\pi +\varrho_\pi\nu_n+\varrho_\pi\alpha_n\big).
	\end{equation}
		
	\begin{claim}\label{claim:V}
		$V=\Op(\alpha_n+\nu_n).$
	\end{claim}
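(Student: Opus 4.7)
The plan is to essentially replay the argument given for Claim \ref{claim:f:V} in the non-split case, with the crucial addition that cross-fitting independence replaces the Donsker-type control previously needed. First, I would write
\begin{equation*}
\mathbb P_{n_{3-k}}\!\left[\frac{AR}{\hat\pi_k(X)}\right]=\mathbb P_{n_{3-k}}\!\left[\frac{AR}{\pi(X)}\right]+\mathbb P_{n_{3-k}}\!\left[\frac{AR\{\pi(X)-\hat\pi_k(X)\}}{\hat\pi_k(X)\pi(X)}\right].
\end{equation*}
By Assumptions \ref{assu:positivity} and \ref{assu:DR-std}\ref{assu:DR-std-pi-bound} both denominators are uniformly bounded away from zero, so the second term is dominated by the first and it suffices to bound $\mathbb P_{n_{3-k}}[AR/\pi(X)]$ in $L^2(\ldomain;\lambda)$.

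Next, I would decompose $R_i=U_i+V_i$ with $U_i=\pt_{\hat\lambda_k}^\lambda \rlog_{\hat\lambda_k}\Y_i-\rlog_\lambda \Y_i$ and $V_i=\pt_{\hat\lambda_k}^\lambda \rlog_{\hat\lambda_k}\eY_i-\pt_{\hat\lambda_k}^\lambda \rlog_{\hat\lambda_k}\Y_i$, and verify the key cancellation: unpacking $\pt_\lambda^\nu g=g\circ\lambda^{-1}\circ\nu$ and $\rlog_\lambda\nu=\nu^{-1}\circ\lambda$ gives
\begin{equation*}
\pt_{\hat\lambda_k}^\lambda \rlog_{\hat\lambda_k}\Y_i=(\Y_i^{-1}\circ\hat\lambda_k)\circ\hat\lambda_k^{-1}\circ\lambda=\Y_i^{-1}\circ\lambda=\rlog_\lambda \Y_i,
\end{equation*}
so $U_i=0$ and $V_i=\rlog_\lambda \eY_i-\rlog_\lambda \Y_i$, an expression that no longer involves $\hat\lambda_k$. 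This is exactly the simplification used in Claim \ref{claim:f:V}, but in the cross-fitting setting the cancellation is especially welcome because it means neither $\hat\lambda_k$ nor $\hat\pi_k$ interacts nontrivially with the $\eY_i$--$\Y_i$ discrepancy.

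Finally, I would condition on $\mathscr D_k$: then $\hat\pi_k$ and $\hat\lambda_k$ are deterministic, and by construction the samples $\eY_i$ for $i\in\mathscr D_{3-k}$ are generated from the individual $\Y_i$ independently of $\mathscr D_k$. Thus $A_iV_i/\pi(X_i)$ are mutually independent conditionally on $\mathscr D_k$, and Assumption \ref{assu:eY} supplies the conditional bounds $\expect\{W_2^2(\eY_i,\Y_i)\mid\Y_i\}=O(\alpha_n^2)$ and $\var\{W_2^2(\eY_i,\Y_i)\mid\Y_i\}=O(\nu_n^4)$. Invoking Lemma \ref{lem:eY-Y} exactly as in Claim \ref{claim:f:V} (which translates these $W_2$ bounds into $L^2(\ldomain;\lambda)$ bounds on $\rlog_\lambda\eY_i-\rlog_\lambda\Y_i$ and averages them), followed by removing the conditioning on $\mathscr D_k$, yields $\textup{V}=\Op(\alpha_n+\nu_n)$.

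The only real obstacle would have been simultaneously handling the randomness of $\hat\lambda_k$ and of $\eY_i$ inside $R_i$, but the identity $U_i=0$ completely dissolves this difficulty, reducing the argument to the same appeal to Assumption \ref{assu:eY} and Lemma \ref{lem:eY-Y} already made in the fixed-reference proof; the cross-fitting split then makes the independence step trivial without requiring any empirical-process machinery.
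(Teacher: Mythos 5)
Your proposal is correct and follows essentially the same route as the paper's proof: the same decomposition $R=U+V$, the same verification that $U=0$ from the identity $\pt_{\hat\lambda_k}^\lambda\rlog_{\hat\lambda_k}\Y = \Y^{-1}\circ\hat\lambda_k\circ\hat\lambda_k^{-1}\circ\lambda=\rlog_\lambda \Y$, and the same appeal to Lemma \ref{lem:eY-Y} for the $V$ term. One small over-statement: the conditioning on $\mathscr D_k$ and the resulting independence are not actually needed for this particular claim, since once $U=0$ the quantity $V=\rlog_\lambda\eY-\rlog_\lambda\Y$ involves neither $\hat\lambda_k$ nor $\hat\pi_k$, so Lemma \ref{lem:eY-Y} applies unconditionally -- indeed, the fixed-reference Claim \ref{claim:f:V} is proved the same way without any cross-fitting or Donsker control, and cross-fitting independence is really exploited in Claims \ref{claim:I}--\ref{claim:IV} rather than here.
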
	
		For a proof, we observe that
		\begin{align*}
			\mathbb P_{n_{3-k}} \left[ \textstyle\dfrac{ AR}{\hat\pi_k(X)}\right] & = \mathbb P_{n_{3-k}} \left[ \textstyle\dfrac{ AR}{\pi(X)}\right] + \splavg \left[ \textstyle\dfrac{ AR}{\hat\pi_k(X)}-\textstyle\dfrac{ AR}{\pi(X)}\right],
		\end{align*}
		where the second term is dominated by the first one. Moreover, 
		\begin{equation*}
			\mathbb P_{n_{3-k}} \left[ \textstyle\dfrac{ AR}{\pi(X)}\right]  = \mathbb P_{n_{3-k}} \left[ \textstyle\dfrac{ AU}{\pi(X)}\right] + \mathbb P_{n_{3-k}} \left[ \textstyle\dfrac{ AV}{\pi(X)}\right],
		\end{equation*}
		where $U=\pt_{\hat\lambda_k}^\lambda \rlog_{\hat \lambda_k} \Y-\rlog_\lambda \Y=0$ and $V=\pt_{\hat\lambda_k}^\lambda\rlog_{\hat\lambda_k} \eY-\pt_{\hat\lambda_k}^\lambda\rlog_{\hat\lambda_k} \Y=\rlog_\lambda \eY-\rlog_\lambda Y$. By Lemma \ref{lem:eY-Y} and the assumptions on $\alpha_n$ and $\nu_n$, we deduce that
		\begin{equation}\label{eq:pf:cf:V}
			V=\Op(\alpha_n+\nu_n).
		\end{equation}
		 
	\begin{remark}\label{rem:S2}
	    As in Remark \ref{rem:S1}, when the assumption on boundedness of $\tdomain$ is dropped,  for Claim \ref{claim:II} to hold, we require the second moment condition that $\expect W_2^2(Y,y)<\infty$ for some $y\in\ws$ and $\vertiii{m_a^{\lambda,\ast}}_\lambda<\infty$. In addition, Claim \ref{claim:I} requires the additional condition that $\expect\vertiii{\tilde m_{a,k}^\lambda}_\lambda^2<\infty$ and  $\vertiii{m_a^{\lambda,\ast}}_\lambda<\infty$, for $a=0,1$ and all $k$. This is because, under the new condition, we bound the term
	    \begin{align*}
	        \expect \|\pi(X)\tilde m_{1,k}^\lambda(X)-\pi^\ast(X)\tilde m_{1,k}^\lambda(X)\|_\lambda^2 & = \expect  \expect \{\|\pi(X)\tilde m_{1,k}^\lambda(X)-\pi^\ast(X)\tilde m_{1,k}^\lambda(X)\|_\lambda^2 \mid \mathscr D_k\}\\
	        & = \expect\{\|\pi-\pi^\ast\|_2^2 \vertiii{\tilde m_{1,k}^\lambda}_\lambda^2\}\\
	        & \leq  \|\pi-\pi^\ast\|_2^2 \expect \vertiii{\tilde m_{1,k}^\lambda}_\lambda^2 = O(b_\pi^2)
	    \end{align*}
	    when we derive the order for $\expect \|H_2(A,X,Z)\|_\lambda^2$ and $X$ is not in $\mathscr D_k$. A similar argument applies to $\expect \|\hat\pi_k(X)m_1^{\lambda,\ast}(X)-\pi(X)m_1^{\lambda,\ast}(X)\|_\lambda^2$ and $\expect \|H_1(A,X,Z)\|_\lambda^2$.
	\end{remark}	
	
	\section{Technical Lemmas}

	\begin{lemma}\label{lem:dist-norm-W}
		For $G_1,G_2\in\ws$ and a probability distribution $\lambda$, we have $\|\rlog_\lambda G_1-\rlog_\lambda G_2\|_\lambda=W_2(G_1,G_2)$. 
	\end{lemma}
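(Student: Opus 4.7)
\textbf{Proof plan for Lemma \ref{lem:dist-norm-W}.} The plan is to reduce the claim to the well-known closed-form expression for the one-dimensional 2-Wasserstein distance in terms of quantile functions, namely that for $G_1,G_2$ with finite second moments on $\real$,
\[
W_2^2(G_1,G_2) \;=\; \int_0^1 \bigl(G_1^{-1}(u)-G_2^{-1}(u)\bigr)^2\,\diffop u,
\]
a standard fact \citep[see e.g.][]{Santambrogio2015,Ambrosio2005}. Given this identity, what remains is purely a change-of-variables argument converting an integral against $\diffop\lambda$ into an integral against Lebesgue measure on $[0,1]$.

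First I would unfold the definition of the norm $\|\cdot\|_\lambda$ and the logarithmic map $\rlog_\lambda \y = \y^{-1}\circ \lambda$ to write
\[
\|\rlog_\lambda G_1 - \rlog_\lambda G_2\|_\lambda^2 \;=\; \int_\ldomain \bigl(G_1^{-1}(\lambda(t)) - G_2^{-1}(\lambda(t))\bigr)^2\, \diffop\lambda(t).
\]
Next I would apply the probability integral transform: if $U$ is a random variable with (continuous) CDF $\lambda$, then $\lambda(U)$ is uniformly distributed on $[0,1]$. Equivalently, the pushforward of the measure $\lambda$ under the map $t\mapsto \lambda(t)$ is Lebesgue measure on $[0,1]$. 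Substituting $u=\lambda(t)$ therefore yields
\[
\int_\ldomain \bigl(G_1^{-1}(\lambda(t)) - G_2^{-1}(\lambda(t))\bigr)^2\, \diffop\lambda(t) \;=\; \int_0^1 \bigl(G_1^{-1}(u) - G_2^{-1}(u)\bigr)^2\, \diffop u,
\]
and invoking the quantile representation of $W_2^2$ recalled above immediately gives $\|\rlog_\lambda G_1 - \rlog_\lambda G_2\|_\lambda^2 = W_2^2(G_1,G_2)$. Taking square roots completes the argument.

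The only step that demands any care is the change of variables, which presupposes continuity of $\lambda$ (otherwise the pushforward of $\lambda$ through itself carries atoms and fails to be uniform). This assumption is compatible with the uses of the lemma elsewhere in the paper, for example in Proposition \ref{prop:compute} and in the asymptotic theorems, where reference distributions are throughout taken to be continuous. Thus I would either state the continuity requirement explicitly or note that it is inherited from the standing hypotheses. No delicate estimates or empirical process machinery are needed; the proof is essentially a one-line change of variables coupled with the classical 1D Wasserstein identity.
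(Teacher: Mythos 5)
Your proposal is correct and follows essentially the same route as the paper's proof: unfold $\|\cdot\|_\lambda$, change variables via $u=\lambda(t)$ to pass from $\diffop\lambda$ to Lebesgue measure on $[0,1]$, and invoke the classical one-dimensional quantile representation of $W_2$ (the paper cites Theorem~2.18 of Villani where you cite Santambrogio/Ambrosio, but the content is identical). Your observation that the change of variables requires continuity of $\lambda$ is a valid point the paper's lemma statement leaves implicit, though it is consistent with how the lemma is actually used.
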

	\begin{proof}[Proof of \Cref{lem:dist-norm-W}]
		The claims follows from the following observations: $\|\rlog_\lambda G_1-\rlog_\lambda G_2\|_\lambda^2=\int |G_1^{-1}\circ \lambda-G_2^{-1}\circ \lambda|^2\diffop \lambda=\int_0^1 |G^{-1}_1(t)-G^{-1}_2|^2\diffop t=W^2_2(G_1,G_2)$, where the last equality is due to Theorem 2.18 of \cite{villani2003topics}.
	\end{proof}

	\begin{lemma}\label{lem:expect-Y-inverse-inveser-mu}
		For a random element $W$ on $\ws$, $\expect W^{-1}=(\FM W)^{-1}$ and $\expect \rlog_\lambda W=\rlog_\lambda \FM W$ for any probability distribution $\lambda$.
	\end{lemma}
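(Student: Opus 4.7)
The plan is to exploit the well-known isometry between the Wasserstein space $\ws$ on a real interval and the convex cone of (left-continuous) quantile functions inside $L^2([0,1])$. Specifically, by Theorem 2.18 of \cite{villani2003topics} (also used in the proof of Lemma \ref{lem:dist-norm-W}), for any $G_1,G_2\in\ws$ we have $W_2^2(G_1,G_2)=\int_0^1 |G_1^{-1}(t)-G_2^{-1}(t)|^2\diffop t$. Hence, for any $\upsilon\in\ws$,
\begin{equation*}
\expect W_2^2(W,\upsilon)=\expect \int_0^1 |W^{-1}(t)-\upsilon^{-1}(t)|^2\diffop t,
\end{equation*}
and minimizing this quantity over $\upsilon\in\ws$ is equivalent to minimizing $\expect\|W^{-1}-q\|^2_{L^2([0,1])}$ over $q$ ranging in the set $\mathcal Q$ of quantile functions of distributions in $\ws$.

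The next step is to identify the unconstrained $L^2([0,1])$-minimizer. By Fubini's theorem (valid since $W\in\ws$ gives $\expect\int_0^1 (W^{-1})^2<\infty$), expanding the square yields $\expect\|W^{-1}-q\|^2 = \|q-\expect W^{-1}\|^2 + \expect\|W^{-1}-\expect W^{-1}\|^2$, so the unique $L^2$-minimizer is $q^\ast:=\expect W^{-1}$, where the expectation is taken pointwise in $t\in[0,1]$. I would then verify that $q^\ast\in\mathcal Q$, i.e., that $q^\ast$ is itself a valid quantile function of some distribution in $\ws$: monotonicity is immediate since the pointwise expectation of non-decreasing functions is non-decreasing, square-integrability follows from Jensen and the assumption $W\in\ws$, and left-continuity (or right-continuity, depending on the convention in \eqref{eq:quantile}) follows from the monotone/dominated convergence theorem applied to the bounded monotone approximation. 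Since $q^\ast\in\mathcal Q$ and it is the unconstrained minimizer in the ambient $L^2$, it is also the minimizer over $\mathcal Q$, and by Proposition \ref{prop:unique} (one-to-one correspondence between distributions and their quantile functions), it corresponds to a unique $\FM W\in\ws$ with $(\FM W)^{-1}=q^\ast=\expect W^{-1}$. This establishes the first identity.

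The second identity is then a direct corollary: for any probability distribution $\lambda$,
\begin{equation*}
\expect \rlog_\lambda W = \expect (W^{-1}\circ \lambda) = (\expect W^{-1})\circ \lambda = (\FM W)^{-1}\circ \lambda = \rlog_\lambda \FM W,
\end{equation*}
where the interchange of expectation and composition with $\lambda$ is again justified by Fubini. The main technical obstacle I expect is the verification that $q^\ast=\expect W^{-1}$ is indeed a bona fide quantile function, in particular handling the measurability of $W^{-1}$ as a random element of $L^2([0,1])$ and the regularity (one-sided continuity and the correct boundary behavior) needed to apply Proposition \ref{prop:unique}; once these measure-theoretic details are in place, the rest of the argument is an immediate $L^2$ projection calculation.
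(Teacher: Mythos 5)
Your argument is correct and takes the same route as the paper: the paper's proof of the first identity simply cites the isometry between $\ws$ and the quantile-function cone in $L^2([0,1])$ as a "direct consequence," and you have fleshed out precisely the $L^2$-projection calculation (together with the verification that $\expect W^{-1}$ is itself a valid quantile function) that makes that consequence rigorous. Your derivation of the second identity from the first by composing with $\lambda$ matches the paper's chain of equalities verbatim.
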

	\begin{proof}[Proof of \Cref{lem:expect-Y-inverse-inveser-mu}]
		The first assertion is a direct consequence of the isometry between $\ws$  and the collection of quantile functions, viewed as a subspace of space $L^2(\tdomain)$ of squared integrable functions endowed with the $L^2$ distance $\|f-g\|\define \sqrt{\int_{\tdomain}|f(x)-g(x)|^2_2\diffop x}$ \citep[Theorem 2.18,][]{villani2003topics}.
		
		For the second assertion, $\expect\rlog_\lambda W=\expect (W^{-1}\circ \lambda)=(\expect W^{-1})\circ y=(\FM W)^{-1}\circ \lambda=\rlog_\lambda \FM W$, where the third equality is due to the first assertion.
	\end{proof}

	\begin{lemma}\label{lem:growth-condition-W}
		For the  Fr\'echet function  $F(\cdot)=\expect W_2^2(y,\Y)$ of a random element $\Y$ on a Wasserstein space $\ws$, we have $F(y)-F(\mu)=W_2^2(y,\mu)$ for all $y\in\ws$, where $\mu$ is the Fr\'echet mean of $\Y$.
	\end{lemma}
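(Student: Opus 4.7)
The plan is to exploit the well-known isometric embedding of the one-dimensional Wasserstein space $\ws$ into the Hilbert space $L^2([0,1])$ via the quantile map $G \mapsto G^{-1}$, and then reduce the identity to the parallel axis (bias-variance) decomposition in a Hilbert space.

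First, I would rewrite both $F(y)$ and $F(\mu)$ in terms of quantile functions. By the isometry recalled in \Cref{lem:dist-norm-W} (equivalently, by Theorem 2.18 of \cite{villani2003topics} with the choice of reference equal to the uniform distribution on $[0,1]$), for any $G_1,G_2\in\ws$ we have $W_2^2(G_1,G_2) = \|G_1^{-1}-G_2^{-1}\|_{L^2([0,1])}^2$. Applying this with $G_1=y$ and $G_2=Y$, and then taking expectation (noting integrability follows from $Y\in\ws$ and $y\in\ws$),
\[
F(y) = \expect \bigl\|y^{-1}-Y^{-1}\bigr\|_{L^2}^2, \qquad F(\mu) = \expect \bigl\|\mu^{-1}-Y^{-1}\bigr\|_{L^2}^2.
\]

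Second, I would invoke \Cref{lem:expect-Y-inverse-inveser-mu}, which identifies the quantile function of the Fr\'echet mean with the Bochner expectation of the random quantile function: $\mu^{-1} = \expect\, Y^{-1}$ in $L^2([0,1])$. This is the step that lets the problem be handled by elementary Hilbert-space geometry rather than any intrinsic Riemannian machinery on $\ws$.

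Third, I would apply the parallel axis theorem in the Hilbert space $L^2([0,1])$. Writing $y^{-1}-Y^{-1} = (y^{-1}-\mu^{-1}) + (\mu^{-1}-Y^{-1})$ and expanding the squared norm,
\[
\bigl\|y^{-1}-Y^{-1}\bigr\|_{L^2}^2 = \bigl\|y^{-1}-\mu^{-1}\bigr\|_{L^2}^2 + \bigl\|\mu^{-1}-Y^{-1}\bigr\|_{L^2}^2 + 2\bigl\langle y^{-1}-\mu^{-1},\; \mu^{-1}-Y^{-1}\bigr\rangle_{L^2}.
\]
Taking expectations and using $\expect(\mu^{-1}-Y^{-1}) = 0$ (from $\mu^{-1}=\expect Y^{-1}$), the cross term vanishes. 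Hence
\[
F(y) = \bigl\|y^{-1}-\mu^{-1}\bigr\|_{L^2}^2 + F(\mu) = W_2^2(y,\mu) + F(\mu),
\]
using the isometry once more in the last equality. Rearranging gives the claim.

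The proof is essentially a routine application of known results, so there is no serious obstacle; the only care needed is to justify that the cross term is genuinely zero, which requires that $\mu^{-1}$ is the $L^2$-valued expectation of $Y^{-1}$ (provided by \Cref{lem:expect-Y-inverse-inveser-mu}) and that Fubini applies to interchange expectation and the $L^2$ inner product (valid since $\expect\|Y^{-1}\|_{L^2}^2 = \expect\int_0^1 (Y^{-1})^2 < \infty$ because $Y\in\ws$ has finite second moment).
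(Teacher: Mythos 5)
Your proposal is correct and follows essentially the same route as the paper's proof: both pass through the quantile isometry $W_2(G_1,G_2)=\|G_1^{-1}-G_2^{-1}\|_{L^2([0,1])}$, invoke $\mu^{-1}=\expect Y^{-1}$ from \Cref{lem:expect-Y-inverse-inveser-mu}, and expand the squared norm so the cross term vanishes in expectation. The only difference is cosmetic — you expand $\|y^{-1}-Y^{-1}\|^2$ directly (parallel axis form), whereas the paper expands $F(y)-F(\mu)$ as a difference of two squared norms — and incidentally your sign on the cross term is the correct one (the paper's displayed $-2\expect\langle\cdot,\cdot\rangle$ should be $+2$, which is immaterial since it is zero).
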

	\begin{proof}[Proof of \Cref{lem:growth-condition-W}]
		Let $\langle g,h\rangle=\int_{\tdomain}g(t)h(t)\diffop t$ for two functions $g$ and $h$. 
		We first observe that
		\begin{align*}
			F(y)-F(\mu) & = \expect W_2^2(y,\Y)-\expect d^2(\mu,\Y) \\
			& = \expect\big( \langle y^{-1}-Y^{-1},y^{-1}-Y^{-1}\rangle - \langle \mu^{-1}-Y^{-1},\mu^{-1}-Y^{-1}\rangle \big) \\
			& = \langle y^{-1}-\mu^{-1},y^{-1}-\mu^{-1}\rangle - 2\expect\langle y^{-1}-\mu^{-1},\mu^{-1}-Y^{-1}\rangle\\
			& = W_2^2(y,\mu) - 2\langle y^{-1}-\mu^{-1},\mu^{-1}-\expect Y^{-1}\rangle \\
			& = W_2^2(y,\mu)
		\end{align*}
		where the second equality is due to the isometry between $\ws$  and the collection of quantile functions \citep[Theorem 2.18,][]{villani2003topics}, and the last equality is due to $\expect Y^{-1}=\mu^{-1}$ shown in \Cref{lem:expect-Y-inverse-inveser-mu}.
	\end{proof}

	\begin{lemma}\label{lem:consistency-mu-hat}
		Suppose that $\hat\mu$ is the empirical Fr\'echet mean of $\eY_1,\ldots,\eY_n$, and  $\tilde\mu$ is the empirical Fr\'echet mean of $Y_1,\ldots,Y_n$ residing on $\manifold$. Then we have $W_2^2(\tilde\mu,\mu)=\op(1)$, and under additional Assumption \ref{assu:eY} we have $W_2^2(\hat\mu,\mu)=\op(1)$.
	\end{lemma}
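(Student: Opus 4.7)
The plan is to exploit the isometry between $\ws$ and the space of quantile functions viewed as a subset of $L^2(\tdomain)$ (Theorem 2.18 of Villani 2003, already invoked in Lemma \ref{lem:dist-norm-W}), which linearizes the Wasserstein geometry and reduces both assertions to a Hilbert-space law-of-large-numbers argument. By Lemma \ref{lem:expect-Y-inverse-inveser-mu} applied to the population law and to the empirical laws of $Y_1,\ldots,Y_n$ and of $\eY_1,\ldots,\eY_n$, we can identify the relevant Fr\'echet means on the quantile side as $\mu^{-1}=\expect Y^{-1}$, $\tilde\mu^{-1}=n^{-1}\sum_{i=1}^n Y_i^{-1}$, and $\hat\mu^{-1}=n^{-1}\sum_{i=1}^n \eY_i^{-1}$. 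Because $\tdomain$ is a bounded interval, every quantile function is uniformly bounded in $L^2(\tdomain)$, which provides all the moment conditions I need.

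For the first assertion, the isometry gives
\[
W_2^2(\tilde\mu,\mu) = \left\|n^{-1}\sum_{i=1}^n Y_i^{-1} - \expect Y^{-1}\right\|_{L^2(\tdomain)}^2.
\]
Taking expectation and using independence of the $Y_i$ yields $\expect W_2^2(\tilde\mu,\mu) = n^{-1}\expect\|Y^{-1}-\expect Y^{-1}\|_{L^2(\tdomain)}^2 = O(n^{-1})$, and Markov's inequality delivers $W_2^2(\tilde\mu,\mu)=\op(1)$.

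For the second assertion, I would combine the triangle inequality in $(\ws,W_2)$ with the first assertion, so that it suffices to show $W_2(\hat\mu,\tilde\mu)=\op(1)$. The isometry together with Jensen's inequality (applied to the convex function $\|\cdot\|_{L^2(\tdomain)}^2$) gives
\[
W_2^2(\hat\mu,\tilde\mu) = \left\|n^{-1}\sum_{i=1}^n(\eY_i^{-1}-Y_i^{-1})\right\|_{L^2(\tdomain)}^2 \leq n^{-1}\sum_{i=1}^n W_2^2(\eY_i,Y_i),
\]
using Lemma \ref{lem:dist-norm-W} for the pointwise identity $\|\eY_i^{-1}-Y_i^{-1}\|_{L^2(\tdomain)}=W_2(\eY_i,Y_i)$ in the last step. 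By Assumption \ref{assu:eY}, $\expect W_2^2(\eY_i,Y_i) \leq \sup_{\y\in\ws}\expect\{W_2^2(\eY_i,Y_i)\mid Y_i=\y\} = O(\alpha_n^2) = o(1)$ uniformly in $i$, so another application of Markov's inequality yields $W_2^2(\hat\mu,\tilde\mu)=\op(1)$, completing the argument.

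No substantial obstacle is anticipated: the isometry removes all nonlinear geometric features of $\ws$, and the boundedness of $\tdomain$ supplies both the second moment needed for the Hilbert-space LLN in the first part and the control of $\expect W_2^2(\eY_i,Y_i)$ via Assumption \ref{assu:eY} in the second part. The most delicate point to verify carefully is the identification of $\tilde\mu^{-1}$ and $\hat\mu^{-1}$ as the sample averages of inverse CDFs, which is a direct application of Lemma \ref{lem:expect-Y-inverse-inveser-mu} to the empirical measure and is already used elsewhere in the paper.
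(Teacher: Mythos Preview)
Your proposal is correct and follows essentially the same route as the paper: both exploit the isometry with quantile functions, invoke Lemma~\ref{lem:expect-Y-inverse-inveser-mu} to identify $\mu^{-1},\tilde\mu^{-1},\hat\mu^{-1}$ as (population or sample) averages of inverse CDFs, apply a law of large numbers for the first assertion, and bound $W_2(\hat\mu,\tilde\mu)$ via Assumption~\ref{assu:eY} together with the triangle inequality for the second. The only cosmetic differences are that you obtain an explicit $O(n^{-1})$ second-moment bound (the paper simply cites the weak law of large numbers) and you use Jensen on $\|\cdot\|^2$ where the paper uses the triangle inequality on $\|\cdot\|$.
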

	\begin{proof}[Proof of \Cref{lem:consistency-mu-hat}]
		As in the proof of Lemma \ref{lem:growth-condition-W}, according to the isometry between $\ws$  and the collection of quantile functions \citep[Theorem 2.18,][]{villani2003topics}, $W^2_2(\tilde\mu,\mu)=\op(1)$ if and only if $\|\tilde{\mu}^{-1}-\mu^{-1}\|_2^2=\op(1)$, where $\|g\|_2^2=\int_{\tdomain}|g(t)|^2\diffop t$ for any measurable function $g$. By \Cref{lem:expect-Y-inverse-inveser-mu}, $\tilde\mu^{-1}=n^{-1}\sum_{i=1}^n \Y_i^{-1}$ and $\expect \Y_i^{-1}=\mu^{-1}$. Then  $\|\tilde{\mu}^{-1}-\mu^{-1}\|_2^2=\op(1)$ follows from the weak law of large numbers.
		
		To prove $W^2_2(\hat\mu,\mu)=\op(1)$, we apply \Cref{lem:expect-Y-inverse-inveser-mu} to the uniform distribution on the discrete set $\{\eY_1,\ldots,\eY_n\}$ and its Fr\'echet mean $\hat\mu$, and deduce $\hat\mu^{-1}=(n^{-1}\sum_{i=1}^n \eY_i^{-1})$. Similarly, $\tilde\mu^{-1}=(n^{-1}\sum_{i=1}^n \Y_i^{-1})$. 
		Then 
		\begin{align*}
			W_2(\hat\mu,\tilde\mu) &= \|\hat\mu^{-1}-\tilde\mu^{-1}\|_2 = \bigg\|n^{-1}\sum_{i=1}^n\eY_i^{-1}-n^{-1}\sum_{i=1}^n \Y_i^{-1}\bigg\|_2\\
			& \leq n^{-1}\sum_{i=1}^n\|\eY_i^{-1}-\Y^{-1}_i\|_2 =  n^{-1}\sum_{i=1}^n W_2(\eY_i,\Y_i)=\Op(\alpha_n)=\op(1),
		\end{align*}
		where the third equality is due to Assumption \ref{assu:eY} and Markov's inequality. Then $W_2(\hat\mu,\mu)\leq W_2(\hat\mu,\tilde\mu)+W_2(\tilde\mu,\mu)=\op(1)$.
	\end{proof}
	
	\begin{lemma}\label{lem:Lmuhat-mu-eY} 
		If $\hat\lambda$ is continuous, then  under Assumption  \ref{assu:eY},
		\begin{equation*}
			\dfrac{1}{n}\sum_{i=1}^n \|\pt_{\hat\lambda}^\lambda \rlog_{\hat \lambda} \eY_i-\rlog_\lambda \Y_i\|_\lambda^2=\Op\big(\alpha_n^2 +\nu_n^2\big).
		\end{equation*}
	\end{lemma}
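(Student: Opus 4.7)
The plan is to unpack the compact notation, exploit continuity of $\hat\lambda$ to collapse the transport operator $\pt_{\hat\lambda}^\lambda$ composed with the logarithm $\rlog_{\hat\lambda}$, invoke the isometry between $\ws$ and the space of quantile functions, and then conclude via a first moment bound. The heavy lifting is already encoded in Assumption \ref{assu:eY} and in Lemma \ref{lem:dist-norm-W}, so the argument reduces to bookkeeping.

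First, I would observe that by definition of the notation introduced for the proof of Theorem \ref{thm:AN-W-fixed-ref},
\[
\pt_{\hat\lambda}^\lambda \rlog_{\hat\lambda} \eY_i = (\eY_i^{-1}\circ\hat\lambda)\circ\hat\lambda^{-1}\circ\lambda = \eY_i^{-1}\circ(\hat\lambda\circ\hat\lambda^{-1})\circ\lambda.
\]
Since $\hat\lambda$ is continuous, $\hat\lambda\circ\hat\lambda^{-1}$ acts as the identity on $(0,1)$, so $\pt_{\hat\lambda}^\lambda \rlog_{\hat\lambda} \eY_i = \eY_i^{-1}\circ\lambda = \rlog_\lambda \eY_i$. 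Hence the summand simplifies to $\|\rlog_\lambda \eY_i - \rlog_\lambda \Y_i\|_\lambda^2$, which by Lemma \ref{lem:dist-norm-W} equals $W_2^2(\eY_i,\Y_i)$.

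Next, I would apply the tower property together with the first display of Assumption \ref{assu:eY}:
\[
\expect W_2^2(\eY_i,\Y_i) = \expect\bigl[\expect\{W_2^2(\eY_i,\Y_i)\mid \Y_i\}\bigr] \leq \sup_{\y\in\ws}\expect\{W_2^2(\eY_i,\Y_i)\mid \Y_i=\y\} = O(\alpha_n^2),
\]
uniformly in $i$. Therefore
\[
\expect\Bigl(\dfrac{1}{n}\sum_{i=1}^n W_2^2(\eY_i,\Y_i)\Bigr) = O(\alpha_n^2),
\]
and Markov's inequality yields $n^{-1}\sum_{i=1}^n W_2^2(\eY_i,\Y_i)=\Op(\alpha_n^2)=\Op(\alpha_n^2+\nu_n^2)$, completing the proof.

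The only delicate point is the identity $\pt_{\hat\lambda}^\lambda \rlog_{\hat\lambda} \eY_i = \rlog_\lambda \eY_i$: this is precisely where continuity of $\hat\lambda$ is used, since otherwise $\hat\lambda\circ\hat\lambda^{-1}$ may fail to be the identity on the range of $\lambda$ and the simplification would require handling atoms of $\hat\lambda$. Everything else is routine. Note that the variance hypothesis $\nu_n^4$ in Assumption \ref{assu:eY} is not needed for this lemma, though it could be used (via Chebyshev) to obtain the sharper rate $\Op(\alpha_n^2+\nu_n^2/\sqrt n)$; the stated bound $\Op(\alpha_n^2+\nu_n^2)$ is a weaker form presumably chosen to match the form of the bounds appearing elsewhere in the paper.
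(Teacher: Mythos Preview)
Your proof is correct and follows essentially the same route as the paper. The paper first applies a triangle inequality to split off the term $\pt_{\hat\lambda}^\lambda\rlog_{\hat\lambda}\Y_i-\rlog_\lambda\Y_i$ (which vanishes by continuity of $\hat\lambda$) and then invokes Lemma~\ref{lem:eY-Y}; you simply apply the continuity identity $\pt_{\hat\lambda}^\lambda\rlog_{\hat\lambda}\eY_i=\rlog_\lambda\eY_i$ directly, which is slightly cleaner but amounts to the same argument.
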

	\begin{proof}[Proof of  \Cref{lem:Lmuhat-mu-eY}]
		Given the following observation
		\begin{align}
			\dfrac{1}{n}\sum_{i=1}^n \|\pt_{\hat\lambda}^\lambda \rlog_{\hat \lambda} \eY_i-\rlog_\lambda \Y_i\|_\lambda^2 & \leq \dfrac{2}{n}\sum_{i=1}^n \|\pt_{\hat\lambda}^\lambda \rlog_{\hat \lambda} \Y_i-\rlog_\lambda \Y_i\|_\lambda^2 + \dfrac{2}{n}\sum_{i=1}^n \|\pt_{\hat\lambda}^\lambda \rlog_{\hat \lambda} \eY_i-\pt_{\hat\lambda}^\lambda\rlog_{\hat\lambda} \Y_i\|_\lambda^2,\nonumber
		\end{align}
		 the conclusion  is a direct consequence of \Cref{lem:eY-Y} and the fact that $\pt_{\hat\lambda}^\lambda\wwlog_{\hat\lambda} Y_i-\wwlog_{\lambda} Y_i=0$ {when $\hat\lambda$ is continuous (so that $\hat\lambda\circ\hat\lambda^{-1}=\idf$)}. 
		
	\end{proof}
	
	\begin{lemma}\label{lem:eY-Y} 
		Under Assumption  \ref{assu:eY}, for a fixed $\epsilon>0$, we have
		\begin{equation*}
			\sup_{\lambda}\dfrac{1}{n}\sum_{i=1}^n \|\rlog_\mp \eY_i-\rlog_\mp \Y_i\|_\mp^2=\Op\big(\alpha_n^2 +\nu_n^2\big).
		\end{equation*}
	\end{lemma}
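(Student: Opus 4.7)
The plan is to exploit the isometry in Lemma \ref{lem:dist-norm-W}, which collapses the supremum over reference distributions and reduces the conclusion to a routine moment calculation.

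First, I would apply Lemma \ref{lem:dist-norm-W} term by term. Since $\|\rlog_\lambda G_1 - \rlog_\lambda G_2\|_\lambda = W_2(G_1,G_2)$ holds for every probability distribution $\lambda$, we have for all $\lambda$ and each $i$,
\[
\|\rlog_\lambda \eY_i - \rlog_\lambda \Y_i\|_\lambda^2 = W_2^2(\eY_i, \Y_i).
\]
The right-hand side does not depend on $\lambda$, so the supremum in the statement is attained identically:
\[
\sup_{\lambda}\frac{1}{n}\sum_{i=1}^n \|\rlog_\lambda \eY_i - \rlog_\lambda \Y_i\|_\lambda^2 = \frac{1}{n}\sum_{i=1}^n W_2^2(\eY_i, \Y_i).
\]

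Next, I would control the mean of this non-negative average using the first bound in Assumption \ref{assu:eY}. By the tower property,
\[
\expect W_2^2(\eY_i, \Y_i) = \expect\bigl\{\expect[W_2^2(\eY_i,\Y_i)\mid \Y_i]\bigr\} \leq \sup_{\y\in\ws}\expect\{W_2^2(\eY_i,\Y_i)\mid \Y_i=\y\} = O(\alpha_n^2),
\]
uniformly in $i$. Hence $\expect\bigl\{n^{-1}\sum_{i=1}^n W_2^2(\eY_i,\Y_i)\bigr\} = O(\alpha_n^2)$, and Markov's inequality applied to the non-negative sum yields
\[
\frac{1}{n}\sum_{i=1}^n W_2^2(\eY_i,\Y_i) = \Op(\alpha_n^2),
\]
which is trivially $\Op(\alpha_n^2+\nu_n^2)$, as claimed.

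No step of this argument poses a real obstacle once the isometry is invoked: the supremum over $\lambda$ is only apparent, and the expectation bound alone suffices. In particular, the conditional variance bound $\nu_n^4$ of Assumption \ref{assu:eY} is not needed for this rate; it would become relevant only if one sought sharper concentration via Chebyshev's inequality rather than the in-probability bound that already follows from Markov applied at the level of the mean.
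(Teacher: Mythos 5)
Your proof is correct and matches the paper's one-line argument, which simply invokes Lemma \ref{lem:dist-norm-W} (collapsing the $\lambda$-dependence via the isometry) together with Assumption \ref{assu:eY}; you have merely written out the tower-property and Markov steps explicitly. Your observation that the bound $\Op(\alpha_n^2)$ is achievable without the conditional-variance condition is also accurate; the stated $\Op(\alpha_n^2+\nu_n^2)$ is weaker but harmlessly so, as it is the form that propagates cleanly through the downstream claims.
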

	\begin{proof}[Proof of Lemma \ref{lem:eY-Y}]
		This directly follows from Lemma \ref{lem:dist-norm-W} and Assumption \ref{assu:eY}.
	\end{proof}

	\begin{lemma}
		\label{lem:mu-rate-noisy-W}
		Suppose that $\hat\mu$ is the empirical Fr\'echet mean of $\eY_1,\ldots,\eY_n\in\ws$, and  $\tilde\mu$ is the empirical Fr\'echet mean of $Y_1,\ldots,Y_n\in\ws$. Then,  we have $W^2_2(\tilde{\mu},\mu)=\Op(n^{-1})$.  
		Suppose further that Assumption \ref{assu:eY} holds, then we have $W_2^2(\hat\mu,\tilde{\mu})=\Op(\alpha_n^2)$ and $W_2^2(\hat\mu,\mu)=\Op(n^{-1}+\alpha_n^2)$.
	\end{lemma}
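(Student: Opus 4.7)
The overall plan is to reduce every Wasserstein bound to a computation in the Hilbert space of quantile functions via the isometry between $\ws$ and the quantile-function subspace of $L^2(\tdomain)$ (used in the proofs of Lemmas \ref{lem:dist-norm-W} and \ref{lem:expect-Y-inverse-inveser-mu}), combined with the identities $\tilde\mu^{-1}=n^{-1}\sum_{i=1}^n Y_i^{-1}$ and $\hat\mu^{-1}=n^{-1}\sum_{i=1}^n \eY_i^{-1}$ from Lemma \ref{lem:expect-Y-inverse-inveser-mu}, together with $\mu^{-1}=\expect Y^{-1}$.

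For the first claim $W_2^2(\tilde\mu,\mu)=\Op(n^{-1})$, I would simply observe that
$W_2^2(\tilde\mu,\mu)=\|\tilde\mu^{-1}-\mu^{-1}\|_2^2=\bigl\|n^{-1}\sum_{i=1}^n(Y_i^{-1}-\expect Y^{-1})\bigr\|_2^2$
is the squared norm of a centered empirical mean of i.i.d.\ $L^2$-valued random elements. Taking expectations, the cross terms vanish by independence and the diagonal terms give
$\expect \|\tilde\mu^{-1}-\mu^{-1}\|_2^2=n^{-1}\expect\|Y^{-1}-\mu^{-1}\|_2^2=n^{-1}\expect W_2^2(Y,\mu),$
which is finite because $\tdomain$ is a bounded interval so $W_2^2(Y,\mu)$ is bounded. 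Markov's inequality then yields the $\Op(n^{-1})$ rate.

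For the second claim $W_2^2(\hat\mu,\tilde\mu)=\Op(\alpha_n^2)$, I would apply Minkowski's inequality and then Cauchy--Schwarz:
\begin{equation*}
W_2(\hat\mu,\tilde\mu)=\|\hat\mu^{-1}-\tilde\mu^{-1}\|_2\le n^{-1}\sum_{i=1}^n\|\eY_i^{-1}-Y_i^{-1}\|_2=n^{-1}\sum_{i=1}^n W_2(\eY_i,Y_i),
\end{equation*}
so that $W_2^2(\hat\mu,\tilde\mu)\le n^{-1}\sum_{i=1}^n W_2^2(\eY_i,Y_i)$. Taking expectation and applying the tower property with conditioning on $Y_i$, Assumption \ref{assu:eY} gives $\expect W_2^2(\eY_i,Y_i)=\expect\bigl[\expect\{W_2^2(\eY_i,Y_i)\mid Y_i\}\bigr]=O(\alpha_n^2)$ uniformly in $i$, hence $\expect\bigl\{n^{-1}\sum_i W_2^2(\eY_i,Y_i)\bigr\}=O(\alpha_n^2)$ and Markov's inequality gives the claimed $\Op(\alpha_n^2)$ rate.

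The third claim follows immediately from the triangle inequality $W_2(\hat\mu,\mu)\le W_2(\hat\mu,\tilde\mu)+W_2(\tilde\mu,\mu)$ and the elementary inequality $(a+b)^2\le 2a^2+2b^2$, combined with the first two claims. There is no real obstacle here: the key insight is that the barycentre in $\ws$ linearizes into the Euclidean mean of quantile functions, after which each piece is a textbook computation. The only place one needs to be slightly careful is ensuring that the moment $\expect W_2^2(Y,\mu)$ used in the first step is finite, which is automatic under the paper's standing assumption that $\tdomain$ is a bounded interval (cf.\ Remark \ref{rem:S1}, where the same argument is carried out under an alternative second moment condition).
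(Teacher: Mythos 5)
Your proof is correct, and it takes a genuinely more elementary route than the paper's. For $W_2^2(\tilde\mu,\mu)=\Op(n^{-1})$ the paper invokes the general rate theory for Fr\'echet means of Sch\"otz (2019): it verifies a weak quadruple condition, a growth condition (Lemma \ref{lem:growth-condition-W}), and an entropy bound from Petersen and M\"uller, and then modifies that argument --- controlling a peeling process $\Xi_n(\delta)$ split as $\tilde\Xi_n(\delta)+G_n(\delta)$ --- to absorb the noise $W_2(\eY_i,Y_i)$ and obtain $W_2^2(\hat\mu,\mu)=\Op(n^{-1}+\alpha_n^2)$. You instead observe that in $\ws$ over the line the barycentre linearizes exactly, $\tilde\mu^{-1}=n^{-1}\sum_i Y_i^{-1}$ and $\hat\mu^{-1}=n^{-1}\sum_i\eY_i^{-1}$ (Lemma \ref{lem:expect-Y-inverse-inveser-mu}), so that $W_2^2(\tilde\mu,\mu)$ is the squared norm of an i.i.d.\ empirical mean in a Hilbert space --- a textbook $\Op(n^{-1})$ with no entropy or peeling argument. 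The paper's route is the one that would transfer to metric spaces without such a global isometry; yours is shorter and fully consistent with the isometry the paper already exploits elsewhere --- indeed, your Minkowski step $W_2(\hat\mu,\tilde\mu)\leq n^{-1}\sum_i W_2(\eY_i,Y_i)$ appears verbatim in the proof of Lemma \ref{lem:consistency-mu-hat}, so the second claim is essentially the paper's own argument; only your treatment of the first claim is genuinely different. The one point needing care, which you correctly flag, is finiteness of $\expect W_2^2(Y,\mu)$, which holds since $\tdomain$ is bounded.
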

	
	\begin{proof}[Proof of  \Cref{lem:mu-rate-noisy-W}]
		We apply the general theory from \cite{schotz2019convergence}. According to the discussion in Section 3 of \cite{schotz2019convergence}, the weak quadruple condition holds for a Wasserstein space $\ws$. The moment condition is met given the boundedness of $\ws$, while the growth condition is verified in Lemma \ref{lem:growth-condition-W}. 
		In light of \Cref{lem:consistency-mu-hat} and according to \citet{schotz2019convergence}, it is sufficient to verify the entropy condition  in a neighborhood of $\mu$. Such entropy condition holds as $\sup_{\omega\in\ws} \log N(\delta\epsilon,B_\delta(\omega),d)\leq K\epsilon^{-1}$ shown in the proof of Proposition 1 of \cite{Petersen2019}.
		
		Now we modify the  proof for Theorem 1  of \cite{schotz2019convergence} to the case that  only noisy surrogates $\eY_1,\ldots,\eY_n$ are observable. 
		Let $\tilde F_n(y)=n^{-1}\sum_{i=1}^n W_2^2(y,\Y_i)$ and $F_n(y)=n^{-1}\sum_{i=1}^n W_2^2(y, \eY_i)$. Define
		\begin{align*}
			\tilde\Xi_n(\delta) & \define\sup_{y\in \manifold:W_2(y,\mu)<\delta} \tilde F(y)-\tilde F(\mu)- \tilde F_n(y)+ \tilde F_n(\mu),\\
			\Xi_n(\delta) & \define\sup_{y\in \manifold:W_2(y,\mu)<\delta} \tilde F(y)-\tilde F(\mu)- F_n(y)+ F_n(\mu).
		\end{align*}
		According to Lemma 2 of \cite{schotz2019convergence}, we just need to show that $\expect\{\Xi_n^2(\delta)\}\leq c_1(n^{-1}+\alpha_n^2)\delta^2$ for some constant $c_1>0$ and all $\delta >0$. First, we observe that $ \Xi_n(\delta)\leq \tilde \Xi_n(\delta)+ G_n(\delta)$, where 
		\begin{align*}
			G_n(\delta) & = \sup_{y\in \manifold:W_2(y,\mu)<\delta}\dfrac{1}{n}\sum_{i=1}^n \{W_2^2(y,\eY_i)-W_2^2(y,\Y_i)-W_2^2(\mu,\eY_i)+W_2^2(\mu,\Y_i)\}\\
			& \leq  \sup_{y\in \manifold:W_2(y,\mu)<\delta}\dfrac{1}{n}\sum_{i=1}^n W_2(y,\mu)d(\Y_i,\eY_i) \leq  \dfrac{\delta}{n}\sum_{i=1}^n W_2(\Y_i,\eY_i),
		\end{align*}
		where the first inequality is due to the Weak Quadruple condition; see Section 3.2.3 of \cite{schotz2019convergence} for details.
		Thus, according to Assumption \ref{assu:eY}, 
		\begin{align*}
			\expect\{G_n^2(\delta)\} & \leq \delta^2\expect \left\{\dfrac{1}{n}\sum_{i=1}^n W_2(\Y_i,\eY_i)\right\}^2 \leq  \dfrac{\delta^2}{n}\sum_{i=1}^n \expect W_2^2(\Y_i,\eY_i)  \leq \delta^2\alpha_n^2.
		\end{align*}
		According to Lemma 3 of \cite{schotz2019convergence}, we have $\expect\{\tilde \Xi_n^2(\delta)\}\leq c_2\delta^2/n$ for some constant $c_2>0$. Consequently, 
		\begin{align*}
			\expect\{\Xi_n^2(\delta)\} & \leq 2 \expect\{\tilde \Xi_n^2(\delta)\} + 2\expect\{G_n^2(\delta)\} \leq 2(c_2+1)(n^{-1}+\alpha_n^2)\delta^2.
		\end{align*}
		The proof for $W_2^2(\hat\mu,\mu)=\Op(n^{-1}+\alpha_n^2)$ is completed by setting $c_1=2(c_2+1)$. The result for  $W_2^2(\hat\mu,\tilde\mu)$ follows from the same line of argument, conditional on $Y_1,\ldots,Y_n$.
	\end{proof}
	
	\section{Function-valued Empirical Processes}

	In the proof of \Cref{thm:AN-W-fixed-ref} we came cross the problem of finding the convergence rate of a random quantity in the form $n^{-1/2}\sum_{i=1}^n h(\xi_i) \{\hat g(\xi_i)-g_0(\xi_i)\}$ for fixed functions $h,g_0$ and independent observations $\xi_1,\ldots,\xi_n$ such that $\expect h(\xi_i)g(\xi_i)=0$ for any $g$; here  $\hat g$ is an estimate for $g_0$ based on $\xi_1,\ldots,\xi_n$ (and potentially on some other fixed quantities). For example, $\xi_i=A_i$, $h(\xi_i)=\pi(X_i)-A_i$, $g_0=m_a$, and $\hat g=\tilde m_a$ in the proof of \eqref{eq:IV}, where we hold $X_i$ fixed. Such problem is also  encountered in other scenarios of statistical research, for instance, in the proof of Theorem 9 of \cite{Mammen1997}. Unique in our context is that the function $\hat g$ may take values in a space of functions, e.g., in the case of $\ws$, so that the classic results in \cite{VandeGeer1990} do not apply directly and need to be extended. Instead of function spaces, below we consider more generally a separable Hilbert space that includes some function spaces as special examples.
	
	To set the stage, let $\xdomain$ be a compact space and $\nu$ a positive measure on it. Suppose that $\hilbert$ is a separable Hilbert space, and $\mathcal{F}$ is a class of functions that maps $\xdomain$ into $\hilbert$. An immediate example of $\hilbert$ is the space  $L^2([0,1])$ of real-valued squared integrable function defined on $[0,1]$. Consider an $\hilbert$-valued random process $S_n$ indexed by $\mathcal F$ of the form 
	$$S_n(g)=\frac{1}{\sqrt n}\sum_{i=1}^nV_i(g),$$
	where $V_1,\ldots,V_n$ are independent $\hilbert$-valued centered random process defined on $\mathcal F$. For instance, in the above context, $V_i(g)= h(\xi_i)\{g(\xi_i)-g_0(\xi_i)\}$ for each $g\in\mathcal F$.  
	
	Let $\eta$ be a pseudo-distance on $\mathcal{F}$ of the form $\eta^2 (f,g)=n^{-1}\sum_{i=1}^n\eta_i^2(f,g)$, where $\eta_1,\ldots,\eta_n$ are also pseudo-distance on $\mathcal{F}$, e.g., $\eta_i(g_1,g_2)=\|g_1(\xi_i)-g_2(\xi_i)\|_{\hilbert}$ with $\|\cdot\|_{\hilbert}$ denoting the norm on the Hilbert space $\hilbert$. Suppose $S_n(g_0)=0$ for $g_0\in\mathcal{F}$ and $\|V_i(g_1)-V_i(g_2)\|_{\hilbert}\leq M_i \eta_i(g_1,g_2)$, where $M_1,\ldots,M_n$ are uniformly subgaussian random variables, i.e, \begin{equation}\label{eq:subgaussian-hilbert}
		\sup_{i}\expect \exp\{\beta^2 M_i^2\}\leq \Gamma<0
	\end{equation}
	for some absolute constants $\beta,\Gamma >0$. Finally, let $\mathscr K(\delta,r)$ be the local entropy of a ball $ B(r;g_0)\subset\mathcal F$ with respect to the pseudo-distance $\eta$. Without loss of generality we assume $\mathscr K$ is continuous in $\delta$; otherwise, we just define it to be a continuous function that upper bounds the local entropy. 
	\begin{theorem}\label{thm:hilbertian-empirical}
		Suppose \eqref{eq:subgaussian-hilbert} holds and $\mathscr K(\delta,r)\leq K\delta^{-2\zeta}$ for some $K>0$ and  $\zeta\in(0,1)$. Then, there exist constants $b_0$ and $c$, depending only on $\beta$ and $\Gamma$,  such that for any $b\geq b_0$, we have
		\begin{equation}\label{eq:hilbert-concentration}
			\pr\left(\sup_{g\in  B(r;g_0)}\frac{\|S_n(g)\|_{\hilbert}}{\{\eta(g,g_0)\}^{1-\zeta}}\geq b\sqrt{K}\right)\leq \exp\left(-\textstyle\frac{cb^2K}{r^{2\zeta}}\right)
		\end{equation}
		for all $n\geq 1$.
	\end{theorem}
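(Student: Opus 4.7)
The plan is to prove \eqref{eq:hilbert-concentration} via a Dudley-type chaining argument for Hilbert-valued processes, followed by a peeling argument that converts a sup-bound on $\|S_n(g)\|_\hilbert$ into the stated bound on the weighted ratio $\|S_n(g)\|_\hilbert/\eta(g,g_0)^{1-\zeta}$. The main obstacle is the first step: obtaining a dimension-free subgaussian tail for the Hilbert-valued increment $S_n(f)-S_n(h)$. Pinelis' classical Hoeffding inequality in a $2$-smooth space gives this when the increments have a.s.\ bounded norm, but here $\|V_i(f)-V_i(h)\|_\hilbert\leq M_i\eta_i(f,h)$ with $M_i$ only uniformly subgaussian, so a truncation or Orlicz-norm argument is required.

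\emph{Step 1 (Increment tail).} Fix $f,h\in\mathcal F$ and set $W_i=V_i(f)-V_i(h)$; these are independent, centered, and satisfy $\|W_i\|_\hilbert\leq M_i\eta_i(f,h)$. By \eqref{eq:subgaussian-hilbert}, $\|\|W_i\|_\hilbert\|_{\psi_2}\lesssim \beta^{-1}\eta_i(f,h)$. Combining Pinelis' inequality with a truncation of $M_i$ at a level $L\asymp\beta^{-1}\sqrt{\log n}$ (the untruncated tail being bounded via \eqref{eq:subgaussian-hilbert}), or equivalently using the Orlicz-norm summation $\|\|\sum W_i\|_\hilbert\|_{\psi_2}^2\lesssim \sum\|\|W_i\|_\hilbert\|_{\psi_2}^2$ available in any $2$-smooth space, I would establish
\[
\pr\bigl(\|S_n(f)-S_n(h)\|_\hilbert \geq t\,\eta(f,h)\bigr) \leq 2\exp(-c_1\beta^2 t^2)\quad\text{for all }t>0,
\]
with a universal constant $c_1>0$ depending only on $\beta,\Gamma$.

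\emph{Step 2 (Chaining on $B(r;g_0)$).} Set $\delta_j=2^{-j}r$ and let $\mathcal N_j$ be a minimal $\delta_j$-net of $B(r;g_0)$ in $\eta$, so $|\mathcal N_j|\leq \exp(K\delta_j^{-2\zeta})$. Let $\pi_j(g)$ be the nearest point of $\mathcal N_j$ to $g$ and $\pi_0(g)=g_0$; telescoping with $S_n(g_0)=0$ gives $S_n(g)=\sum_{j\geq 1}\{S_n(\pi_j(g))-S_n(\pi_{j-1}(g))\}$ with $\eta(\pi_j(g),\pi_{j-1}(g))\leq 3\delta_j$. Applying Step 1 with threshold $t_j=b\beta^{-1}\sqrt{K}\delta_j^{-\zeta}$ and union-bounding over the $\leq\exp(2K\delta_j^{-2\zeta})$ pairs at level $j$ yields, at level $j$, a failure probability $\leq \exp\bigl((2-c_1 b^2)K\delta_j^{-2\zeta}\bigr)$, which is summable in $j$ and dominated by its $j=1$ term once $c_1 b^2\geq 4$. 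The chain sum satisfies $\sum_j 3\delta_j t_j\lesssim b\sqrt{K}\,r^{1-\zeta}$, so for some $C_\zeta>0$,
\[
\sup_{g\in B(r;g_0)}\|S_n(g)\|_\hilbert \leq C_\zeta b\sqrt{K}\,r^{1-\zeta}
\]
with probability at least $1-\exp(-c_2 b^2 K/r^{2\zeta})$.

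\emph{Step 3 (Peeling).} Apply Step 2 at the radii $r_k=2^{-k}r$, $k=0,1,2,\ldots$. On the annulus $A_k:=\{g:r_{k+1}\leq\eta(g,g_0)<r_k\}$ the denominator satisfies $\eta(g,g_0)^{1-\zeta}\geq (r_k/2)^{1-\zeta}$, and hence
\[
\sup_{g\in A_k}\frac{\|S_n(g)\|_\hilbert}{\eta(g,g_0)^{1-\zeta}} \leq \frac{C_\zeta b\sqrt{K}\,r_k^{1-\zeta}}{(r_k/2)^{1-\zeta}} = 2^{1-\zeta}C_\zeta b\sqrt{K}
\]
uniformly in $k$. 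The failure probabilities $\exp(-c_2 b^2 K/r_k^{2\zeta})$ form a geometrically decreasing series with total bounded by $\exp(-cb^2 K/r^{2\zeta})$ after absorbing constants. Renaming $2^{1-\zeta}C_\zeta b$ as $b$ and enlarging $b_0$ accordingly yields \eqref{eq:hilbert-concentration}.
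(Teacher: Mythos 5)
Your proposal unpacks, from scratch, exactly the argument the paper invokes by citation: the paper's entire proof is two sentences --- Eq.\ (3.10) of Kuelbs and Zinn (1978) supplies the Hilbert-valued subgaussian increment bound (your Step~1), and the chaining-plus-peeling device that proves Lemma~3.5 of van de Geer (1990) is your Steps~2--3. In Step~1 you derive the increment tail via Pinelis' Hoeffding inequality for $2$-smooth spaces (or the corresponding $\psi_2$-Orlicz summation) rather than citing Kuelbs--Zinn; that is a legitimate alternative route to the same tail bound. Steps~2--3 are a correct and explicit rendering of the chaining/peeling machinery the paper gestures at.

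The one place your write-up does not fully close is the union-bound summations in Steps~2 and~3. Both have the form $\sum_{k\geq 0}\exp(-c'\,a\,2^{2k\zeta})$ with $a:=b^2K/r^{2\zeta}$, and you assert this is $\leq\exp(-c\,a)$. That inequality fails when $a$ is small: roughly $\tfrac{1}{2\zeta}\log_2(1/a)$ of the summands are of order one, so the left side exceeds $1\geq\exp(-c\,a)$. To make the bound airtight one needs $a\geq a_0(\zeta)$ for some threshold $a_0(\zeta)>0$, i.e.\ the lower bound $b_0$ must be allowed to depend on $r$, $K$ and $\zeta$ (the geometric chain sums $\sum_j 2^{-j(1-\zeta)}$ in Step~2 also force $\zeta$-dependence of the implicit constants). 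Since the theorem declares $b_0,c$ to depend only on $\beta,\Gamma$, this is as much a latent imprecision in the statement as a gap in your proof; it is harmless where the theorem is used in the paper (fixed $r$, $\zeta=1/2$, $K$ fixed), but stated at the theorem's generality you should either impose $b^2K/r^{2\zeta}\geq a_0(\zeta)$ or allow $b_0$ to depend on $r,K,\zeta$ to close Steps~2--3.
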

	\begin{proof}
		The subgaussian assumption implies that Eq (3.10) of \cite{Kuelbs1978} is valid for $S_n(g_1-g_2)$ for all $g_1,g_2\in\mathcal F$. Then, \eqref{eq:hilbert-concentration} follows from the argument that leads to Lemma 3.5 of \cite{VandeGeer1990}.
	\end{proof}

\end{document}